\DeclareMathOperator*{\argmax}{arg\,max}
\newcommand{\E}{\operatorname{\mathbb E}}
\newcommand{\innermid}{\;\middle\lvert\;}
\newtheorem{lemma}{Lemma}
\newtheorem{remark}{Remark}
\newtheorem{theorem}{Theorem}
\newtheorem{definition}{Definition}
\newtheorem{assumption}{Assumption}
\newcommand{\Erdos}{Erd\H{o}s}
\newcommand{\Renyi}{R\'enyi}
\title{Learning Mean Field Games on Sparse Graphs: A Hybrid Graphex Approach}
\author{Christian Fabian, Kai Cui \& Heinz Koeppl 
\\
Dept. of Electrical Engineering and Information Technology, 
Technische Universität Darmstadt \\
\texttt{\{christian.fabian, heinz.koeppl\}@tu-darmstadt.de} \\
}
\begin{document}

\maketitle

\begin{abstract}
Learning the behavior of large agent populations is an important task for numerous research areas. Although the field of multi-agent reinforcement learning (MARL) has made significant progress towards solving these systems, solutions for many agents often remain computationally infeasible and lack theoretical guarantees. Mean Field Games (MFGs) address both of these issues and can be extended to Graphon MFGs (GMFGs) to include network structures between agents. Despite their merits, the real world applicability of GMFGs is limited by the fact that graphons only capture dense graphs. Since most empirically observed networks show some degree of sparsity, such as power law graphs, the GMFG framework is insufficient for capturing these network topologies. Thus, we introduce the novel concept of Graphex MFGs (GXMFGs) which builds on the graph theoretical concept of graphexes. Graphexes are the limiting objects to sparse graph sequences that also have other desirable features such as the small world property. Learning equilibria in these games is challenging due to the rich and sparse structure of the underlying graphs. To tackle these challenges, we design a new learning algorithm tailored to the GXMFG setup. This hybrid graphex learning approach leverages that the system mainly consists of a highly connected core and a sparse periphery. After defining the system and providing a theoretical analysis, we state our learning approach and demonstrate its learning capabilities on both synthetic graphs and real-world networks. This comparison shows that our GXMFG learning algorithm successfully extends MFGs to a highly relevant class of hard, realistic learning problems that are not accurately addressed by current MARL and MFG methods.
\end{abstract}

\section{Introduction}

In various research fields, scientists are confronted with situations where many agents or particles interact with each other. Examples range from energy optimization \citep{gonzalez2018multi} or financial loan markets \citep{corbae2021capital} to the dynamics of public opinions \citep{you2022public}. To predict the behavior in these setups of great practical interest, one can employ numerous algorithms from the field of multi-agent reinforcement learning (MARL), see \citet{zhang2021multi} or \citet{gronauer2022multi} for overviews. Although there is a rapidly growing interest in MARL methods, many approaches are hardly scalable in the number of agents and often lack a rigorous theoretical foundation. One way to address these issues is the concept of mean field games (MFGs) introduced by \citet{huang2006large} and \citet{lasry2007mean}. The basic assumption of MFGs and their corresponding learning methods is that there is a large number of indistinguishable agents where each individual has an almost negligible influence on the entire system. Then, it is possible to focus on a representative agent and infer the group behavior from the representative individual. Approaches to learning MFGs include entropy regularization \citep{cui2021approximately, guo2022entropy, anahtarci2023q}, fictitious play \citep{cardaliaguet2017learning, perrin2020fictitious, elie2020convergence, xie2021learning, min2021signatured}, normalizing flows \citep{ijcai2021p50}, and  online mirror descent (OMD) \citep{10.5555/3535850.3535965, lauriere2022scalable}, see \citet{lauriere2022learning} for a survey.

Despite their benefits, standard MFGs also have some strong limitations. Especially, basic MFGs assume that each agent interacts with all of the other agents. For real-world applications such as financial markets, pandemics, or social networks it appears to be more realistic that agents are connected by some sort of network and only interact with their network neighbors. To incorporate such network structures into MFGs, Graphon MFGs (GMFGs) have been proposed and learned in the literature \citep{caines2019graphon, gao2021lqg, cui2021learning, vasal2021sequential}. The graph theoretical concept of graphons gives a limiting object for growing sequences of random dense graphs, see \citet{lovasz2012large}. However, their applicability to real-world networks is limited by the rather strong assumptions on the networks necessary to use graphon theory.

GMFG algorithms can only learn equilibria for dense graph sequences where each agent is connected to an infinite number of neighbors in the limit. This excludes crucial sparse networks such as power laws graphs observed in many real world networks \citep{newman2018networks}. Even somewhat sparse extensions of GMFGs, namely LPGMFGs \citep{fabian2023learning}, can only model networks with low power law coefficients between zero and one where the number of neighbors per agent still diverges to infinity. Besides the degree distribution, researchers are frequently interested in other important graph features such as the so-called small world property. This property states that if persons A and B are friends and B and C are friends, A and C are likely to be friends as well (\textit{the friend of my friend is my friend}) and is a characteristic for many real world networks \citep{watts1999networks, amaral2000classes}.

To obtain an expressive model and corresponding learning algorithm, we employ a different, both more flexible and more involved concept, called a graphex \citep{caron2017sparse, veitch2015class, borgs2018sparse} which generalizes the graphon idea and builds on the representation theorem by \citet{kallenberg1990exchangeable}. While we point to later sections for details, a graphex intuitively enables the modelling of graph sequences with features such as the small world property and sparse graph sequences where almost all nodes have a finite degree in the limit. Furthermore, their flexibility enables the combination of different graph features, e.g., having a block structure with a power law degree distribution, see \citet{caron2022sparsity}. Building on graphexes, we propose graphex mean field games (GXMFGs) which bring the above mentioned network features into an MFG setup and thereby closer to reality. This expressiveness adds substantial new challenges for the design of a learning algorithm as well as the respective theory which we address with a novel hybrid graphex approach. 

The aim of this paper is to design a learning algorithm for this challenging and realistic setup and provide a theoretical analysis of the model. Thus, we propose the hybrid graphex learning approach, which consists of two subsystems, a highly connected core and a sparsely connected periphery. This allows us to depict the large fraction of finite degree agents, who would be insufficiently modelled by standard MFG techniques. Our method contains a novel learning scheme specifically designed for the hybrid character of the system. Finally, we demonstrate the predictive power of our learning algorithm on both synthetic and real world networks. These empirical observations show that GXMFGs can formalize agent interactions in many complex real world networks and provide a learning mechanism to determine equilibria in these systems. Our contributions can be summarized as follows:
\begin{enumerate}
    \item We define the novel concept of graphex mean field games to extend MFGs to an important class of problems;
    \item We provide theoretical guarantees to show that GXMFGs are an increasingly accurate approximation of the finite system;
    \item We develop a learning algorithm tailored to the challenging class of GXMFGs, where we exploit the hybrid structure caused by the sparse nature of the underlying graphs;
    \item We demonstrate the accuracy of our GXMFG approximation on different examples on both synthetic and empirical networks.
\end{enumerate}

\section{The Graphex Concept}

This section provides a brief introduction to graphex theory. It is based on the existing literature to which we refer to for more details, e.g. \citet{caron2022sparsity} and \citet{janson2022convergence}. Intuitively, the crucial benefit of graphexes is that they can capture the structure of many real world networks where both graphons and Lp graphons provide insufficient results, see Figure \ref{fig:network-compare} for an illustrative example. In our context, a graphex $W: [0, \infty)^2 \to [0,1]$ is a symmetric, measurable function with $0 < \int_{\mathbb{R}_+^2} W (\alpha, \beta) \, \mathrm d \alpha \mathrm d \beta < \infty$ and $\int_{\mathbb{R}_+} W(\alpha, \alpha) \, \mathrm d \alpha < \infty$. For technical reasons, we also assume that $\lim_{\alpha \to \infty} W(\alpha, \alpha)$ and $\lim_{\alpha \to 0} W(\alpha, \alpha)$ both exist. For convenience, we often write
\begin{align*}
\xi_W (\alpha) \coloneqq \int_{\mathbb{R}_+} W(\alpha, \beta) \, \mathrm d \beta \quad \textrm{ and } \quad \bar{\xi}_W \coloneqq \int_{\mathbb{R}_+^2} W(\alpha, \beta) \, \mathrm d \alpha \mathrm d \beta \, .
\end{align*}

\begin{figure}
    \centering
    \includegraphics[width=0.99\linewidth]{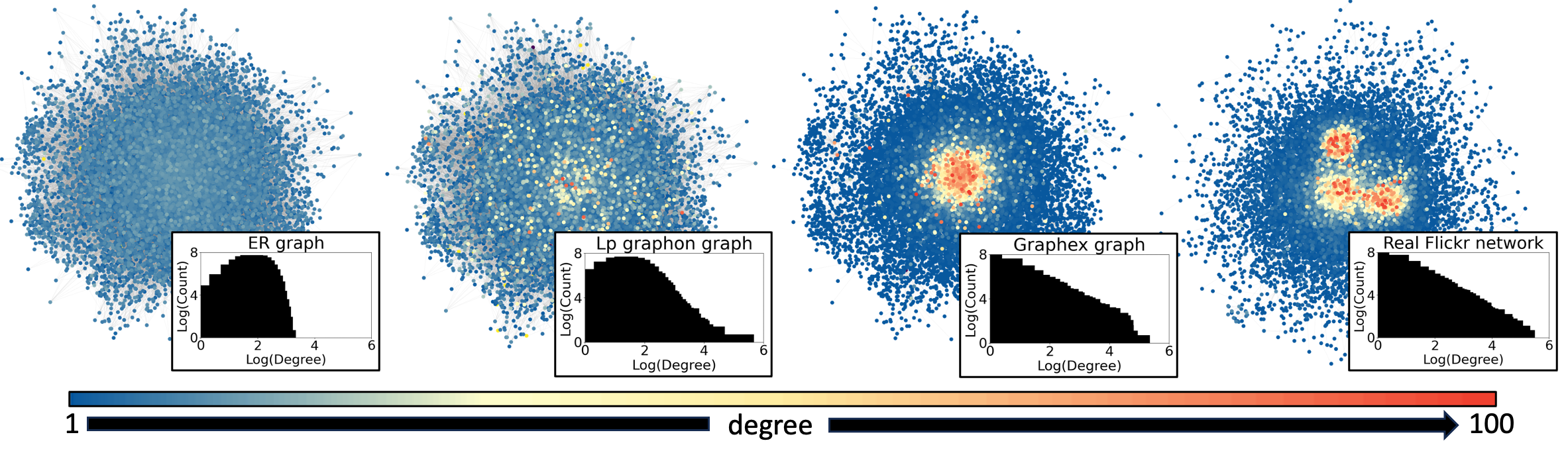}
    \caption{Four networks, each with around 15k nodes and 50k edges. The \Erdos-\Renyi\ graph (left) generated by a standard graphon does not have any high degree nodes. The Lp graphon graph (middle left) has a few high degree nodes expressed by the flat degree distribution tail. The graphex (middle right) and real subsampled Flickr network (right) (data from \cite{mislove2007measurement, kunegis2013konect}) show a very similar degree distribution with rather broad tails and a core-periphery like structure.}
    \label{fig:network-compare}
\end{figure}

Now, we can associate each finite graph $G$ with a graphex $W^G$. Therefore, assume that $G$ is some finite graph with vertex set $V(G)$ and edge set $E(G)$ and that without loss of generality the $N$ vertices are ordered in some way $v_1, \ldots, v_N$. One can associate $G$ with a graphex $W^G$ as follows. First, partition the unit interval into $N$ subintervals $I_1, \ldots, I_N$ of equal length with $I_j \coloneqq \big(\frac{j-1}{N}, \frac{j}{N} \big]$ for all $1 \leq j \leq N$. Then, $W^G (\alpha, \beta) = 1$ for $\alpha \in I_j$ and $\beta \in I_k$ if and only if $(v_j, v_k) \in E(G)$ and $W^G (\alpha, \beta) =0$ otherwise.
To obtain expressive associated graphexes for sparse graphs, we stretch them by defining the stretched canonical graphon  as $W^{G, s} (\alpha, \beta) \coloneqq W^G \left(\Vert W^G \Vert_1^{1/2} \alpha, \Vert W^G \Vert_1^{1/2} \beta \right)$ if $ 0 \leq \alpha, \beta \leq \Vert W^G \Vert_1^{-1/2}$ with $\Vert W^G \Vert_1 = 2 E(G) / N^2$ and zero otherwise.

Conversely, to sample almost surely finite graphs from a graphex, we start with a unit-rate Poisson process $(\theta_i, \vartheta_i)_{i \in \mathbb{N}}$ on $\mathbb{R}_+^2$ where each tuple $(\theta_i, \vartheta_i)$ is a potential node for the sampled graph. Then, any pair of candidate nodes $(\theta_i, \vartheta_i)$, $(\theta_j, \vartheta_j)$ is connected by an edge independent of all other edges with probability $W (\vartheta_i, \vartheta_j)$. To obtain an almost surely finite sampled graph, we stop the process at an arbitrary but fixed time $\nu > 0$ by only keeping candidate nodes with $\theta_i \leq \nu$ and edges $(v_i, v_j)$ with $\theta_i, \theta_j \leq \nu$. In the remaining graph, we discard all isolated vertices with degree $0$ which yields an almost surely finite graph $G_\nu = (V_\nu, E_\nu)$ \citep[Theorem 4.9]{veitch2015class}.
A random sequence of such sampled graphs $(G_\nu)_{\nu>0}$ converges almost surely to the generating graphex in the stretched cut metric \citep[Theorem 28]{borgs2018sparse}. This motivates our first assumption.

\begin{assumption}\label{as:graphex_conv}
    \begin{enumerate}[label=\alph*)]
    \item The sequence of stretched empirical graphexes $(\widehat{W}_\nu)_{\nu>0}$ converges to the limiting graphex $W$ in the cut norm, i.e.
    \begin{align*}
        \Vert W - \widehat{W}_\nu \Vert_{\square} \coloneqq \sup_{U,V} \left\vert \int_{U \times V} W (\alpha, \beta) - \widehat{W}_\nu (\alpha, \beta) \, \mathrm d \alpha \mathrm d \beta \right\vert \to 0 \quad \textrm{as} \quad \nu \to \infty
    \end{align*}
    where the supremum is over measurable subsets $U,V \subset \mathbb{R}_+$. Also, assume that
    
    \item $\xi_W$ is non-increasing with $\xi^{-1}_W (\alpha) \coloneqq \inf \{ \beta>0: \xi_W (\beta) \leq \alpha \} = (1 + o(1)) \ell (1/\alpha) \alpha^{- \sigma}$ as $\alpha \to 0$ with $\sigma \in [0,1]$ and $\ell$ a slowly varying function ($\forall c >0: \lim_{x \to \infty} \ell (cx)/ \ell (x) =1$);
    \item and there exist $C, a >0$ and $\alpha_0 \geq 0$ with $\xi_W(\alpha_0)>0$ such that $\int_0^\infty W (\alpha, z) W (\beta, z) \mathrm d z \leq C \xi_W (\alpha)^a \xi_W (\beta)^a$ for all $\alpha, \beta \geq \alpha_0$ with $a> \max \{ \frac 1 2, \sigma \}$ if $\sigma \in [0,1)$ and $a=1$ if $\sigma =1$.
\end{enumerate}
\end{assumption}
Parts b) and c) of Assumption \ref{as:graphex_conv} are standard, see \citet{caron2022sparsity} for details.
Assumption \ref{as:graphex_conv} b) focuses on the behavior of $\xi_W$ at infinity and states for $\sigma \in (0,1)$ that $\xi_W (\alpha)$ roughly follows a power function $\alpha^{- \sigma}$ for large $\alpha$. Assumption \ref{as:graphex_conv} c) is of technical nature and especially holds for the rich class of separable graphexes which satisfy $W (\alpha, \beta) = \xi_W (\alpha) \xi_W (\beta) / \bar{\xi}_W $. Combining these insights, Assumption \ref{as:graphex_conv} is especially fulfilled by the separable power-law graphex used in our learning algorithm. We choose the separable power-law graphex for its conceptual simplicity and ability to capture the topology of many real world networks \citep{naulet2021bootstrap} and leave the use of different graphexes to future work. Graphexes are the limit of very different sparse graph sequences than those depicted by Lp graphons \citep{borgs2018p, borgs2019L}, see \citet[Proposition 20]{borgs2018sparse}.

\section{Graphex Mean Field Games} \label{sec:GXMFG}
In this section we introduce the finite and limiting game. The proofs corresponding to the theoretical results can be found in the appendix. Throughout the paper, we assume the state space $\mathcal{X}$ and action space $\mathcal{U}$ to be finite and work with a discrete, finite time horizon $\mathcal{T} \coloneqq \{ 0, \ldots, T -1 \}$. Let $\mathcal{P} (A)$ be the set of probability distributions on an arbitrary finite set $A$. We start with the finite game.

\subsection{The Finite Game}

Consider a finite set $V_{\nu}$ of agents who are connected by a graph $G_\nu = (V_\nu, E_\nu)$ sampled from a graphex by stopping at $\nu$.
For an individual $i \in V_{\nu}$ define the neighborhood state distribution by
\begin{align*}
    \mathbb{G}_{i, t}^\nu \coloneqq  \frac{1}{\deg_{V_\nu} (i)} \sum_{j \in V_\nu} \boldsymbol{1}_{\left\{i j \in E_\nu \right\}} \delta_{X_t^j}.
\end{align*}
Each agent $i$ chooses a policy $\pi_i \in \Pi \coloneqq \mathcal{P} (\mathcal{U})^{\mathcal{T} \times \mathcal{X}}$ to competitively maximize the objective
\begin{align*}
    J_i^\nu (\pi_1, \ldots, \pi_{\vert V_\nu \vert}) &= \E \left[ \sum_{t \in \mathcal{T}} r\left( X_{i, t}, U_{i, t}, \mathbb{G}_{i, t}^{\nu} \right) \right]
\end{align*}
where $r: \mathcal{X} \times \mathcal{U} \times \mathcal{P} (\mathcal{X}) \to \mathbb{R}$ is some arbitrary reward function. The dynamics of the finite system are for all $i \in V_\nu$ given by the initialization $X_{i, 0} \sim \mu_0$ and for all $t \in \mathcal{T}$ by
\begin{align*}
    U_{i,t} \sim \pi_{i, t}\left( \cdot \mid X_{i,t}\right) \quad \textrm{ and } \quad 
    X_{i, t+1} \sim P \left( \cdot \mid X_{i, t}, U_{i,t}, \mathbb{G}_{i, t}^{\nu}\right).
\end{align*}
Here, $P: \mathcal{X} \times \mathcal{U} \times \mathcal{P}(\mathcal{X}) \to \mathcal{P}(\mathcal{X})$ is an arbitrary transition kernel that formalizes the transition probabilities from the current state to the next one given the action and neighborhood.
To complete the model, we adopt a suitable equilibrium concept from the literature \citep{carmona2004nash, elie2020convergence} that allows for the occurrence of small local deviations from the limiting graphex structure.
\begin{definition}
    An $(\varepsilon, p)$-Markov-Nash equilibrium (MNE) with $\varepsilon, p > 0$ is a tuple of policies $(\pi_1, \ldots, \pi_{\vert V_\nu \vert}) \in \Pi^{\vert V_\nu \vert}$ such that there exists some set $V'_\nu \subset V_\nu$ with $\vert V_\nu' \vert \geq (1-p) \vert V_\nu \vert$ and
    \begin{align*}
      J_i^\nu (\pi_1, \ldots, \pi_{\vert V_\nu \vert}) \geq \sup_{\bar{\pi}_i \in \Pi} J_i^\nu (\pi_1, \ldots, \bar{\pi}_i, \ldots, \pi_{\vert V_\nu \vert}) - \varepsilon \quad \textrm{ for all} \quad i \in V_\nu' \, .
    \end{align*}
\end{definition}

\subsection{The Limiting GXMFG System}

Due to the underlying graphex structure, our limiting system with infinitely many agents differs significantly from existing MFG models which are based on graphons, for example. It consists of two subsystems we call the high degree core and the low degree periphery which both have very different characteristics. While the core consists of relatively few agents with a high number of connections between themselves, low degree individuals in the periphery almost exclusively connect to agents in the core. Analyzing this novel hybrid system is challenging and allows us to develop a new learning algorithm for approximating equilibria in these sparse and often rather realistic systems, see e.g. Figure \ref{fig:network-compare}. To obtain meaningful theoretical results, we make a standard Lipschitz assumption.

\begin{assumption}\label{as:lipschitz}
$P, W$, and $r$ are Lipschitz.
\end{assumption}

\paragraph{The high degree core.} Nodes are part of the core if their latent parameter $\alpha$ fulfills $0 \leq \alpha \leq \alpha^*$. Here, $0<\alpha^* < \infty$ is an arbitrary but fixed cutoff parameter that marks the border of the core. With an increasing stopping time $\nu$ the expected degrees in the core also increase and become infinite in the limit. Therefore, highly connected core agents are characterized by their low parameters $\alpha \leq \alpha^*$. The neighborhood distribution $\mathbb{G}_{\alpha, t}^{\infty} \in \mathcal{P} (\mathcal{X})$ for every $0 \leq \alpha \leq \alpha^*$ and $t \in \mathcal{T}$ is defined by
\begin{align*}
    \mathbb{G}_{\alpha, t}^{\infty} (\boldsymbol{\mu}) \coloneqq \frac{1}{\xi_{W, \alpha^*} (\alpha)} \int_{0}^{\alpha^*} W (\alpha, \beta) \mu_{\beta, t} \, \mathrm d \beta
\end{align*}
with $\xi_{W, \alpha^*} (\alpha) \coloneqq \int_{0}^{\alpha^*} W(\alpha, \beta) \mathrm d \beta$. For notational convenience, we often drop the dependence on $\boldsymbol \mu$ and just write $\mathbb{G}_{\alpha, t}^{\infty}$ when $\boldsymbol \mu$ is clear from the context. Then, for all core agents $\alpha \in [0, \alpha^*]$ the model dynamics are
\begin{align*}
    U_{\alpha,t} \sim \pi_{\alpha, t}^\infty \left( \cdot \mid X_{\alpha,t}\right) \quad \textrm{ and } \quad 
    X_{\alpha, t+1} \sim P \left( \cdot \mid X_{\alpha, t}, U_{\alpha,t}, \mathbb{G}_{\alpha, t}^{\infty} \right)
\end{align*}
for all $t \in \mathcal{T}$ and initial $X_{\alpha, 0} \sim \mu_0$. Each agent $\alpha$ chooses a policy $\pi_\alpha$ to maximize $J_\alpha^{\boldsymbol{\mu}} \coloneqq \E \left[ \sum_{t \in \mathcal{T}} r\left( X_{\alpha, t}, U_{\alpha, t},\mathbb{G}_{\alpha, t}^{\infty} \right) \right]$. The corresponding core MF forward equation is given by
\begin{align}
    \mu_{\alpha, t+1}^{\infty} \coloneqq \mu_{\alpha, t}^{\infty} P_{t, \boldsymbol{\mu}', W}^{\pi, \infty} 
    \coloneqq \sum_{x \in \mathcal{X}} \mu^{\infty}_{\alpha, t} (x) \sum_{u \in \mathcal{U}} \pi_{\alpha, t}^\infty \left(u \mid x \right) \cdot P \left(\cdot \, \mid x, u, \mathbb{G}_{\alpha, t}^{\infty} (\boldsymbol{\mu}') \right) \label{eq:mf_core_forw_update}
\end{align}
with $\mu_{\alpha, 0} = \mu_0$. Furthermore, define the space of measurable core mean field ensembles $\boldsymbol{\mathcal{M}}^\infty \coloneqq \mathcal{P}(\mathcal{X})^{\mathcal
T \times [0, \alpha^*]}$ with $\alpha \mapsto \mu_{\alpha, t}$ being measurable for all $(\boldsymbol{\mu}, t, x) \in \boldsymbol{\mathcal{M}}^\infty \times \mathcal{T} \times \mathcal{X}$. Analogously, the space $\boldsymbol{\Pi}^\infty \subseteq \Pi^{[0, \alpha^*]}$ contains all measurable policy ensembles with $\alpha \mapsto \pi_{\alpha, t} (u \vert x)$ being measurable for all $(\boldsymbol{\pi}, t, x, u) \in \boldsymbol{\Pi}^\infty \times \mathcal{T} \times \mathcal{X} \times \mathcal{U}$.
We write $\boldsymbol{\mu} = \Psi (\boldsymbol{\pi})$ for a tuple $(\boldsymbol{\mu}, \boldsymbol{\pi}) \in \boldsymbol{\mathcal{M}}^\infty \times \boldsymbol{\Pi}^\infty$ if $\boldsymbol{\mu}$ is generated by $\boldsymbol{\pi}$ according to the above MF forward equation. Conversely, define $\Phi (\boldsymbol{\mu})$ as the set of optimal policy ensembles $\boldsymbol{\pi}$ with $\pi_\alpha \in \argmax_{\pi \in \Pi} J_\alpha^{\boldsymbol{\mu}} (\pi)$ for every $\alpha \in [0, \alpha^*]$ under the given MF $\boldsymbol{\mu}$. For the limiting system we define a mean field core equilibrium (MFCE).
\begin{definition}
    A MFCE is a tuple $(\boldsymbol{\mu}, \boldsymbol{\pi}) \in \boldsymbol{\mathcal{M}}^\infty \times \boldsymbol{\Pi}^\infty$ with $\boldsymbol{\mu} = \Psi (\boldsymbol{\pi})$ and $\boldsymbol{\pi} \in \Phi (\boldsymbol{\mu})$.
\end{definition}
Under a standard Lipschitz assumption, a mean field core equilibrium is guaranteed to exist.
\begin{lemma}\label{lem:core_eq_existence}
Under Assumption \ref{as:lipschitz} there exists a mean field core equilibrium $(\boldsymbol{\mu}, \boldsymbol{\pi}) \in \boldsymbol{\mathcal{M}}^\infty \times \boldsymbol{\Pi}^\infty$.
\end{lemma}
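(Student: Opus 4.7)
The plan is to recast the MFCE condition as a fixed point of the correspondence $\boldsymbol{\pi} \mapsto \Phi(\Psi(\boldsymbol{\pi}))$ on the policy ensemble space $\boldsymbol{\Pi}^\infty$ and invoke the Kakutani-Fan-Glicksberg theorem; any fixed point $\boldsymbol{\pi}^*$ paired with $\boldsymbol{\mu}^* = \Psi(\boldsymbol{\pi}^*)$ is an MFCE by definition. To obtain the required compactness, I would embed $\boldsymbol{\Pi}^\infty$ and $\boldsymbol{\mathcal{M}}^\infty$ into $L^\infty([0,\alpha^*], \mathbb{R}^{T|\mathcal{X}||\mathcal{U}|})$ and $L^\infty([0,\alpha^*], \mathbb{R}^{T|\mathcal{X}|})$ respectively, equipped with the weak-$*$ topology. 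Since each $\pi_{\alpha,t}(\cdot|x)$ and $\mu_{\alpha,t}$ take values in the finite-dimensional compact simplices $\mathcal{P}(\mathcal{U})$ and $\mathcal{P}(\mathcal{X})$, both ensemble spaces are convex and weak-$*$ compact by Banach-Alaoglu.

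Next I would establish the structural hypotheses of Kakutani-Fan-Glicksberg. For $\Psi$: the neighborhood functional $\boldsymbol{\mu} \mapsto \mathbb{G}_{\alpha,t}^\infty(\boldsymbol{\mu})$ is linear in $\boldsymbol{\mu}$ with Lipschitz constant uniform in $\alpha \in [0,\alpha^*]$, using boundedness of $W$ (Assumption \ref{as:lipschitz}) and strict positivity of $\xi_{W,\alpha^*}$ on the core; combined with the Lipschitz property of $P$, a straightforward induction over $t \in \mathcal{T}$ yields joint continuity of the forward dynamics in \eqref{eq:mf_core_forw_update} and hence continuity of $\Psi$. For $\Phi$: fixing $\boldsymbol{\mu}$, each agent $\alpha$ faces a standard finite-horizon MDP, whose Bellman backward induction produces optimal $Q$-functions $Q^*_{\alpha,t}(x,u)$, with the set of Markov-optimal policies being $\{\pi_\alpha : \mathrm{supp}\, \pi_{\alpha,t}(\cdot|x) \subseteq \argmax_u Q^*_{\alpha,t}(x,u)\}$, which is always non-empty and convex. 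A measurable-in-$\alpha$ selection exists by the Kuratowski-Ryll-Nardzewski selection theorem, so $\Phi(\boldsymbol{\mu}) \subset \boldsymbol{\Pi}^\infty$ is non-empty and convex; Berge's maximum theorem applied pointwise in $\alpha$ together with continuity of $\boldsymbol{\mu} \mapsto Q^*_{\alpha,t}$ gives upper hemicontinuity of $\Phi$.

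The main obstacle is controlling the continuum of agents parametrized by $\alpha \in [0,\alpha^*]$ uniformly: the upper hemicontinuity of $\Phi$ must hold jointly in $\alpha$ and $\boldsymbol{\mu}$ so that the graph of $\Phi \circ \Psi$ is closed in the weak-$*$ product topology on $\boldsymbol{\Pi}^\infty \times \boldsymbol{\Pi}^\infty$. This requires uniform-in-$\alpha$ perturbation bounds on $Q^*_{\alpha,t}$, which is exactly what the Lipschitz assumption on $W$ over the compact set $[0,\alpha^*]$ provides: it transfers weak-$*$ convergence $\boldsymbol{\mu}^n \to \boldsymbol{\mu}$ into uniform-in-$\alpha$ convergence of the neighborhood distributions $\mathbb{G}_{\alpha,t}^\infty(\boldsymbol{\mu}^n)$ and, via Lipschitzness of $P$ and $r$, of the optimal $Q$-functions. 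Measurability subtleties and non-uniqueness of best responses in degenerate cases are handled by the convex-valued $\argmax$ selection. Once these uniform estimates are in place, Kakutani-Fan-Glicksberg delivers the fixed point, and existence of a mean field core equilibrium follows.
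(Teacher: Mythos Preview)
Your proposal is correct, but the paper takes a much shorter route. The paper simply observes that restricting the graphex $W$ to $[0,\alpha^*]^2$ yields (after rescaling the domain to $[0,1]^2$) a standard graphon, so that the limiting core system is literally a graphon mean field game in the sense of \citet{cui2021learning}; existence then follows directly by citing their Theorem~1 (together with \citet[Section 3.1 and Theorem 5.6]{veitch2015class} for the graphex-to-graphon identification). No fixed-point argument is carried out in the paper itself.

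What you have written is essentially a reconstruction of the proof that the cited reference supplies: \citet{cui2021learning} establish graphon MFG existence via exactly this Kakutani--Fan--Glicksberg argument on weak-$*$ compact policy ensembles, with $\Psi$ continuous and $\Phi$ upper hemicontinuous by Berge. Your version has the merit of being self-contained and of making the topological and selection-theoretic hypotheses explicit; the paper's version is a two-line reduction but requires the reader to accept that the core truncation produces a bona fide graphon MFG. One small point worth flagging in either approach: you assume strict positivity of $\xi_{W,\alpha^*}$ on $[0,\alpha^*]$ without proof, and the paper also leaves this implicit---it is needed for $\mathbb G_{\alpha,t}^\infty$ to be well-defined and would typically follow from non-degeneracy of $W$ on the core, but neither treatment spells it out.
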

For measurable, bounded functions $f \colon \mathcal X \times \mathcal I \to \mathbb R$ we define a mean field core operator by
\begin{align*}
    \boldsymbol{\mu}_{t}^{\infty} (f, \alpha^*) \coloneqq 
    \frac{1}{\alpha^*} \int_0^{\alpha^*} \sum_{x \in \mathcal{X}} f (x, \alpha) \mu_{\alpha, t} (x) \, \mathrm d \alpha \, .
\end{align*}

\paragraph{The low degree periphery.} Let $\boldsymbol{\mathcal{G}}^k \coloneqq \{ G \in \mathcal{P}(\mathcal{X})^k:  k \cdot G \in \mathbb{N}_0^k \}$ be the set of possible neighborhoods for agents with a finite number $k \in \mathbb N$ of neighbors. For agents in the periphery with latent parameter $\alpha$ and finite degree $k$  the model dynamics are
\begin{align*}
    \mathbb{G}_{\alpha, t}^{k} \sim P_{\boldsymbol{\pi}} \left( \mathbb{G}_{\alpha, t}^{k} = \cdot \innermid X_{\alpha,t} \right), \quad
    U_{\alpha,t} \sim \pi_{\alpha, t}^k \left( \cdot \mid X_{\alpha,t}\right), \quad 
    X_{\alpha, t+1} \sim P \left( \cdot \mid X_{\alpha, t}, U_{\alpha,t}, \mathbb{G}_{\alpha, t}^{k} \right)
\end{align*}
for all $t \in \mathcal{T}$ and $X_{\alpha, 0} \sim \mu_0$. The probability distribution $P_{\boldsymbol{\pi}} \left( \mathbb{G}_{\alpha, t}^{k} = \cdot \innermid X_{\alpha,t} \right)$ can be calculated analytically, see Appendix \ref{app:neigh_prob_calc} for details. A key difference to the core dynamics with deterministic neighborhoods in the limit is that in the periphery, the neighborhoods remain stochastic. This is caused by the finite number of neighbors of a periphery agent, which is finite even in the limit. An extension to degree specific transition kernels ($P^k$ for each $k$) is straightforward but neglected for expositional simplicity. Each agent $\alpha$ chooses a policy $\pi_\alpha$ to maximize $J_\alpha^{\boldsymbol{\mu}} (\pi_\alpha) \coloneqq \E \left[ \sum_{t \in \mathcal{T}} r\left( X_{\alpha, t}, U_{\alpha, t},\mathbb{G}_{\alpha, t}^{k} \right) \right]$. The corresponding MF evolves according to
\begin{align*}
    \mu_{\alpha, t+1}^{k} &\coloneqq \mu_{\alpha, t}^{k} P_{t, \boldsymbol{\mu}', W}^{\pi, k}
    \coloneqq \sum_{x \in \mathcal{X}} \mu^{k}_{\alpha, t} (x) \sum_{G \in \boldsymbol{\mathcal{G}}^k} P_{\boldsymbol{\pi}} \left( \mathbb{G}_{\alpha, t}^{k} \left( \boldsymbol{\mu}'_{t} \right) = G \innermid x_{\alpha, t} = x \right)  \\
    &\qquad \qquad \qquad\qquad\qquad\qquad\qquad\qquad\qquad\qquad \cdot  \sum_{u \in \mathcal{U}} \pi_{\alpha, t}^k \left(u \mid x \right) \cdot P \left( \cdot \, \mid x, u, G \right) \, .
\end{align*}
The periphery empirical MF operator for finite degree $k$ and measurable, bounded $f \colon \mathcal X \times \mathcal I \to \mathbb R$ is
\begin{align*}
    \boldsymbol{\hat \mu}_{t}^{\nu, k} (f) \coloneqq 
    \frac{\sqrt{2 \vert E_\nu \vert}}{\vert V_{\nu, k} \vert} \sum_{i \in V_\nu} \boldsymbol{1}_{\{ \deg (v_i) = k \}} \int_{\frac{i-1}{\sqrt{2 \vert E_\nu \vert}}}^{\frac{i}{\sqrt{2 \vert E_\nu \vert}}} \sum_{x \in \mathcal{X}} f (x, \alpha) \hat \mu_{\alpha, t}^\nu (x) \, \mathrm d \alpha
\end{align*}
which intuitively is the average over evaluating the function $f$ over all agents with degree $k$. For the $k$-degree mean field in the limiting system stopped at $\nu$ we analogously define the MF operator
\begin{align*}
    \boldsymbol{\mu}_{t}^k (f, \nu) &\coloneqq
    \frac{\sqrt{\bar{\xi}_W}}{\int_{0}^\infty \textrm{Poi}_{\nu, \alpha}^W (k) \mathrm d \alpha }
    \int_{0}^\infty \textrm{Poi}_{\nu, \alpha}^W (k) \sum_{x \in \mathcal{X}} f (x, \alpha) \mu^k_{\alpha, t} (x) \, \mathrm d \alpha 
\end{align*}
where $\textrm{Poi}_{\nu, \alpha}^W (k)$ is the probability for $k$ in a Poisson distribution with parameter $\nu \xi_W (\alpha)$. Stopping at $\nu$ is necessary to ensure the existence of the integral via the factor $\textrm{Poi}_{\nu, \alpha}^W (k)$. As shown in the next subsection, these two operators are closely related for sufficiently large $\alpha^*$ and $\nu$. For our learning algorithm, we choose $k_{max} < \infty$ as the maximal degree contained in the periphery.

\begin{remark}
    The union of the core and periphery does not contain some intermediate degree nodes. These intermediate nodes are negligible for the limiting system and our learning algorithm for two reasons. Due to the integrability of the graphex $W$, intermediate nodes have a vanishing influence on the neighborhoods of all other agents for sufficiently large $\alpha^*$. Similarly, as the maximal degree $k_{\max}$ for nodes contained in the periphery increases, the fraction of intermediate nodes becomes arbitrarily small \cite[Corollary 5]{caron2022sparsity} and makes them negligible for the MF of the whole system.
\end{remark}

\subsection{GXMFG Approximation for Finite Systems}
To compare the finite and limiting GXMFG system, we give some definitions that connect both systems. For an empirical graph $G_\nu$ sampled up to time $\nu$, the corresponding empirical MF is $\hat{\mu}_{\alpha, t}^{\nu} \coloneqq \sum_{i \in V_\nu} \boldsymbol 1_{\alpha \in (\frac{i-1}{ \sqrt{2 \left\vert E_\nu \right\vert}}, \frac{i}{\sqrt{2\left\vert E_\nu \right\vert}}]} \delta_{X^i_t}$ and the empirical policies are $\hat{\pi}_{\alpha, t}^{\nu} \coloneqq \sum_{i \in V_\nu} \boldsymbol 1_{\alpha \in (\frac{i-1}{ \sqrt{2 \left\vert E_\nu \right\vert}}, \frac{i}{\sqrt{2\left\vert E_\nu \right\vert}}]}  \pi^i_t$.
Furthermore, define the map $\Gamma_\nu (\boldsymbol{\pi}) \coloneqq \left( \pi_1, \ldots, \pi_{\vert V_\nu \vert} \right) \in \Pi^{\vert V_\nu \vert}$ with $\pi_i = \pi^{\deg (v_i)}_{\alpha (i)}$ if $\deg (v_i) \leq k_{\max}$ and  $\pi_i = \pi^{\infty}_{\alpha (i)}$ otherwise, where $\alpha (i) = i /\sqrt{2 \vert E_\nu \vert}$ for notational simplicity. If one agents deviates by playing policy $\bar{\pi}_i$ instead of $\pi_i$, denote this by  $\Gamma_\nu (\boldsymbol{\pi}, \bar{\pi}_i) \coloneqq \left( \pi_1, \ldots, \bar{\pi}_i, \ldots \pi_{\vert V_\nu \vert} \right)$. First, we provide the MF convergence result for the high degree core.
\begin{theorem}[Core MF convergence]\label{thm:core_mf_convergence}
Let $\boldsymbol \pi \in \boldsymbol \Pi$ be Lipschitz up to a finite number of discontinuities with $\boldsymbol \mu = \Psi(\boldsymbol \pi)$. Under Assumptions \ref{as:graphex_conv}, \ref{as:lipschitz}, and the policy $\Gamma_\nu (\boldsymbol{\pi}, \bar{\pi}_i) \in \Pi^{\vert V_\nu \vert}$ with $\bar{\pi}_i \in \Pi$, $t \in \mathcal T$, for all measurable functions $f \colon \mathcal X \times \mathcal I \to \mathbb R$ uniformly bounded by some $M_f > 0$ and for each $\varepsilon > 0$ there exist some $\nu' (\alpha'), \alpha'> 0$ such that 
\begin{align*}
    \E \left[ \left| \hat{\boldsymbol \mu}^{\nu}_t(f, \alpha^*) - \boldsymbol{\mu}_{t}^{\infty} (f, \alpha^*) \right| \right] \leq \varepsilon
\end{align*}
holds for all $\nu> \nu'(\alpha')$ and $\alpha^* > \alpha'$ uniformly over all possible deviations $\bar \pi_i \in \Pi, i \in V_\nu$.
\end{theorem}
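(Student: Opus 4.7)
I would proceed by induction on $t \in \mathcal T$. Couple the finite system with an auxiliary one in which each core agent's transition uses the limiting neighborhood $\mathbb{G}^\infty_{\alpha(i),t}$ in place of its actual empirical neighborhood $\mathbb{G}^\nu_{i,t}$. For the induction one writes
\[
\E\bigl|\hat{\boldsymbol\mu}^\nu_t(f,\alpha^*) - \boldsymbol{\mu}^\infty_t(f,\alpha^*)\bigr| \le A^\nu_t + B^\nu_t + C^\nu_t,
\]
where $A^\nu_t$ is the concentration error of the averaged functional around its expectation given the Poisson node locations $\theta_i$, $B^\nu_t$ is the error from using the true empirical neighborhoods rather than the limiting ones when propagating the MF one step, and $C^\nu_t$ absorbs the single-agent deviation $\bar\pi_i$.

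The base case $t=0$ reduces to a variance/Chebyshev bound on $A^\nu_0$: conditional on the Poisson process, the states $X^i_0$ are i.i.d.\ $\mu_0$, $f$ is uniformly bounded by $M_f$, and $|V_\nu|\to\infty$ as $\nu\to\infty$. For the inductive step, the bulk of the work is in $B^\nu_t$. For each core index $i$ with $\alpha(i)\le\alpha^*$, I would split $\mathbb{G}^\nu_{i,t}$ into contributions from neighbors inside $[0,\alpha^*]$ and from neighbors in the periphery; by integrability of $W$ together with Assumption \ref{as:graphex_conv} c), the second piece has vanishing average effect once $\alpha^*$ is large. For the first, partition $[0,\alpha^*]$ into a grid whose cells are smaller than the Lipschitz scale of $\boldsymbol\mu$ and $W$ (Assumption \ref{as:lipschitz}), apply Assumption \ref{as:graphex_conv} a) cell-by-cell to approximate empirical bipartite edge counts by integrals of $W$, and invoke the inductive hypothesis to replace empirical per-cell state distributions by the limits $\mu_{\beta,t}^\infty$. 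Normalizing by the empirical degree of $i$, which concentrates around $\sqrt{\bar\xi_W}\cdot \xi_{W,\alpha^*}(\alpha(i))$ for core agents not too close to the boundary, yields a total variation estimate on $\mathbb{G}^\nu_{i,t}-\mathbb{G}^\infty_{\alpha(i),t}$ for all but a vanishing fraction of core agents. Lipschitz continuity of $P$ then propagates this to a state-distribution estimate one step ahead; integration against $f$ closes the induction.

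The deviation term $C^\nu_t$ is controlled by observing that after $T$ steps only nodes within graph distance $T$ of $i$ can possibly differ between the two systems, a set whose cardinality is negligible compared to $|V_\nu|$ uniformly in $\bar\pi_i$ and in $i$, so its contribution to the normalized operator vanishes as $\nu\to\infty$. I expect the neighborhood approximation in $B^\nu_t$ to be the principal obstacle: the cut-norm convergence in Assumption \ref{as:graphex_conv} a) is a weak, regions-averaged statement, and extracting from it a per-agent (or averaged-over-agents) total variation bound requires a careful choice of the partition scale in terms of the Lipschitz constants, and a clean exclusion of low-degree "boundary" core agents for whom the degree normalization $1/\deg_{V_\nu}(i)$ fails to concentrate. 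Once this uniform neighborhood estimate is in place, the Lipschitz propagation and the vanishing-single-deviation bound are standard.
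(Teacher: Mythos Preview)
Your overall strategy---induction on $t$, a law-of-large-numbers concentration piece, a graphex/neighborhood approximation piece, and a separate accounting of the single deviation---matches the paper's in spirit. The paper, however, executes this differently in two respects, and your $C^\nu_t$ argument contains a genuine error.

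First, the paper never attempts a per-agent neighborhood estimate. It works entirely at the level of the integrated operators $\hat{\boldsymbol\mu}^\nu_t(f,\alpha^*)$ and $\boldsymbol\mu^\infty_t(f,\alpha^*)$, telescoping the one-step error into five pieces (replace $\widehat W$ by $W$; replace $\hat\pi$ by $\pi$; replace $\hat\mu$ by $\mu$ in the neighborhood; replace $\hat\mu$ by $\mu$ in the state; plus the conditional-LLN piece). The graphex step is handled not by a grid partition but by the operator bound
\[
\sup_{-1\le g\le 1}\int_{\mathbb R_+}\Bigl|\int_{\mathbb R_+}(W-\widehat W_\nu)(\alpha,\beta)g(\beta)\,\mathrm d\beta\Bigr|\,\mathrm d\alpha \le 4\|W-\widehat W_\nu\|_\square,
\]
which is exactly an $L^1$-in-$\alpha$ statement and feeds directly into the integrated operator without any need to control individual $\mathbb G^\nu_{i,t}$. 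Your plan to extract ``total variation on $\mathbb G^\nu_{i,t}-\mathbb G^\infty_{\alpha(i),t}$ for all but a vanishing fraction of core agents'' is strictly harder: cut-norm convergence gives no pointwise-in-$\alpha$ control, and while one can recover an ``except on a small set'' statement a posteriori via Markov's inequality from the $L^1$ bound above, the grid/cell-by-cell route you describe does not get there (summing cut-norm bounds over cells degrades with the number of cells). The paper's integrated approach sidesteps this entirely.

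Second, your $C^\nu_t$ argument is incorrect as stated. In the dense core, degrees diverge with $\nu$, so the graph-distance-$T$ ball around the deviating agent $i$ is \emph{not} of negligible cardinality---already the radius-$1$ ball has size of order $\nu\,\xi_W(\alpha(i))$, and after a few steps it can cover essentially the whole core. The correct point is not that few nodes are affected but that the deviation's one-step effect on the integrated operator is $O(1/\sqrt{|E_\nu|})$ (agent $i$ occupies a single interval of that length), and the cumulative effect up to time $t$ is already absorbed by the induction hypothesis applied to $\hat{\boldsymbol\mu}^\nu_t$. This is how the paper handles it, with no coupling or ball-growth argument needed.
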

A similar result holds for the low degree periphery and connects the two operators defined previously.
\begin{theorem}[Periphery MF convergence]\label{thm:periphery_mf_convergence}
In the situation as in Theorem \ref{thm:core_mf_convergence}, for each $\varepsilon > 0$ and finite $k \in \mathbb{N}$ there exist $\alpha', \nu' (\alpha')> 0$ such that 
\begin{align*}
    \E \left[ \left|\hat{\boldsymbol \mu}^{\nu, k}_{t}(f) - \boldsymbol \mu_{t}^k(f, \nu) \right| \right] \leq \varepsilon
\end{align*}
holds for all $\alpha^* > \alpha'$ and $\nu \geq \nu'(\alpha')$ uniformly over all possible deviations $\bar \pi_i \in \Pi, i \in V_\nu$.
\end{theorem}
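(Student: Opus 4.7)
The strategy mirrors the core case (Theorem~\ref{thm:core_mf_convergence}) with the additional ingredient that periphery neighbourhoods remain stochastic in the limit, so concentration must come from averaging over the many degree-$k$ agents rather than from each individual neighbourhood. I decompose
\begin{align*}
\E\left\lvert \boldsymbol{\hat\mu}^{\nu,k}_{t}(f) - \boldsymbol{\mu}^k_{t}(f,\nu) \right\rvert \le T_1 + T_2 + T_3,
\end{align*}
where $T_1$ captures the mismatch between the normalisation $\sqrt{2|E_\nu|}/|V_{\nu,k}|$ and $\sqrt{\bar\xi_W}/\int_0^\infty \mathrm{Poi}_{\nu,\alpha}^W(k)\,\mathrm d\alpha$, $T_2$ controls the difference between the empirical state distribution $\hat\mu^\nu_{\alpha,t}$ on degree-$k$ agents and the periphery MF $\mu^k_{\alpha,t}$, and $T_3$ bounds the contribution from latent parameters $\alpha>\alpha'$, using $\bar\xi_W<\infty$ and $\int_0^\infty \mathrm{Poi}^W_{\nu,\alpha}(k)\,\mathrm d\alpha<\infty$ from Assumption~\ref{as:graphex_conv}.

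\paragraph{Normalisation and tails.}
Standard graphex sampling identities give $2|E_\nu|/\nu^2\to\bar\xi_W$ almost surely and $|V_{\nu,k}|/\bigl(\nu\int_0^\infty \mathrm{Poi}^W_{\nu,\alpha}(k)\,\mathrm d\alpha\bigr)\to 1$ in probability, both of which are consequences of Assumption~\ref{as:graphex_conv} combined with Campbell's theorem applied to the underlying unit-rate Poisson process. Together with $|f|\le M_f$, this drives $T_1\to 0$ by a continuous-mapping plus uniform-integrability argument that is oblivious to any one deviator $\bar\pi_i$. Similarly, $T_3$ is forced below a prescribed fraction of $\varepsilon$ by choosing $\alpha'$ large enough that both operators place arbitrarily small mass on $\alpha>\alpha'$, uniformly for $\nu\ge\nu'(\alpha')$.

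\paragraph{Main term via induction on $t$.}
For $T_2$ I would induct on $t\in\mathcal T$. The base case $t=0$ is immediate from $X_{\alpha,0}\sim\mu_0$. For the inductive step, fix an agent $i$ with degree $k$ and stretched parameter $\alpha_i$. Conditional on having degree $k$, the $k$ neighbours of $i$ have latent parameters i.i.d.\ from the density proportional to $W(\alpha_i,\beta)$; since $\xi_W$ is non-increasing (Assumption~\ref{as:graphex_conv}(b)) this mass concentrates on small $\beta$, so for $\alpha^*$ large enough a $(1-\varepsilon)$-fraction of these neighbours lies in the core. Applying Theorem~\ref{thm:core_mf_convergence} to the core neighbours' states shows that the observed $\mathbb G^{\nu}_{i,t}$ is close in total variation to a sample from the limiting periphery law $P_{\boldsymbol\pi}(\mathbb G^k_{\alpha_i,t}=\cdot\innermid X_{\alpha_i,t})$. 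Pushing this through the Lipschitz kernel $P$ and policy $\boldsymbol\pi$ (Assumption~\ref{as:lipschitz}) propagates the induction to $t+1$. Averaging over the $|V_{\nu,k}|$ degree-$k$ agents promotes these per-agent closeness statements to the $L^1$ bound via a Chebyshev-type variance estimate; although distinct periphery agents may share a common core neighbour, the core states are asymptotically deterministic by Theorem~\ref{thm:core_mf_convergence}, so the residual randomness between different periphery neighbourhoods is asymptotically independent.

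\paragraph{Main obstacle.}
The hardest step is the inductive one: unlike in the core, one cannot replace $\hat\mu^\nu_{\alpha,t}$ by a deterministic $\mu^k_{\alpha,t}$ pointwise, because $\mathbb G^k_{\alpha,t}$ remains genuinely random in the limit. The delicate part is controlling correlations between the neighbourhoods of different degree-$k$ agents that share a common core neighbour, and showing that averaging over $|V_{\nu,k}|\to\infty$ agents still yields concentration around the Poisson-weighted mean. Uniformity over the deviation $\bar\pi_i$ is preserved because a single deviator affects at most $k_{\max}$ periphery neighbourhoods and a core-agent set of vanishing mass as $\nu\to\infty$, which the overall error budget absorbs.
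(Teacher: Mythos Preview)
Your overall architecture---induction on $t$, the graphex normalisation identities $2|E_\nu|/\nu^2\to\bar\xi_W$ and $|V_{\nu,k}|/(\nu\int_0^\infty\mathrm{Poi}^W_{\nu,\alpha}(k)\,\mathrm d\alpha)\to 1$, and reliance on Theorem~\ref{thm:core_mf_convergence}---matches the paper. The paper does not isolate a separate tail term $T_3$; it carries $\varepsilon_{\alpha^*}$ corrections throughout and splits the inductive step into \emph{six} telescope terms, each swapping one ingredient at a time (empirical versus limiting mean field, graphex, policy, and the passage from the $\infty$-neighbourhood kernel to the finite-$k$ one).

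The substantive difference is the engine for the inductive step. Your sentence ``Applying Theorem~\ref{thm:core_mf_convergence} to the core neighbours' states shows that the observed $\mathbb G^\nu_{i,t}$ is close in total variation to a sample from the limiting periphery law'' is the right target, but Theorem~\ref{thm:core_mf_convergence} controls only the \emph{integrated} core mean field $\boldsymbol\mu^\infty_t(f,\alpha^*)$, not the law of any fixed $k$-sample of neighbour states. The paper bridges this gap by an explicit recursive computation of the joint law $P_{\boldsymbol\pi,\boldsymbol\mu}(\mathbb G^k_{\alpha,t}=G,\,x_{\alpha,t}=x)$ (Appendix~\ref{app:neigh_prob_calc}), which writes the neighbourhood transition probability as an integral over $[0,\alpha^*]^k$ of products of core marginals $\mu^\infty_{\beta_i,t}$ and core transition probabilities. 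This reduces the convergence of neighbourhood laws to convergence of expressions like $\int_0^{\alpha^*}W(\alpha,\beta)\mu_{\beta,t}(x)\,\mathrm d\beta$, and it is to these that Theorem~\ref{thm:core_mf_convergence} is applied (with test functions $f'_{x',\alpha}(x,\beta)=W(\alpha,\beta)\mathbf 1_{\{x=x'\}}$). The resulting inequalities \eqref{ineq:general_neigh_probs} and \eqref{ineq:general_neigh_probs_joint} are what your sketch is implicitly assuming.

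On your ``main obstacle'': the correlation between periphery agents sharing a core neighbour is not resolved by core states being asymptotically deterministic---individual core states remain random even in the limit. The paper instead handles it by conditioning on $\mathbf X_t$: the first of the six terms uses conditional independence of $\{X^i_{t+1}\}_{i\in V_\nu}$ given $\mathbf X_t$ to get a variance bound of order $|V_{\nu,k}|^{-1}$, and the remaining terms work at the level of laws rather than samples, so correlations never reappear. Uniformity over the deviator $\bar\pi_i$ is handled exactly as you describe.
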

Finally, we establish approximate optimality of the GXMFG policies both in the core and periphery.
\begin{theorem}[Core approximate optimality] \label{thm:core_optimality}
    Consider a MFCE $(\boldsymbol{\mu}, \boldsymbol{\pi})$ under Assumptions \ref{as:graphex_conv} and \ref{as:lipschitz} with $\boldsymbol{\pi}$ Lipschitz up to a finite number of discontinuities. Then, for any $\varepsilon, p > 0$ there exist $\alpha', \nu' (\alpha')> 0$ such that for all $\alpha^* > \alpha'$, $\nu \geq \nu'(\alpha')$ the policy $\Gamma_\nu (\boldsymbol{\pi}) $ is an $(\varepsilon, p)$-MNE for the core.
\end{theorem}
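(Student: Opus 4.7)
The plan is to leverage the MFCE optimality in the limit and the core mean field convergence from Theorem~\ref{thm:core_mf_convergence} to transfer optimality to the finite game. For any core player $i \in V_\nu$ with $\alpha(i) \in [0,\alpha^*]$ and any unilateral deviation $\bar\pi_i \in \Pi$, I would show that
$$\left\vert J_i^\nu(\Gamma_\nu(\boldsymbol{\pi}, \bar\pi_i)) - J_{\alpha(i)}^{\boldsymbol{\mu}}(\bar\pi_i) \right\vert \leq \varepsilon/3$$
for $\nu, \alpha^*$ large enough, uniformly in $\bar\pi_i$. Applying this estimate for both $\bar\pi_i$ and the equilibrium policy $\pi_{\alpha(i)}$, and combining with the limiting optimality $J_{\alpha(i)}^{\boldsymbol{\mu}}(\pi_{\alpha(i)}) \geq J_{\alpha(i)}^{\boldsymbol{\mu}}(\bar\pi_i)$ that follows from $\boldsymbol{\pi} \in \Phi(\boldsymbol{\mu})$, the triangle inequality then yields $J_i^\nu(\Gamma_\nu(\boldsymbol{\pi})) \geq J_i^\nu(\Gamma_\nu(\boldsymbol{\pi},\bar\pi_i)) - \varepsilon$ for every $i$ outside a suitable bad set.

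The key reduction uses Lipschitzness of $P$ and $r$ from Assumption~\ref{as:lipschitz} to replace closeness of values by closeness of neighborhood distributions, telescoping the one-step state dynamics over $t \in \mathcal{T}$. What remains is to show that $\mathbb{G}_{i,t}^\nu$ concentrates around $\mathbb{G}_{\alpha(i),t}^\infty(\boldsymbol{\mu})$ in expectation. I would decompose this difference into a mean field bias term, of the form $\mathbb{G}_{\alpha(i),t}^\infty(\hat{\boldsymbol{\mu}}^\nu) - \mathbb{G}_{\alpha(i),t}^\infty(\boldsymbol{\mu})$, which is handled by Theorem~\ref{thm:core_mf_convergence} applied to test functions of the form $f(x,\beta) = W(\alpha(i),\beta)/\xi_{W,\alpha^*}(\alpha(i)) \cdot \mathbf{1}_{\{x\}}$ (the theorem is uniform over single-agent deviations, which covers $\bar\pi_i$), and an edge sampling fluctuation term, which is handled by Hoeffding's inequality over the $\deg_{V_\nu}(i)$ independent edge indicators; the number of neighbors diverges for core agents as $\nu \to \infty$ by Assumption~\ref{as:graphex_conv}, making the fluctuation vanish.

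Finally, the exceptional set $V_\nu \setminus V_\nu'$ of relative size at most $p$ collects agents close to the finitely many discontinuities of $\boldsymbol{\pi}$, boundary agents whose $\alpha(i)$ sits near the cutoff $\alpha^*$, and core agents with atypically small degree for which concentration fails. Each subpopulation can be driven below $p/3$ by further enlarging $\alpha^*$ and $\nu$, using integrability of $W$ in Assumption~\ref{as:graphex_conv}. The main obstacle will be the uniformity of the approximation over all deviations $\bar\pi_i \in \Pi$ and all core agents $i$ simultaneously: although $\Pi$ is compact because $\mathcal{X}$, $\mathcal{U}$, and $\mathcal{T}$ are finite, so that a finite $\varepsilon$-net combined with continuity of $\bar\pi_i \mapsto J_i^\nu(\Gamma_\nu(\boldsymbol{\pi},\bar\pi_i))$ in principle reduces to finitely many deviations, carrying this uniformity through the concentration estimate for the agent-specific neighborhood requires careful union-bound bookkeeping over the $\vert V_\nu \vert$ core agents and the $\varepsilon$-net size.
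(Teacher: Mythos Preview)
Your proposal is correct and follows essentially the same route as the paper: the identical triangle-inequality reduction through $J_{\alpha(i)}^{\boldsymbol{\mu}}$, MFCE optimality to kill the middle term, and Theorem~\ref{thm:core_mf_convergence} for the mean-field part of the neighborhood error. The paper packages the state-plus-neighborhood comparison as an auxiliary lemma (Lemma~\ref{lem:xconv}) proved by induction on $t$, which is exactly your telescoping.

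Two minor divergences are worth noting. First, for the graph-structure half of the neighborhood error you invoke Hoeffding over the $\deg_{V_\nu}(i)$ neighbor contributions; the paper instead writes $\mathbb G^\nu_{i,t}$ as an integral against the stretched empirical graphex $\widehat W_\nu$ and replaces $\widehat W_\nu$ by $W$ via the cut-norm convergence in Assumption~\ref{as:graphex_conv}. The cut-norm route handles all core agents simultaneously without a union bound and avoids having to justify conditional independence of neighbor states (your phrase ``independent edge indicators'' is slightly off --- what is conditionally independent given $\mathbf X_{t-1}$ are the neighbor \emph{transitions}, not the edges). Second, your concern about uniformity in $\bar\pi_i$ via an $\varepsilon$-net is unnecessary: since the estimates depend on $\bar\pi_i$ only through finite sums of its entries (each in $[0,1]$) multiplying quantities that are bounded independently of $\bar\pi_i$, all bounds are automatically uniform over $\Pi$; the paper's lemma states this uniformity directly with no covering argument.
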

\begin{theorem}[Periphery approximate optimality] \label{thm:periphery_optimality}
    Under Assumptions \ref{as:graphex_conv} and \ref{as:lipschitz}, consider an optimal policy ensemble $\boldsymbol{\pi}$ Lipschitz up to a finite number of discontinuities for a MF ensemble $\boldsymbol{\mu}$, i.e. $\pi_\alpha \in \argmax J^{\boldsymbol{\mu}}_\alpha$ for all $\alpha \in \mathbb{R}_+$. Then, for any $\varepsilon, p > 0$ there exist $\alpha', \nu' (\alpha')> 0$ such that for all $\alpha^* > \alpha'$, $\nu \geq \nu'(\alpha')$ the policy $\Gamma_\nu (\boldsymbol{\pi}) $ is a periphery $(\varepsilon, p)$-MNE.
\end{theorem}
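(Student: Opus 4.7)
My plan is to reduce the $(\varepsilon, p)$-MNE condition in the finite periphery to the exact optimality hypothesis at the limiting per-degree mean field, using Theorem \ref{thm:periphery_mf_convergence} as the bridge. For any periphery agent $i$ with $\deg(v_i) = k \leq k_{\max}$ and latent parameter $\alpha(i)$, and any deviation $\bar\pi_i \in \Pi$, I would write
\begin{align*}
J_i^\nu(\Gamma_\nu(\boldsymbol\pi, \bar\pi_i)) - J_i^\nu(\Gamma_\nu(\boldsymbol\pi))
&= \bigl[J_i^\nu(\Gamma_\nu(\boldsymbol\pi, \bar\pi_i)) - J_{\alpha(i)}^{\boldsymbol\mu}(\bar\pi_i)\bigr] + \bigl[J_{\alpha(i)}^{\boldsymbol\mu}(\bar\pi_i) - J_{\alpha(i)}^{\boldsymbol\mu}(\pi_{\alpha(i)})\bigr] \\
&\quad + \bigl[J_{\alpha(i)}^{\boldsymbol\mu}(\pi_{\alpha(i)}) - J_i^\nu(\Gamma_\nu(\boldsymbol\pi))\bigr].
\end{align*}
The middle bracket is nonpositive by the hypothesis $\pi_\alpha \in \argmax J^{\boldsymbol\mu}_\alpha$. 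The other two brackets are symmetric in structure (the third is the first with $\bar\pi_i$ replaced by $\pi_{\alpha(i)}$), so both are handled by the same argument.

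To bound these outer brackets I would unroll $J_i^\nu$ and $J_{\alpha(i)}^{\boldsymbol\mu}$ in time and use the Lipschitz continuity of $r$ and $P$ from Assumption \ref{as:lipschitz} to propagate neighborhood closeness step by step. The crucial point is that a single periphery agent's deviation has only an $O(k_{\max}/|V_\nu|)$ feedback effect on the rest of the population: $\bar\pi_i$ propagates only through agent $i$'s at most $k_{\max}$ neighbors, which with probability tending to one are core nodes of diverging degree and hence insensitive to any single incoming input. Consequently the global MF seen by agent $i$'s neighbors is still controlled by Theorems \ref{thm:core_mf_convergence} and \ref{thm:periphery_mf_convergence} uniformly in $\bar\pi_i$, and conditional on agent $i$'s state, its sampled neighborhood $\mathbb G_{i,t}^\nu$ converges in expectation to the limiting stochastic neighborhood $\mathbb G_{\alpha(i),t}^k$. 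Combining these with boundedness of $r$ yields an expected bound of order $\varepsilon/2$ on each outer bracket once $\alpha^* > \alpha'$ and $\nu > \nu'(\alpha')$ are sufficiently large, provided $\alpha(i)$ is outside a small neighborhood of every discontinuity of $\boldsymbol\pi$.

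To pass from this expected bound to the $(1-p)$-fraction statement, I would define $V_\nu^{\mathrm{bad}}$ as the union of two exceptional sets: first, periphery agents whose realized deviation gain exceeds $\varepsilon$, whose fraction is at most $p/2$ by Markov's inequality applied to the expected bound above; second, periphery agents whose latent $\alpha(i)$ lies within a small radius of one of the finitely many discontinuities of $\boldsymbol\pi$, whose measure shrinks linearly with the radius and is at most $p/2$ by choosing the radius small enough. The main obstacle is obtaining uniformity of all bounds over the infinite-dimensional policy space $\Pi$; this is resolved by the observation that $\bar\pi_i$ enters the error only through boundedness of $r$ and through a vanishing per-neighbor feedback, so no modulus of continuity in $\bar\pi_i$ is ever required.
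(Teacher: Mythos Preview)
Your approach is essentially the same as the paper's: the identical three-term decomposition with the middle term nonpositive by optimality, and the outer terms controlled by the inductive time-unrolling that the paper packages as Lemma~\ref{lem:periphery_xconv} (the periphery analogue of Lemma~\ref{lem:xconv} used for Theorem~\ref{thm:core_optimality}). Your Markov-over-agents step to extract the $(1-p)$ good set is precisely the piece the paper leaves implicit in that lemma's $(1-p)$ conclusion; the separate exclusion of agents near discontinuities of $\boldsymbol\pi$ is harmless but not strictly needed, since those discontinuities are already absorbed into the $L^1$ mean-field bounds feeding the Markov step.
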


\section{Learning Algorithm}

The hybrid graphex approach and the corresponding learning algorithm, see Algorithm \ref{algo:homd}, consist of two parts. Intuitively, the first step of the algorithm is to learn an approximate equilibrium for the high degree core agents. In a second step, this approximate core equilibrium is leveraged to also learn the optimal behavior of the many periphery agents. For approximating the core equilibrium, we utilize the well-established online mirror descent (OMD) algorithm \citep{10.5555/3535850.3535965} and emphasize that the theoretical algorithmic guarantees provided by  \citet{10.5555/3535850.3535965} also hold in our case. Therefore, we partition the core parameter interval $[0, \alpha^*]$ into $M$ subintervals of equal length and approximate all agents in each group by the agent at the centre of the respective subinterval. Thus, we obtain a game with $M$ equivalence classes of agents for which an equilibrium is learned via OMD. Here, we choose OMD for its state-of-the-art performance but other learning methods could be incorporated into our framework just as well. In Algorithm \ref{algo:homd}, the $Q$ function is
\begin{align*}
    Q_{i,t}^{\pi, \mu}(x, u) \coloneqq r (x, u, \mathbb G_{i ,t}) + \sum_{x' \in \mathcal{X}} P \left(x' \innermid x, u, \mathbb G_{i,t} \right) \argmax_{u' \in \mathcal{U}} Q_{i,t+1}^{\pi, \mu}(x', u')
\end{align*}
where $\mathbb G_{i,t}$ is the discretized neighborhood.
\begin{algorithm}
    \caption{\textbf{Hybrid Online Mirror Descent (HOMD)}}
    \label{algo:homd}
    \begin{algorithmic}[1]
        \STATE \textbf{First step: solve for the core equilibrium}
        \STATE \textbf{Input}: update weight $\gamma$, number of iterations $\tau_{\max}$, number of equivalence classes $M$, $\alpha^*$
        \STATE \textbf{Initialization}: $y_{i, t}^0 = 0$ for all $i \leq M, t \in \mathcal{T}$
        \FOR {$\tau=1, \ldots, \tau_{\max}$}
            \STATE Forward update for all $i$ via equation \eqref{eq:mf_core_forw_update} for given $\pi_{i}^\tau$: $\mu_{i}^\tau$
            \STATE Backward update for all $i$: $Q_{i,t}^{\pi_i^\tau, \mu^{\tau}}$
            \STATE Update for all $i, t, x, u$
            \STATE $y_{i, t}^{\tau + 1} (x, u) = y_{i, t}^\tau (x, u) + \gamma \cdot Q_{i,t}^{\pi_i^\tau, \mu^{\tau}} (x,u)$
            \STATE $\pi_{i, t}^{\tau +1} (\cdot \vert x) = \textrm{softmax} \left( y_{t, \tau + 1}^i (x, \cdot)\right)$ 
        \ENDFOR
        \STATE \textbf{Second step: determine the optimal strategies in the
        periphery}
        \STATE \textbf{Input}: maximal degree in the periphery $k_{\max}$
        \FOR {$k = 1, \ldots, k_{\max}$}
            \STATE Backward update for all $k$: $Q^{k, \mu^{\tau_{\max}}}$
        \ENDFOR 
    \end{algorithmic}
\end{algorithm}
In the second step, it remains to learn the approximately optimal behavior of the periphery agents. We interpret the dynamics in the periphery as a standard Markov decision process (MDP) with reward $r'_k (x, u) \coloneqq \mathbb{E} \left[ r (X_t, U_t, \mathbb{G}_t^k) \innermid X_t = x, U_t = u \right]$ and transition kernel $P'_k (x' \vert x, u) \coloneqq \mathbb{E} \left[ P (x' \vert X_t, U_t, \mathbb{G}_t^k) \innermid X_t = x, U_t = u \right]$. Here, $Q^{k, \mu^{\tau_{\max}}}$ is defined as above with $r'_k$ and $P'_k$ instead of $r$ and $P$. Recall that in the limiting model the periphery does not influence the core agents. Therefore, an action by a periphery agent has no effect on its neighborhood. The reformulation as a standard MDP then allows us to apply the theoretically well-founded calculation of the $Q$ function via backward induction (dynamic programming).

\section{Examples}
In this work, we consider three numerical examples inspired by the existing literature. In each example, a crucial modification is that the agents are now connected by a topology generated by a power law graphex. For more detailed descriptions and problem parameters, see Appendix~\ref{app:exp}.

\paragraph{Susceptible-Infected-Susceptible (SIS).}
The SIS epidemics model is a well-known benchmark problem for learning MFGs \citep{lauriere2022scalable, becherer2023mean} and of practical relevance to epidemics control without immunity. Here, agents in the susceptible state ($x=S$) can choose a costly action that prevents costly infection ($x=I$), and the probability to be infected scales both with the number of connections and the fraction of infected neighbors, while the probability of recovery remains constant. Additionally, we consider the
\textbf{Susceptible-Infected-Recovered (SIR)} model \citep{laguzet2015individual, lee2020controlling, aurell2022optimal}
as an extension of the SIS epidemics model to allow for modelling epidemics with immunity using a third, terminal state $x=R$. 

\paragraph{Rumor Spreading (RS).}
Lastly, in a rumor spreading problem as in \cite{cui2022hypergraphon}, agents (e.g., on a social network) randomly contact neighbors and try to spread a rumor to neighbors unaware ($U$) of the rumor for a reputation gain, while avoiding to spread the rumor to aware neighbors ($\bar U$).

\begin{figure}[b]
    \centering
    \includegraphics[width=0.99\linewidth]{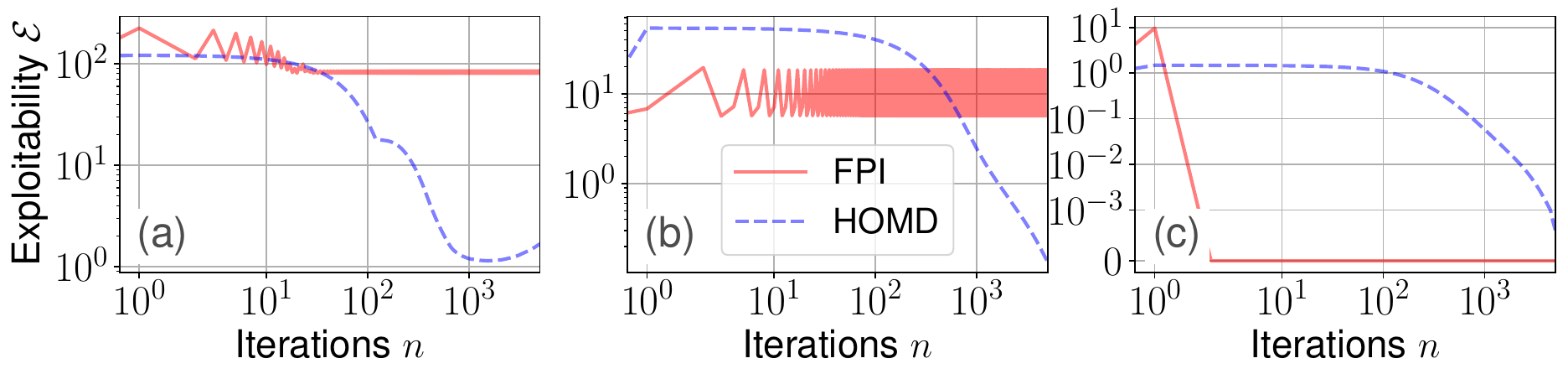}
    \caption{The core exploitability $\mathcal E$ is optimized with respect to the iterations of Algorithm~\ref{algo:homd}. (a): SIS; (b): SIR; (c): RS.}
    \label{fig:expl}
\end{figure}

\paragraph{Simulation on synthetic networks.}
We assume the underlying graphex to be of separable power law form $W(\alpha, \beta) = (1 + \alpha)^{-1/\sigma} (1 + \beta)^{-1/\sigma}$ with $\sigma = 0.5$. In Figure~\ref{fig:expl}, we show the convergence of Algorithm~\ref{algo:homd} in contrast to naive fixed point iteration (FPI), i.e. the core exploitability $$\mathcal E \coloneqq \max_{i \leq M} \max_{\pi \in \Pi} \left\{ \sum_{x \in \mathcal X} \mu_0(x) \max_{u \in \mathcal U} Q_{i,0}^{\pi, \mu^{\tau}}(x, u) - \sum_{x \in \mathcal X} \mu_0(x) \sum_{u \in \mathcal U} \pi_{i, 0}^\tau(u \mid x) Q_{i,0}^{\pi_i^\tau, \mu^{\tau}}(x, u) \right\}$$ is optimized which quantifies the limiting MFCE quality.
The resulting MFCE is then applied in randomly sampled graphs for fixed $\nu$ by computing the periphery optimal strategies as in the second step of Algorithm~\ref{algo:homd}. As a result of applying the obtained equilibrium in the preceding examples, in Figure~\ref{fig:nuconv} we observe the convergence of empirical periphery $k$-degree MFs and core MFs $k > k_{\max}$ for degree cutoff $k_{\max}$ to the limiting MFs predicted by the GXMFG equations in Section~\ref{sec:GXMFG} as $\nu \to \infty$, validating the derived theoretical framework. For qualitative results, see also the next section and the following Figure~\ref{fig:qualitative_MF_comparison}.

\begin{figure}
    \centering
    \includegraphics[width=0.99\linewidth]{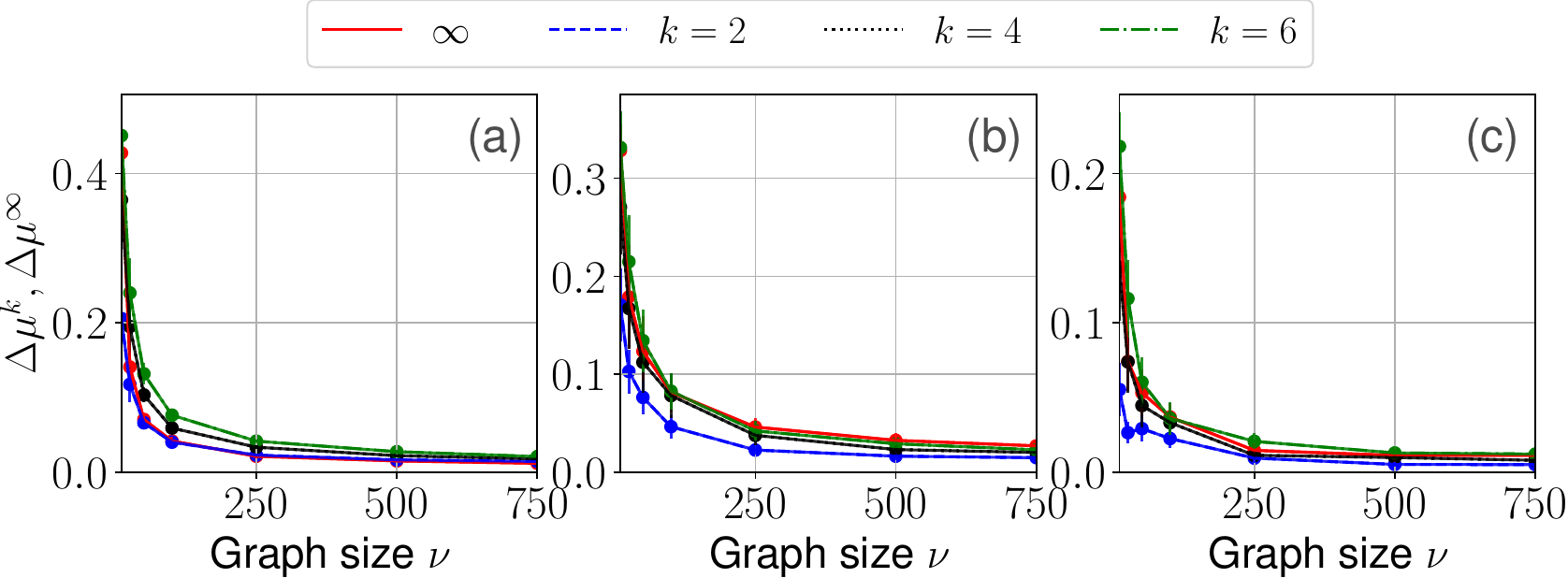}
    \caption{Difference $\Delta \mu^k = \frac 1 {2T} \E \left[ \sum_t \lVert \hat \mu^k_t - \mu^k_t \rVert_1 \right]$ and $\Delta \mu^\infty = \frac 1 {2T} \E \left[ \sum_t \lVert \hat \mu^\infty_t - \mu^\infty_t \rVert_1 \right]$ of the periphery $k$-degree and core MFs against the empirical $k$-degree and $k > k_{\max}$ MFs with respect to $\nu$ for sampled graphs ($\pm$ 95\% confidence interval, 20 trials). The parameters $\nu = 10$ and $\nu = 750$ corresponds to approximately $N=40$ and $N=26750$ nodes. (a): SIS; (b): SIR; (c): RS.}
    \label{fig:nuconv}
\end{figure}

\begin{figure}[b]
    \centering
    \includegraphics[width=0.99\linewidth]{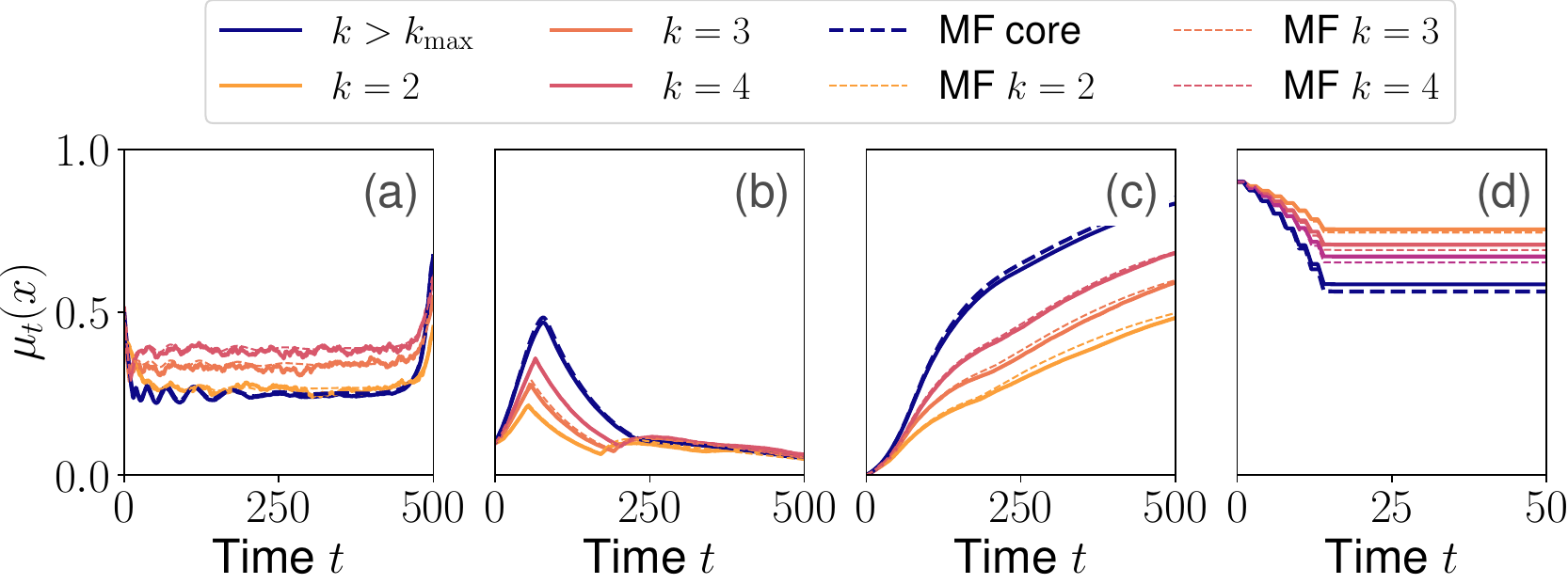}
    \caption{Comparison of the empirically observed MF on a real network with the predicted equilibrium for (a) SIS: $x=I$ on Prosper; (b-c) SIR: $x=I$ and $x=R$ on Dogster; and (d) RS: $x=U$ on Pokec.}
    \label{fig:qualitative_MF_comparison}
\end{figure}

\section{Simulation on Real Networks}
Beyond synthetic data, we find that our framework is capable of producing good equilibria in real world networks. Under the separable power law form, for each empirical data set we estimate $\hat \sigma$ with the procedure proposed by \citet{naulet2021bootstrap}. Although this simple estimation yields reasonable evaluation results, we believe that developing more complex estimators would also further increase the accuracy of our GXMFG learning method. If data sets contain multiple or directed edges, we substitute them by one undirected edge to obtain simple, undirected graphs.
We consider the following real world networks from the KONECT database \citep{kunegis2013konect}: Prosper loans \citep{redmond2013temporal}, Dogster \citep{kunegis2013konect}, Pokec \citep{takac2012data}, Livemocha \citep{zafarani2009social}, Flickr \citep{mislove2007measurement}, Brightkite \citep{cho2011friendship}, Facebook \citep{viswanath2009evolution}, and Hyves \citep{zafarani2009social}. See the respective papers for detailed network descriptions.

The plots in Figure \ref{fig:qualitative_MF_comparison} give a qualitative impression of the algorithmic performance of our approach on real world networks and show that the estimated equilibrium is close to the empirical MF on the real networks.
The quantitative empirical results can be found in Table \ref{table:real_network_results} where the overall predicted GXMFG mean field is $\mu = \sum_k p_k \mu^k + (1-\sum_k p_k) \mu^\infty$ with $p_k = \hat{\sigma} \Gamma (k - \hat{\sigma})/(k! \Gamma (1 - \hat{\sigma}))$, see \citet[Section 6.3]{caron2022sparsity} for the choice of $p_k$. Similar results are shown for the periphery mean fields, see Table~\ref{table:app} in Appendix~\ref{app:exp}. Furthermore, our GXMFG learning approach clearly outperforms LPGMFGs  \citep{fabian2023learning} (and thereby also the subset of GMFGs) on all of the considered tasks and real world networks, see Table~\ref{table:real_network_LPGMFG} in Appendix~\ref{app:exp} for details. These results indicate that the ability of GXMFGs to depict sparse networks with finite degree agents is a crucial advantage over existing GMFG and LPGMFG models, both conceptually and empirically.

In Table \ref{table:real_network_results}, each row shows the estimated graphex parameter $\hat{\sigma}$ and the number of nodes and edges for each real world network. The last three columns contain the (normalized) expected total variation $\Delta \mu$ which measures how close the empirical mean field $\hat{\mu}$ is to the predicted mean field $\mu$ of the graphex learning approach for each of the three examplary models SIS, SIR and RS. One can see that the accuracy of the hybrid graphex learning algorithm is somewhere between a deviation of $1.5 \%$ up to $5 \%$ for most of the real world networks and examples. While the MFs for SIS and RS are learned accurately ($\Delta \mu <5.1 \%$) on all networks, the SIR approximation performs well on the Prosper, Dogster, Pokec, and Livemotcha networks. A possible explanation for the under average performance on the Brightkite and Facebook data sets could be their rather small number of nodes which decreases the accuracy of our asymptotic approximation. Additionally, some networks such as Hyves might not be represented optimally by a separable power law graphex. We think that proposing more evolved approximation schemes and thereby allowing for more diverse graphexes would improve the performance of our algorithm on these networks, which we leave to future work.

\begin{table}
    \centering
    \caption{Expected average total variation $\Delta \mu = \frac 1 {2T} \E \left[ \sum_t \lVert \hat \mu_t - \mu_t \rVert_1 \right] \in [0, 1]$ between the overall GXMFG MF prediction $\mu_t$ and the empirical MF $\hat \mu_t = \sum_i \delta_{X^i_t}$ ($\pm$ standard deviation, 5 trials).}
    \setlength\tabcolsep{7pt}
    \vspace{0.2cm}
    \begin{tabular}{ccccccc}
         \toprule
         \multirow{2}{*}[-0.4ex]{network} & \multirow{2}{*}[-0.4ex]{$\hat \sigma$} & \multirow{2}{*}[-0.4ex]{\# nodes} & \multirow{2}{*}[-0.4ex]{\# edges} & \multicolumn{3}{c}{Expected total variation $\Delta \mu$ in $\%$}  \\
         \cmidrule(lr){5-7}
         & & & &  SIS ($\%$) &  SIR ($\%$) & RS ($\%$) \\
         \toprule
         Prosper & 0.058 & 89269 & 3.3 MM & 0.53 $\pm$ 0.21 & 2.06 $\pm$ 0.25 & 1.58 $\pm$ 0.31 \\
         \midrule
         Dogster & 0.071 & 426820 & 8.5 MM & 2.78 $\pm$ 0.44 & 4.93 $\pm$ 0.98 & 1.89 $\pm$ 0.66 \\
         \midrule
         Pokec & 0.108 & 1.6 MM & 22.3 MM & 2.14 $\pm$ 0.05 & 4.93 $\pm$ 0.12 & 2.19 $\pm$ 0.08 \\
         \midrule
         Livemocha & 0.075 & 104103 & 2.2 MM & 2.57 $\pm$ 0.20 & 4.40 $\pm$ 1.08 & 2.51 $\pm$ 0.54 \\
         \midrule
         Flickr & 0.506 & 2.3 MM  &  22.8 MM & 3.57 $\pm$ 0.08 & 8.58 $\pm$ 0.24 & 2.24 $\pm$ 0.11 \\
         \midrule
         Brightkite & 0.376 & 58228  & 214078 & 1.37 $\pm$ 0.11 & 10.92 $\pm$ 0.77 & 3.37 $\pm$ 0.25 \\
         \midrule
         Facebook & 0.252 & 46952  & 188453 & 2.90 $\pm$ 0.09 & 13.57 $\pm$ 0.37 & 5.01 $\pm$ 0.27 \\
         \midrule
         Hyves & 0.585 & 1.4 MM  & 2.8 MM & 5.07 $\pm$ 0.24 & 10.06 $\pm$ 0.81 & 2.62 $\pm$ 0.44 \\
        \bottomrule
    \end{tabular}
    \label{table:real_network_results}
\end{table}

\section{Conclusion}

In this paper we have introduced the concept of graphex mean field games. To learn this challenging class of games, we have developed a novel hybrid graphex learning approach to address the high complexity of sparse real world networks. Our empirical examples have shown that learning GXMFGs could be a promising approach to understanding behavior in large and complex agent networks. For future work it would be interesting to develop and use more precise graphex estimators and to apply our setup to various research problems. Furthermore, an extension to continuous state and action spaces, as well as continuous time, could be a next step. We hope that our work contributes to the applicability and accuracy of mean field games for real world challenges.

 \subsubsection*{Acknowledgments}
 This work has been co-funded by the Hessian Ministry of Science and the Arts (HMWK) within the projects "The Third Wave of Artificial Intelligence - 3AI" and hessian.AI, and the LOEWE initiative (Hesse, Germany) within the emergenCITY center. The authors acknowledge the Lichtenberg high performance computing cluster of the TU Darmstadt for providing computational facilities for the calculations of this research. We thank the anonymous reviewers for their helpful comments on improving the manuscript.

\bibliography{iclr2024_conference}

\begin{thebibliography}{52}
\providecommand{\natexlab}[1]{#1}
\providecommand{\url}[1]{\texttt{#1}}
\expandafter\ifx\csname urlstyle\endcsname\relax
  \providecommand{\doi}[1]{doi: #1}\else
  \providecommand{\doi}{doi: \begingroup \urlstyle{rm}\Url}\fi

\bibitem[Amaral et~al.(2000)Amaral, Scala, Barthelemy, and Stanley]{amaral2000classes}
Lu{\i}s A~Nunes Amaral, Antonio Scala, Marc Barthelemy, and H~Eugene Stanley.
\newblock Classes of small-world networks.
\newblock \emph{Proceedings of the National Academy of Sciences}, 97\penalty0 (21):\penalty0 11149--11152, 2000.

\bibitem[Anahtarci et~al.(2023)Anahtarci, Kariksiz, and Saldi]{anahtarci2023q}
Berkay Anahtarci, Can~Deha Kariksiz, and Naci Saldi.
\newblock Q-learning in regularized mean-field games.
\newblock \emph{Dynamic Games and Applications}, 13\penalty0 (1):\penalty0 89--117, 2023.

\bibitem[Aurell et~al.(2022)Aurell, Carmona, Dayanikli, and Lauri{\`e}re]{aurell2022optimal}
Alexander Aurell, Rene Carmona, Gokce Dayanikli, and Mathieu Lauri{\`e}re.
\newblock Optimal incentives to mitigate epidemics: a stackelberg mean field game approach.
\newblock \emph{SIAM Journal on Control and Optimization}, 60\penalty0 (2):\penalty0 S294--S322, 2022.

\bibitem[Becherer et~al.(2023)Becherer, Reisinger, and Tam]{becherer2023mean}
Dirk Becherer, Christoph Reisinger, and Jonathan Tam.
\newblock Mean-field games of speedy information access with observation costs.
\newblock \emph{arXiv preprint arXiv:2309.07877}, 2023.

\bibitem[Borgs et~al.(2018{\natexlab{a}})Borgs, Chayes, Cohn, and Zhao]{borgs2018p}
Christian Borgs, Jennifer Chayes, Henry Cohn, and Yufei Zhao.
\newblock An {Lp} theory of sparse graph convergence {II}: Ld convergence, quotients and right convergence.
\newblock \emph{The Annals of Probability}, 46\penalty0 (1):\penalty0 337--396, 2018{\natexlab{a}}.

\bibitem[Borgs et~al.(2018{\natexlab{b}})Borgs, Chayes, Cohn, and Holden]{borgs2018sparse}
Christian Borgs, Jennifer~T Chayes, Henry Cohn, and Nina Holden.
\newblock Sparse exchangeable graphs and their limits via graphon processes.
\newblock \emph{Journal of Machine Learning Research}, 18:\penalty0 1--71, 2018{\natexlab{b}}.

\bibitem[Borgs et~al.(2019)Borgs, Chayes, Cohn, and Zhao]{borgs2019L}
Christian Borgs, Jennifer Chayes, Henry Cohn, and Yufei Zhao.
\newblock An {Lp} theory of sparse graph convergence {I}: Limits, sparse random graph models, and power law distributions.
\newblock \emph{Transactions of the American Mathematical Society}, 372\penalty0 (5):\penalty0 3019--3062, 2019.

\bibitem[Caines \& Huang(2019)Caines and Huang]{caines2019graphon}
Peter~E Caines and Minyi Huang.
\newblock Graphon mean field games and the gmfg equations: $\varepsilon$-nash equilibria.
\newblock In \emph{2019 IEEE 58th Conference on Decision and Control}, pp.\  286--292. IEEE, 2019.

\bibitem[Cardaliaguet \& Hadikhanloo(2017)Cardaliaguet and Hadikhanloo]{cardaliaguet2017learning}
Pierre Cardaliaguet and Saeed Hadikhanloo.
\newblock Learning in mean field games: the fictitious play.
\newblock \emph{ESAIM: Control, Optimisation and Calculus of Variations}, 23\penalty0 (2):\penalty0 569--591, 2017.

\bibitem[Carmona(2004)]{carmona2004nash}
Guilherme Carmona.
\newblock {Nash} equilibria of games with a continuum of players, 2004.

\bibitem[Caron \& Fox(2017)Caron and Fox]{caron2017sparse}
Fran{\c{c}}ois Caron and Emily~B Fox.
\newblock Sparse graphs using exchangeable random measures.
\newblock \emph{Journal of the Royal Statistical Society Series B: Statistical Methodology}, 79\penalty0 (5):\penalty0 1295--1366, 2017.

\bibitem[Caron et~al.(2022)Caron, Panero, and Rousseau]{caron2022sparsity}
Fran{\c{c}}ois Caron, Francesca Panero, and Judith Rousseau.
\newblock On sparsity, power-law, and clustering properties of graphex processes.
\newblock \emph{Advances in Applied Probability}, pp.\  1--43, 2022.

\bibitem[Cho et~al.(2011)Cho, Myers, and Leskovec]{cho2011friendship}
Eunjoon Cho, Seth~A Myers, and Jure Leskovec.
\newblock Friendship and mobility: user movement in location-based social networks.
\newblock In \emph{Proceedings of the 17th ACM SIGKDD International Conference on Knowledge Discovery and Data Mining}, pp.\  1082--1090, 2011.

\bibitem[Corbae \& D'Erasmo(2021)Corbae and D'Erasmo]{corbae2021capital}
Dean Corbae and Pablo D'Erasmo.
\newblock Capital buffers in a quantitative model of banking industry dynamics.
\newblock \emph{Econometrica}, 89\penalty0 (6):\penalty0 2975--3023, 2021.

\bibitem[Cui \& Koeppl(2021{\natexlab{a}})Cui and Koeppl]{cui2021approximately}
Kai Cui and Heinz Koeppl.
\newblock Approximately solving mean field games via entropy-regularized deep reinforcement learning.
\newblock In \emph{International Conference on Artificial Intelligence and Statistics}, pp.\  1909--1917. PMLR, 2021{\natexlab{a}}.

\bibitem[Cui \& Koeppl(2021{\natexlab{b}})Cui and Koeppl]{cui2021learning}
Kai Cui and Heinz Koeppl.
\newblock Learning graphon mean field games and approximate nash equilibria.
\newblock In \emph{International Conference on Learning Representations}, 2021{\natexlab{b}}.

\bibitem[Cui et~al.(2022)Cui, KhudaBukhsh, and Koeppl]{cui2022hypergraphon}
Kai Cui, Wasiur~R KhudaBukhsh, and Heinz Koeppl.
\newblock Hypergraphon mean field games.
\newblock \emph{Chaos: An Interdisciplinary Journal of Nonlinear Science}, 32\penalty0 (11), 2022.

\bibitem[Elie et~al.(2020)Elie, Perolat, Lauri{\`e}re, Geist, and Pietquin]{elie2020convergence}
Romuald Elie, Julien Perolat, Mathieu Lauri{\`e}re, Matthieu Geist, and Olivier Pietquin.
\newblock On the convergence of model free learning in mean field games.
\newblock In \emph{Proceedings of the AAAI Conference on Artificial Intelligence}, volume~34, pp.\  7143--7150, 2020.

\bibitem[Fabian et~al.(2023)Fabian, Cui, and Koeppl]{fabian2023learning}
Christian Fabian, Kai Cui, and Heinz Koeppl.
\newblock Learning sparse graphon mean field games.
\newblock In \emph{International Conference on Artificial Intelligence and Statistics}, pp.\  4486--4514. PMLR, 2023.

\bibitem[Feller(1991)]{feller1991introduction}
William Feller.
\newblock \emph{An introduction to probability theory and its applications, Volume 2}, volume~81.
\newblock John Wiley \& Sons, 1991.

\bibitem[Gao et~al.(2021)Gao, Caines, and Huang]{gao2021lqg}
Shuang Gao, Peter~E Caines, and Minyi Huang.
\newblock Lqg graphon mean field games: Graphon invariant subspaces.
\newblock In \emph{2021 60th IEEE Conference on Decision and Control (CDC)}, pp.\  5253--5260. IEEE, 2021.

\bibitem[Gonz{\'a}lez-Briones et~al.(2018)Gonz{\'a}lez-Briones, De~La~Prieta, Mohamad, Omatu, and Corchado]{gonzalez2018multi}
Alfonso Gonz{\'a}lez-Briones, Fernando De~La~Prieta, Mohd~Saberi Mohamad, Sigeru Omatu, and Juan~M Corchado.
\newblock Multi-agent systems applications in energy optimization problems: A state-of-the-art review.
\newblock \emph{Energies}, 11\penalty0 (8):\penalty0 1928, 2018.

\bibitem[Gronauer \& Diepold(2022)Gronauer and Diepold]{gronauer2022multi}
Sven Gronauer and Klaus Diepold.
\newblock Multi-agent deep reinforcement learning: a survey.
\newblock \emph{Artificial Intelligence Review}, pp.\  1--49, 2022.

\bibitem[Guo et~al.(2022)Guo, Xu, and Zariphopoulou]{guo2022entropy}
Xin Guo, Renyuan Xu, and Thaleia Zariphopoulou.
\newblock Entropy regularization for mean field games with learning.
\newblock \emph{Mathematics of Operations Research}, 47\penalty0 (4):\penalty0 3239--3260, 2022.

\bibitem[Huang et~al.(2006)Huang, Malham{\'e}, and Caines]{huang2006large}
Minyi Huang, Roland~P Malham{\'e}, and Peter~E Caines.
\newblock Large population stochastic dynamic games: closed-loop mckean-vlasov systems and the nash certainty equivalence principle.
\newblock \emph{Commun. Inf. Syst.}, 6\penalty0 (3):\penalty0 221--252, 2006.

\bibitem[Janson(2016)]{janson2016graphons}
Svante Janson.
\newblock Graphons and cut metric on sigma-finite measure spaces.
\newblock \emph{arXiv preprint arXiv:1608.01833}, 2016.

\bibitem[Janson(2022)]{janson2022convergence}
Svante Janson.
\newblock On convergence for graphexes.
\newblock \emph{European Journal of Combinatorics}, 104:\penalty0 103549, 2022.

\bibitem[Kallenberg(1990)]{kallenberg1990exchangeable}
Olav Kallenberg.
\newblock Exchangeable random measures in the plane.
\newblock \emph{Journal of Theoretical Probability}, 3:\penalty0 81--136, 1990.

\bibitem[Kunegis(2013)]{kunegis2013konect}
J{\'e}r{\^o}me Kunegis.
\newblock Konect: the koblenz network collection.
\newblock In \emph{Proceedings of the 22nd International Conference on World Wide Web}, pp.\  1343--1350, 2013.

\bibitem[Laguzet \& Turinici(2015)Laguzet and Turinici]{laguzet2015individual}
Laetitia Laguzet and Gabriel Turinici.
\newblock Individual vaccination as nash equilibrium in a sir model with application to the 2009--2010 influenza a (h1n1) epidemic in france.
\newblock \emph{Bulletin of Mathematical Biology}, 77:\penalty0 1955--1984, 2015.

\bibitem[Lasry \& Lions(2007)Lasry and Lions]{lasry2007mean}
Jean-Michel Lasry and Pierre-Louis Lions.
\newblock Mean field games.
\newblock \emph{Japanese Journal of Mathematics}, 2\penalty0 (1):\penalty0 229--260, 2007.

\bibitem[Lauri{\`e}re et~al.(2022{\natexlab{a}})Lauri{\`e}re, Perrin, Geist, and Pietquin]{lauriere2022learning}
Mathieu Lauri{\`e}re, Sarah Perrin, Matthieu Geist, and Olivier Pietquin.
\newblock Learning mean field games: A survey.
\newblock \emph{arXiv preprint arXiv:2205.12944}, 2022{\natexlab{a}}.

\bibitem[Lauri{\`e}re et~al.(2022{\natexlab{b}})Lauri{\`e}re, Perrin, Girgin, Muller, Jain, Cabannes, Piliouras, P{\'e}rolat, Elie, Pietquin, et~al.]{lauriere2022scalable}
Mathieu Lauri{\`e}re, Sarah Perrin, Sertan Girgin, Paul Muller, Ayush Jain, Theophile Cabannes, Georgios Piliouras, Julien P{\'e}rolat, Romuald Elie, Olivier Pietquin, et~al.
\newblock Scalable deep reinforcement learning algorithms for mean field games.
\newblock In \emph{International Conference on Machine Learning}, pp.\  12078--12095. PMLR, 2022{\natexlab{b}}.

\bibitem[Lee et~al.(2020)Lee, Liu, Tembine, Li, and Osher]{lee2020controlling}
Wonjun Lee, Siting Liu, Hamidou Tembine, Wuchen Li, and Stanley Osher.
\newblock Controlling propagation of epidemics via mean-field games.
\newblock \emph{arXiv preprint arXiv:2006.01249}, 2020.

\bibitem[Lov{\'a}sz(2012)]{lovasz2012large}
L{\'a}szl{\'o} Lov{\'a}sz.
\newblock \emph{Large networks and graph limits}, volume~60.
\newblock American Mathematical Soc., 2012.

\bibitem[Min \& Hu(2021)Min and Hu]{min2021signatured}
Ming Min and Ruimeng Hu.
\newblock Signatured deep fictitious play for mean field games with common noise.
\newblock In \emph{International Conference on Machine Learning}, pp.\  7736--7747. PMLR, 2021.

\bibitem[Mislove et~al.(2007)Mislove, Marcon, Gummadi, Druschel, and Bhattacharjee]{mislove2007measurement}
Alan Mislove, Massimiliano Marcon, Krishna~P Gummadi, Peter Druschel, and Bobby Bhattacharjee.
\newblock Measurement and analysis of online social networks.
\newblock In \emph{Proceedings of the 7th ACM SIGCOMM Conference on Internet Measurement}, pp.\  29--42, 2007.

\bibitem[Naulet et~al.(2021)Naulet, Roy, Sharma, and Veitch]{naulet2021bootstrap}
Zacharie Naulet, Daniel~M Roy, Ekansh Sharma, and Victor Veitch.
\newblock Bootstrap estimators for the tail-index and for the count statistics of graphex processes.
\newblock \emph{Electronic Journal of Statistics}, 15:\penalty0 282--325, 2021.

\bibitem[Newman(2018)]{newman2018networks}
Mark Newman.
\newblock \emph{Networks}.
\newblock Oxford university press, 2018.

\bibitem[P\'{e}rolat et~al.(2022)P\'{e}rolat, Perrin, Elie, Lauri\`{e}re, Piliouras, Geist, Tuyls, and Pietquin]{10.5555/3535850.3535965}
Julien P\'{e}rolat, Sarah Perrin, Romuald Elie, Mathieu Lauri\`{e}re, Georgios Piliouras, Matthieu Geist, Karl Tuyls, and Olivier Pietquin.
\newblock Scaling mean field games by online mirror descent.
\newblock In \emph{Proceedings of the 21st International Conference on Autonomous Agents and Multiagent Systems (AAMAS)}, pp.\  1028–1037, 2022.

\bibitem[Perrin et~al.(2020)Perrin, P{\'e}rolat, Lauri{\`e}re, Geist, Elie, and Pietquin]{perrin2020fictitious}
Sarah Perrin, Julien P{\'e}rolat, Mathieu Lauri{\`e}re, Matthieu Geist, Romuald Elie, and Olivier Pietquin.
\newblock Fictitious play for mean field games: Continuous time analysis and applications.
\newblock \emph{Advances in Neural Information Processing Systems}, 33:\penalty0 13199--13213, 2020.

\bibitem[Perrin et~al.(2021)Perrin, Laurière, Pérolat, Geist, Élie, and Pietquin]{ijcai2021p50}
Sarah Perrin, Mathieu Laurière, Julien Pérolat, Matthieu Geist, Romuald Élie, and Olivier Pietquin.
\newblock Mean field games flock! the reinforcement learning way.
\newblock In \emph{Proceedings of the Thirtieth International Joint Conference on Artificial Intelligence, {IJCAI-21}}, pp.\  356--362, 2021.

\bibitem[Redmond \& Cunningham(2013)Redmond and Cunningham]{redmond2013temporal}
Ursula Redmond and P{\'a}draig Cunningham.
\newblock A temporal network analysis reveals the unprofitability of arbitrage in the prosper marketplace.
\newblock \emph{Expert Systems with Applications}, 40\penalty0 (9):\penalty0 3715--3721, 2013.

\bibitem[Takac \& Zabovsky(2012)Takac and Zabovsky]{takac2012data}
Lubos Takac and Michal Zabovsky.
\newblock Data analysis in public social networks.
\newblock In \emph{International Scientific Conference and International Workshop Present Day Trends of Innovations}, 2012.

\bibitem[Vasal et~al.(2021)Vasal, Mishra, and Vishwanath]{vasal2021sequential}
Deepanshu Vasal, Rajesh Mishra, and Sriram Vishwanath.
\newblock Sequential decomposition of graphon mean field games.
\newblock In \emph{2021 American Control Conference (ACC)}, pp.\  730--736. IEEE, 2021.

\bibitem[Veitch \& Roy(2015)Veitch and Roy]{veitch2015class}
Victor Veitch and Daniel~M Roy.
\newblock The class of random graphs arising from exchangeable random measures.
\newblock \emph{arXiv preprint arXiv:1512.03099}, 2015.

\bibitem[Viswanath et~al.(2009)Viswanath, Mislove, Cha, and Gummadi]{viswanath2009evolution}
Bimal Viswanath, Alan Mislove, Meeyoung Cha, and Krishna~P Gummadi.
\newblock On the evolution of user interaction in facebook.
\newblock In \emph{Proceedings of the 2nd ACM Workshop on Online Social Networks}, pp.\  37--42, 2009.

\bibitem[Watts(1999)]{watts1999networks}
Duncan~J Watts.
\newblock Networks, dynamics, and the small-world phenomenon.
\newblock \emph{American Journal of Sociology}, 105\penalty0 (2):\penalty0 493--527, 1999.

\bibitem[Xie et~al.(2021)Xie, Yang, Wang, and Minca]{xie2021learning}
Qiaomin Xie, Zhuoran Yang, Zhaoran Wang, and Andreea Minca.
\newblock Learning while playing in mean-field games: Convergence and optimality.
\newblock In \emph{International Conference on Machine Learning}, pp.\  11436--11447. PMLR, 2021.

\bibitem[You et~al.(2022)You, Gan, Guo, and Dagestani]{you2022public}
Ge~You, Shangqian Gan, Hao Guo, and Abd~Alwahed Dagestani.
\newblock Public opinion spread and guidance strategy under covid-19: A sis model analysis.
\newblock \emph{Axioms}, 11\penalty0 (6):\penalty0 296, 2022.

\bibitem[Zafarani \& Liu(2009)Zafarani and Liu]{zafarani2009social}
Reza Zafarani and Huan Liu.
\newblock Social computing data repository at asu, 2009.

\bibitem[Zhang et~al.(2021)Zhang, Yang, and Ba{\c{s}}ar]{zhang2021multi}
Kaiqing Zhang, Zhuoran Yang, and Tamer Ba{\c{s}}ar.
\newblock Multi-agent reinforcement learning: A selective overview of theories and algorithms.
\newblock \emph{Handbook of Reinforcement Learning and Control}, pp.\  321--384, 2021.

\end{thebibliography}
\bibliographystyle{iclr2024_conference}

\newpage
\appendix

\section{Overview}
On the following pages, we often write $\varepsilon_{\alpha^*}$ to express an error term depending on $\alpha^*$ such that $\lim_{\alpha^* \to \infty} \varepsilon_{\alpha^*} = 0$.
The following lemma will be of frequent help to us for proving various convergence results. In the subsequent sections, we often just refer to Assumption \ref{as:graphex_conv} but tacitly also mean its implication formalized by the next lemma.

\begin{lemma}\label{lem:impl_conv_op}
    Assumption \ref{as:graphex_conv} implies
    $ \sup_{-1 \leq g \leq 1} \int_{\mathbb{R}_+} \left\vert \int_{\mathbb{R}_+} \left( W (\alpha, \beta) - \widehat{W}_\nu (\alpha, \beta) \right) g(\beta) \, \mathrm d \beta \right\vert \, \mathrm d \alpha \to 0 $ as $\nu \to \infty$ where the supremum is taken over all measurable functions $g: \mathbb{R}_+ \to [-1, 1]$.
\end{lemma}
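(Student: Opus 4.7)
}
The plan is to reduce the bounded-test-function convergence to cut norm convergence of $K_\nu \coloneqq W - \widehat{W}_\nu$ by a standard bilinear dualisation followed by a layer-cake decomposition. For a fixed measurable $g\colon \mathbb{R}_+ \to [-1,1]$, let $h_g(\alpha) \coloneqq \int_{\mathbb{R}_+} K_\nu(\alpha,\beta) g(\beta) \, \mathrm d\beta$. Both $W$ and $\widehat{W}_\nu$ are integrable on $\mathbb{R}_+^2$, so by Fubini $h_g$ is well-defined, measurable, and integrable with $\int |h_g|\,\mathrm d\alpha \leq 2\bar\xi_W + 2\bar\xi_{\widehat W_\nu}$. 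Writing $f(\alpha) \coloneqq \operatorname{sgn}(h_g(\alpha)) \in [-1,1]$, which is measurable, we get
\begin{align*}
    \int_{\mathbb{R}_+} |h_g(\alpha)|\,\mathrm d\alpha
    = \int_{\mathbb{R}_+^2} f(\alpha) g(\beta) K_\nu(\alpha,\beta)\, \mathrm d\alpha\, \mathrm d\beta \, ,
\end{align*}
so the quantity to be controlled is bounded by $\sup_{f,g} \int\int f(\alpha) g(\beta) K_\nu(\alpha,\beta)\, \mathrm d\alpha\, \mathrm d\beta$, where the supremum is over measurable $f, g\colon \mathbb{R}_+ \to [-1,1]$.

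Next, I split $f = f^+ - f^-$ and $g = g^+ - g^-$ into their positive and negative parts, each taking values in $[0,1]$, obtaining four analogous terms. For a generic pair of nonnegative $[0,1]$-valued measurable $f_0, g_0$, the layer-cake formula gives $f_0(\alpha) = \int_0^1 \mathbf{1}_{U_s}(\alpha)\, \mathrm d s$ with $U_s \coloneqq \{f_0 > s\}$ measurable, and analogously $g_0(\beta) = \int_0^1 \mathbf{1}_{V_t}(\beta)\, \mathrm d t$. Applying Fubini (legitimate because $|f_0 g_0 K_\nu| \leq |W| + |\widehat W_\nu|$ is integrable on $\mathbb{R}_+^2$),
\begin{align*}
    \int_{\mathbb{R}_+^2} f_0(\alpha) g_0(\beta) K_\nu(\alpha,\beta)\, \mathrm d\alpha \,\mathrm d\beta
    = \int_0^1 \!\!\int_0^1 \int_{U_s \times V_t} K_\nu(\alpha,\beta)\,\mathrm d\alpha\,\mathrm d\beta \, \mathrm d s\,\mathrm d t \, .
\end{align*}
Each inner integral is bounded in absolute value by $\lVert K_\nu \rVert_\square$ by the very definition of the cut norm, so this term is at most $\lVert K_\nu \rVert_\square$.

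Summing the four sign combinations yields the uniform bound
\begin{align*}
    \sup_{-1 \leq g \leq 1} \int_{\mathbb{R}_+} \left| \int_{\mathbb{R}_+} K_\nu(\alpha, \beta) g(\beta)\, \mathrm d\beta\right| \mathrm d\alpha \leq 4 \lVert W - \widehat{W}_\nu \rVert_{\square} \, ,
\end{align*}
and the right-hand side tends to zero by Assumption \ref{as:graphex_conv} a). The only subtlety, which I flag as the main obstacle, is justifying Fubini and the measurability of the superlevel sets on the unbounded domain $\mathbb{R}_+$: the domain has infinite measure, so one cannot appeal to the textbook graphon argument verbatim. This is resolved by the global integrability of $W$ and $\widehat{W}_\nu$ (guaranteed by the graphex definition and the construction of the stretched canonical graphon), which dominates every integrand uniformly in $f, g, s, t$.
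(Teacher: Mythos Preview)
Your proof is correct and follows essentially the same route as the paper: bilinearise the absolute value via $f = \operatorname{sgn}(h_g)$, split $f$ and $g$ into their nonnegative parts to obtain four terms, and bound each by $\lVert W - \widehat W_\nu\rVert_\square$ to arrive at the constant $4$. The only cosmetic difference is that the paper cites \cite[equation~2.5]{janson2016graphons} for the passage from $[0,1]$-valued test functions to the cut norm, whereas you spell it out via the layer-cake representation; your version is thus slightly more self-contained and also makes explicit the integrability justification for Fubini on the unbounded domain $\mathbb{R}_+$, which the paper leaves implicit.
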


\begin{proof}[Proof of Lemma \ref{lem:impl_conv_op}]
    To show the statement, we adapt a proof strategy in \citet[proof of Lemma 8.11]{lovasz2012large} to our case. We reformulate 
    \begin{align*}
        &\sup_{-1 \leq g \leq 1} \int_{\mathbb{R}_+} \left\vert \int_{\mathbb{R}_+} \left( W (\alpha, \beta) - \widehat{W}_\nu (\alpha, \beta) \right) g(\beta) \, \mathrm d \beta \right\vert \, \mathrm d \alpha \\
        &\qquad = \sup_{-1 \leq f, g \leq 1} \int_{\mathbb{R}_+}  \int_{\mathbb{R}_+} \left( W (\alpha, \beta) - \widehat{W}_\nu (\alpha, \beta) \right) f (\alpha) g(\beta) \, \mathrm d \beta \, \mathrm d \alpha \\
        &\qquad = \sup_{0 \leq f, f', g, g' \leq 1} \int_{\mathbb{R}_+}  \int_{\mathbb{R}_+} \left( W (\alpha, \beta) - \widehat{W}_\nu (\alpha, \beta) \right) (f (\alpha) - f'(\alpha)) (g(\beta) - g'(\beta)) \, \mathrm d \beta \, \mathrm d \alpha \\
        &\qquad = \sup_{0 \leq f, f', g, g' \leq 1} \int_{\mathbb{R}_+}  \int_{\mathbb{R}_+} \left( W (\alpha, \beta) - \widehat{W}_\nu (\alpha, \beta) \right) f (\alpha) g(\beta) \, \mathrm d \beta \, \mathrm d \alpha \\
        &\qquad \qquad  -  \int_{\mathbb{R}_+}  \int_{\mathbb{R}_+} \left( W (\alpha, \beta) - \widehat{W}_\nu (\alpha, \beta) \right) f (\alpha) g'(\beta) \, \mathrm d \beta \, \mathrm d \alpha \\
        &\qquad \qquad  - \int_{\mathbb{R}_+}  \int_{\mathbb{R}_+} \left( W (\alpha, \beta) - \widehat{W}_\nu (\alpha, \beta) \right) f' (\alpha) g(\beta) \, \mathrm d \beta \, \mathrm d \alpha \\
        &\qquad \qquad  + \int_{\mathbb{R}_+}  \int_{\mathbb{R}_+} \left( W (\alpha, \beta) - \widehat{W}_\nu (\alpha, \beta) \right) f' (\alpha) g'(\beta) \, \mathrm d \beta \, \mathrm d \alpha \\
        &\qquad \leq 4 \Vert W - \widehat{W}_\nu \Vert_{\square} = 4 \sup_{U,V} \left\vert \int_{U \times V} W (\alpha, \beta) - \widehat{W}_\nu (\alpha, \beta) \, \mathrm d \alpha \mathrm d \beta \right\vert \to 0 \quad \textrm{as} \quad \nu \to \infty
    \end{align*}
    where the inequality in the last line follows from \citet[equation 2.5]{janson2016graphons} and the convergence is a consequence of Assumption \ref{as:graphex_conv}. 
\end{proof}

We continue with proving the first theoretical statement from the main paper.

\begin{proof}[Proof of Lemma \ref{lem:core_eq_existence}]
Since in the limiting mean field core system we 'cut off' the graphex at parameter $\alpha^*$, we can think of the remaining part of the graphex as a standard graphon, where we point to \citet[Section 3.1 and Theorem 5.6]{veitch2015class} for a detailed argument. Thus, the limiting core system can be interpreted as a standard graphon mean field game which means that the existence of a mean field core equilibrium follows from \citet[Theorem 1]{cui2021learning}.
\end{proof}

\section{Proof of Theorem \ref{thm:core_mf_convergence}}\label{app:proof_mf_core}
We prove the claim by induction over $t=0$. The induction start follows from a law of large numbers argument similar to the one for the first term below. For the induction, we leverage the inequality 
 \begin{align*}
    \E \left[ \left| \hat{\boldsymbol \mu}^{\nu}_{t+1}(f, \alpha^*) - \boldsymbol \mu_{t+1}^{\infty}(f, \alpha^*) \right| \right]
    &\leq \E \left[ \left| \hat{\boldsymbol \mu}^{\nu}_{t+1}(f, \alpha^*) - \hat{\boldsymbol \mu}^{\nu}_{t} P_{t, \hat{\boldsymbol \mu}, \widehat{W}}^{\hat{\boldsymbol \pi}, \infty}(f, \alpha^*) \right| \right] \\
    &\quad + \E \left[ \left| \hat{\boldsymbol \mu}^{\nu}_{t} P_{t, \hat{\boldsymbol \mu}, \widehat{W}}^{\hat{\boldsymbol \pi}, \infty}(f, \alpha^*) - \hat{\boldsymbol \mu}^{\nu}_{t} P_{t, \hat{\boldsymbol \mu}, W}^{\hat{\boldsymbol \pi}, \infty}(f, \alpha^*) \right| \right] \\
    &\quad + \E \left[ \left| \hat{\boldsymbol \mu}^{\nu}_{t} P_{t, \hat{\boldsymbol \mu}, W}^{\hat{\boldsymbol \pi}, \infty}(f, \alpha^*) - \hat{\boldsymbol \mu}^{\nu}_{t} P_{t, \hat{\boldsymbol \mu}, W}^{\boldsymbol \pi, \infty}(f, \alpha^*) \right| \right] \\
    &\quad + \E \left[ \left| \hat{\boldsymbol \mu}^{\nu}_{t} P_{t, \hat{\boldsymbol \mu}, W}^{\boldsymbol \pi, \infty}(f, \alpha^*) - \hat{\boldsymbol \mu}^{\nu}_{t} P_{t, \boldsymbol \mu, W}^{\boldsymbol \pi, \infty}(f, \alpha^*) \right| \right] \\
    &\quad + \E \left[ \left| \hat{\boldsymbol \mu}^{\nu}_{t} P_{t, \boldsymbol \mu, W}^{\boldsymbol \pi, \infty}(f, \alpha^*) - \boldsymbol \mu_{t+1}^{\infty}(f, \alpha^*) \right| \right] \, .
\end{align*}
In the following, we upper bound each of the five terms separately.

\paragraph{First term.}

We start with the first term
\begin{align*}
    &\E \left[ \left| \hat{\boldsymbol \mu}^{\nu}_{t+1}(f, \alpha^*) - \hat{\boldsymbol \mu}^{\nu}_{t} P_{t, \hat{\boldsymbol \mu}, \widehat{W}}^{\hat{\boldsymbol \pi}, \infty}(f, \alpha^*) \right| \right] \\
    &\quad = \frac{1}{\alpha^*} \cdot \E \left[ \left| \int_0^{\alpha^*} \sum_{x' \in \mathcal{X}} f (x', \alpha) \hat{\mu}^{\nu}_{\alpha, t + 1} (x') \, \mathrm d \alpha  - \int_0^{\alpha^*} \sum_{x' \in \mathcal{X}} \sum_{x \in \mathcal{X}} \hat{\mu}^{\nu}_{\alpha, t} (x) \right.\right. \\
    &\qquad\qquad\qquad \left.\left. \cdot \sum_{u \in \mathcal{U}} \hat{\pi}_{\alpha, t} \left(u \mid x \right) \cdot P \left( x' \mid x, u, \frac{1}{\xi_{\widehat W, \alpha^*} (\alpha)} \int_{0}^{\alpha^*} \widehat{W} (\alpha, \beta) \hat{\mu}_{\beta, t}^{\nu} \, \mathrm d \beta \right) f (x', \alpha) \, \mathrm d \alpha\right| \right] \\
    &\quad = \frac{\varepsilon_{\alpha^*} L_P}{\alpha^*} + \frac{1}{\alpha^*} \cdot \E \left[ \left| \int_0^{\alpha^*} \sum_{x' \in \mathcal{X}} f (x', \alpha) \hat{\mu}^{\nu}_{\alpha, t + 1} (x') \, \mathrm d \alpha - \int_0^{\alpha^*} \sum_{x' \in \mathcal{X}} \sum_{x \in \mathcal{X}} \hat{\mu}^{\nu}_{\alpha, t} (x) \right.\right. \\
    &\qquad\qquad\qquad \left.\left. \cdot \sum_{u \in \mathcal{U}} \hat{\pi}_{\alpha, t} \left(u \mid x \right) \cdot P \left( x' \mid x, u, \frac{1}{\xi_{\widehat{W}} (\alpha)}\int_{\mathbb{R}_+} \widehat{W} (\alpha, \beta) \hat{\mu}_{\beta, t}^{\nu} \, \mathrm d \beta \right) f (x', \alpha) \, \mathrm d \alpha\right| \right] \\
    &\quad = \frac{\varepsilon_{\alpha^*} L_P}{\alpha^*} + \frac{1}{\alpha^*} \cdot \E \left[ \left| \sum_{\substack{i \in  V_\nu \\ i < \alpha^* \sqrt{2 \vert E_\nu \vert}}} \left( \int_{\left(\frac{i-1}{\sqrt{2 \vert E_\nu \vert}}, \frac{i}{\sqrt{2 \vert E_\nu \vert}} \right]} f(X^i_{t+1}, \alpha) \, \mathrm d\alpha \right. \right. \right.\\
    &\qquad \qquad \qquad \qquad \qquad \qquad \qquad \qquad - \left.\left.\left. \E \left[ \int_{\left(\frac{i-1}{\sqrt{2 \vert E_\nu \vert}}, \frac{i}{\sqrt{2 \vert E_\nu \vert}} \right]} f(X^i_{t+1}, \alpha) \, \mathrm d\alpha \innermid \mathbf X_t \right] \right)\right| \right] \\
    &\quad \leq \frac{\varepsilon_{\alpha^*} L_P}{\alpha^*} + \frac{1}{\alpha^*} \cdot \E \left[ \sum_{\substack{i \in  V_\nu \\ i < \alpha^* \sqrt{2 \vert E_\nu \vert}}} \left(\int_{\left(\frac{i-1}{\sqrt{2 \vert E_\nu \vert}}, \frac{i}{\sqrt{2 \vert E_\nu \vert}} \right]} f(X^i_{t+1}, \alpha) \, \mathrm d\alpha \right. \right. \\
    &\qquad \qquad \qquad \qquad \qquad \qquad \qquad \qquad - \left.\left. \E \left[ \int_{\left(\frac{i-1}{\sqrt{2 \vert E_\nu \vert}}, \frac{i}{\sqrt{2 \vert E_\nu \vert}} \right]} f(X^i_{t+1}, \alpha) \, \mathrm d\alpha \innermid \mathbf X_t \right]\right)^2 \right]^{\frac{1}{2}} \\
    &\quad \leq \frac{\varepsilon_{\alpha^*} L_P}{\alpha^*} + \frac{1}{\alpha^*} \cdot \left( \sum_{\substack{i \in  V_\nu \\ i < \alpha^* \sqrt{2 \vert E_\nu \vert}}} \frac{\max_{(x, \alpha) \in \mathcal{X} \times \left(\frac{i-1}{\sqrt{2 \vert E_\nu \vert}}, \frac{i}{\sqrt{2 \vert E_\nu \vert}} \right]} f(x, \alpha)^2}{\vert E_\nu \vert} \right)^{\frac{1}{2}} \\
    &\quad \leq \frac{\varepsilon_{\alpha^*} L_P}{\alpha^*} + \underbrace{\frac{1}{\alpha^*} \cdot \left(  \frac{M_f^2 \alpha^* \sqrt{2 \vert E_\nu \vert}}{\vert E_\nu \vert} \right)^{\frac{1}{2}}}_{\to 0  \textrm{ for } \nu \to \infty} \leq \varepsilon \, ,
\end{align*}
for $\alpha^*$ and $\nu$ large enough, where we exploited the fact that $\{ X^i_{t+1} \}_{i \in  V_\nu}$ are independent if conditioned on $\mathbf X_t \equiv \{ X^i_{t} \}_{i \in  V_\nu}$.

\paragraph{Second term.}
For the second term we have
\begin{align*}
    &\E \left[ \left| \hat{\boldsymbol \mu}^{\nu}_{t} P_{t, \hat{\boldsymbol \mu}, \widehat{W}}^{\hat{\boldsymbol \pi}, \infty}(f, \alpha^*) - \hat{\boldsymbol \mu}^{\nu}_{t} P_{t, \hat{\boldsymbol \mu}, W}^{\hat{\boldsymbol \pi}, \infty}(f, \alpha^*) \right| \right] \\
    &= \frac{1}{\alpha^*} \E \left[ \left| \int_0^{\alpha^*} \sum_{x, x' \in \mathcal{X}} \sum_{u \in \mathcal{U}} \hat{\mu}^{\nu}_{\alpha, t} (x) \hat{\pi}_{\alpha, t} \left(u \mid x \right) \right.\right. \\
    &\qquad\qquad\qquad\qquad\qquad\qquad \cdot P \left(x' \mid x, u,\frac{1}{\xi_{\widehat W, \alpha^*} (\alpha)} \int_{0}^{\alpha^*} \widehat{W} (\alpha, \beta) \hat{\mu}_{\beta, t}^{\nu} \, \mathrm d \beta \right) f (x', \alpha) \\
    & \left.\left. - \sum_{x, x' \in \mathcal{X}} \hat{\mu}^{\nu}_{\alpha, t} (x) \sum_{u \in \mathcal{U}} \hat{\pi}_{\alpha, t} \left(u \mid x \right) P \left(x' \mid x, u, \frac{1}{\xi_{W, \alpha^*} (\alpha)} \int_{0}^{\alpha^*} W (\alpha, \beta) \hat{\mu}_{\beta, t}^{\nu} \, \mathrm d \beta \right) f (x', \alpha) \, \mathrm d \alpha \right| \right] \\
    &\leq \frac{L_P (1+ o(1))}{\alpha^* \xi_{W, \alpha^*} (\alpha^*)} \E \left[ \int_0^{\alpha^*} \left\Vert \int_{0}^{\alpha^*} \widehat{W} (\alpha, \beta) \hat{\mu}_{\beta, t}^{\nu} \, \mathrm d \beta - \int_{0}^{\alpha^*} W (\alpha, \beta) \hat{\mu}_{\beta, t}^{\nu} \, \mathrm d \beta \right\Vert \sum_{x' \in \mathcal{X}} f (x', \alpha) \, \mathrm d \alpha  \right] \\
    &= \frac{L_P \vert \mathcal{X} \vert M_f (1+o(1))}{\alpha^* \xi_{W, \alpha^*} (\alpha^*)} \E \left[ \int_0^{\alpha^*} \left\Vert \int_0^{\alpha^*} \widehat{W} (\alpha, \beta) \hat{\mu}_{\beta, t}^{\nu} \, \mathrm d \beta -  \int_0^{\alpha^*} W (\alpha, \beta) \hat{\mu}_{\beta, t}^{\nu} \, \mathrm d \beta \right\Vert \, \mathrm d \alpha  \right] \\
    &\leq \frac{L_P \vert \mathcal{X} \vert M_f (1+o(1))}{\alpha^* \xi_{W, \alpha^*} (\alpha^*)} \max_{x \in \mathcal{X}} \, \E \left[ \int_0^{\alpha^*} \left\vert \int_0^{\alpha^*} \widehat{W} (\alpha, \beta) \hat{\mu}_{\beta, t}^{\nu} (x) -  W (\alpha, \beta) \hat{\mu}_{\beta, t}^{\nu} (x) \, \mathrm d \beta \right\vert \, \mathrm d \alpha  \right]
\end{align*}
which converges to $0$ as $\nu \to \infty$, where the convergence follows from Assumption \ref{as:graphex_conv} and $\hat{\mu}_{\beta, t}^{\nu}$ being bounded by $1$ by construction. 

\paragraph{Third term.}

The third term converges to $0$ by the following argument:
\begin{align*}
    &\E \left[ \left| \hat{\boldsymbol \mu}^{\nu}_{t} P_{t, \hat{\boldsymbol \mu}, W}^{\hat{\boldsymbol \pi}, \infty}(f, \alpha^*) - \hat{\boldsymbol \mu}^{\nu}_{t} P_{t, \hat{\boldsymbol \mu}, W}^{\boldsymbol \pi, \infty}(f, \alpha^*) \right| \right] \\
    &= \frac{1}{\alpha^*} \E \left[ \left| \int_0^{\alpha^*} \sum_{x, x' \in \mathcal{X}} \sum_{u \in \mathcal{U}} \hat{\mu}^{\nu}_{\alpha, t} (x) \hat{\pi}_{\alpha, t} \left(u \mid x \right) \right.\right. \\
    &\qquad\qquad\qquad\qquad\qquad\qquad \cdot P \left(x' \mid x, u, \frac{1}{\xi_{W, \alpha^*} (\alpha)} \int_{0}^{\alpha^*} W (\alpha, \beta) \hat{\mu}_{\beta, t}^\nu \, \mathrm d \beta \right) f (x', \alpha) \\
    &- \left.\left. \sum_{x, x' \in \mathcal{X}} \hat{\mu}^{\nu}_{\alpha, t} (x) \sum_{u \in \mathcal{U}} \pi_{\alpha, t}^\infty \left(u \mid x \right) P \left(x' \mid x, u, \frac{1}{\xi_{W, \alpha^*} (\alpha)} \int_{0}^{\alpha^*} W (\alpha, \beta) \hat{\mu}_{\beta, t}^\nu \, \mathrm d \beta \right) f (x', \alpha) \, \mathrm d \alpha \right| \right] \\
    &\leq \frac{M_f \vert \mathcal{X} \vert \vert \mathcal{U} \vert}{\alpha^*} \max_{(x,u) \in \mathcal{X} \times \mathcal{U}} \E \left[ \int_0^{\alpha^*} \left\vert \hat{\pi}_{\alpha, t} \left(u \mid x \right) - \pi_{\alpha, t}^\infty \left(u \mid x \right) \right\vert \, \mathrm d \alpha \right] \\
    &= \frac{M_f \vert \mathcal{X} \vert \vert \mathcal{U} \vert}{\alpha^*}\max_{(x,u) \in \mathcal{X} \times \mathcal{U}} \E \left[ \sum_{\substack{j \in  V_\nu, j \neq i \\ j < \alpha^* \sqrt{2 \vert E_\nu \vert}}} \int_{\frac{j-1}{\sqrt{2 \vert E_\nu \vert}}}^{\frac{j}{\sqrt{2 \vert E_\nu \vert}}} \left\vert \hat{\pi}_{\alpha, t} \left(u \mid x \right) - \pi_{\alpha, t}^\infty \left(u \mid x \right) \right\vert \, \mathrm d \alpha \right] \\
    &\quad + \frac{M_f \vert \mathcal{X} \vert \vert \mathcal{U} \vert}{\alpha^*} \max_{(x,u) \in \mathcal{X} \times \mathcal{U}} \E \left[ \int_{\frac{i-1}{\sqrt{2 \vert E_\nu \vert}}}^{\frac{i}{\sqrt{2 \vert E_\nu \vert}}} \left\vert \hat{\pi}_{\alpha, t} \left(u \mid x \right) - \pi_{\alpha, t}^\infty \left(u \mid x \right) \right\vert  \, \mathrm d \alpha \right] \\
    &\leq \frac{M_f \vert \mathcal{X} \vert \vert \mathcal{U} \vert}{\alpha^*} \cdot \left( \frac{\alpha^*}{\sqrt{2 \vert E_\nu \vert}} + \frac{2 D_\pi}{\sqrt{2 \vert E_\nu \vert}} + \frac{2}{\sqrt{2 \vert E_\nu \vert}}\right) \to 0 \qquad \textrm{ as } \nu \to \infty.
\end{align*}
Here, $D_\pi$ denotes the finite number of discontinuities of $\pi$.

\paragraph{Fourth term.}
We have
\begin{align*}
    & \E \left[ \left| \hat{\boldsymbol \mu}^{\nu}_{t} P_{t, \hat{\boldsymbol \mu}, W}^{\boldsymbol \pi, \infty}(f, \alpha^*) - \hat{\boldsymbol \mu}^{\nu}_{t} P_{t, \boldsymbol \mu, W}^{\boldsymbol \pi, \infty}(f, \alpha^*) \right| \right] \\
    &= \frac{1}{\alpha^*} \E \left[ \left| \int_0^{\alpha^*} \sum_{x, x' \in \mathcal{X}} \hat{\mu}^{\nu}_{\alpha, t} (x) \sum_{u \in \mathcal{U}} \pi_{\alpha, t}^\infty \left(u \mid x \right) \right.\right. \\
    &\qquad\qquad\qquad\qquad\qquad\qquad \cdot P \left(x' \mid x, u, \frac{1}{\xi_{W, \alpha^*} (\alpha)} \int_{0}^{\alpha^*} W (\alpha, \beta) \hat{\mu}_{\beta, t}^\nu \, \mathrm d \beta \right) f (x', \alpha) \\
    &\quad - \left.\left. \sum_{\substack{x, x' \in \mathcal{X} \\ u \in \mathcal{U}}} \hat{\mu}^{\nu}_{\alpha, t} (x) \pi_{\alpha, t}^\infty \left(u \mid x \right) P \left(x' \mid x, u, \frac{1}{\xi_{W, \alpha^*} (\alpha)} \int_{0}^{\alpha^*} W (\alpha, \beta) \mu_{\beta, t} \, \mathrm d \beta \right) f (x', \alpha) \, \mathrm d \alpha \right| \right] \\
    &\leq  \frac{M_f L_P \vert \mathcal{X} \vert}{\alpha^* \xi_{W, \alpha^*} (\alpha^*)} \E \left[ \int_0^{\alpha^*} \sum_{x' \in \mathcal{X}} \left\vert \int_0^{\alpha^*} W (\alpha, \beta) \hat{\mu}_{\beta, t}^\nu (x') \, \mathrm d \beta - \int_0^{\alpha^*} W (\alpha, \beta) \mu_{\beta, t} (x') \, \mathrm d \beta \right\vert \, \mathrm d \alpha  \right] \\
    &\leq \frac{M_f L_P \vert \mathcal{X} \vert^2}{\alpha^* \xi_{W, \alpha^*} (\alpha^*)} \max_{x' \in X} \int_0^{\alpha^*} \E \left[ \left\vert \int_0^{\alpha^*} W (\alpha, \beta) \hat{\mu}_{\beta, t}^\nu (x') \, \mathrm d \beta - \int_0^{\alpha^*} W (\alpha, \beta) \mu_{\beta, t} (x') \, \mathrm d \beta \right\vert \right] \, \mathrm d \alpha 
\end{align*}
which goes to zero as $\nu \to \infty$. This convergence follows from the induction assumption. To see this, we consider the functions $f'_{x', \alpha} (x, \beta) \coloneqq W(\alpha, \beta) \cdot \boldsymbol{1}_{\{ x = x'\}}$ which, combined with the induction assumption and Fubini's theorem, yields
\begin{align*}
    &\frac{1}{\alpha^*} \int_0^{\alpha^*} \E \left[ \left\vert \int_0^{\alpha^*} W (\alpha, \beta) \hat{\mu}_{\beta, t}^\nu (x') \, \mathrm d \beta - \int_0^{\alpha^*} W (\alpha, \beta) \mu_{\beta, t} (x') \, \mathrm d \beta \right\vert \right] \, \mathrm d \alpha \\
    &\qquad \qquad \qquad \qquad = \int_0^{\alpha^*} \E \left[ \left| \hat{\boldsymbol \mu}^{\nu}_t(f'_{x', \alpha}, \alpha^*) - \boldsymbol{\mu}_{t}^{\infty} (f'_{x', \alpha}, \alpha^*) \right| \right] \, \mathrm d \alpha \to 0 \quad \textrm{ as } \nu \to \infty \, .
\end{align*}

\paragraph{Fifth term.}
Finally, for the fifth term we know that
\begin{align*}
    &\E \left[ \left| \hat{\boldsymbol \mu}^{\nu}_{t} P_{t, \boldsymbol \mu, W}^{\boldsymbol \pi, \infty}(f, \alpha^*) - \boldsymbol \mu_{t+1}^{\infty}(f, \alpha^*) \right| \right] \\
    &= \E \left[ \left| \frac{1}{\alpha^*} \int_0^{\alpha^*} \sum_{\substack{x, x' \in \mathcal{X} \\ u \in \mathcal{U}}} \hat{\mu}^{\nu}_{\alpha, t} (x) \pi_{\alpha, t}^\infty \left(u \mid x \right) \right.\right. \\
    &\qquad\qquad\qquad\qquad\qquad\qquad \cdot P \left(x' \mid x, u, \frac{1}{\xi_{W, \alpha^*} (\alpha)} \int_0^{\alpha^*} W (\alpha, \beta) \mu_{\beta, t} \, \mathrm d \beta \right) f (x', \alpha) \, \mathrm d \alpha \\
    & \quad - \frac{1}{\alpha^*} \int_0^{\alpha^*} \sum_{\substack{x, x' \in \mathcal{X} \\ u \in \mathcal{U}}} \mu_{\alpha, t} (x) \pi_{\alpha, t}^\infty \left(u \mid x \right) \\
    &\qquad\qquad\qquad\qquad\qquad\qquad \left. \left. \cdot P \left(x' \mid x, u, \frac{1}{\xi_{W, \alpha^*} (\alpha)} \int_0^{\alpha^*} W (\alpha, \beta) \mu_{\beta, t} \, \mathrm d \beta \right) f (x', \alpha) \, \mathrm d \alpha \right| \right] \\
    &\leq \frac{\alpha^*}{\xi_{W, \alpha^*} (\alpha^*)} \E \left[ \left| \frac{1}{\alpha^*} \int_0^{\alpha^*} \sum_{x \in \mathcal{X}} \hat{\mu}^{\nu}_{\alpha, t} (x)  f' (x, \alpha) \, \mathrm d \alpha
    - \frac{1}{\alpha^*} \int_0^{\alpha^*} \sum_{x \in \mathcal{X}} \mu_{\alpha, t} (x)  f' (x, \alpha) \, \mathrm d \alpha \right| \right] \\
    &= \frac{\alpha^*}{\xi_{W, \alpha^*} (\alpha^*)} \E \left[ \left| \hat{\boldsymbol \mu}^{\nu}_t(f', \alpha^*) - \boldsymbol{\mu}_{t}^{\infty} (f', \alpha^*) \right| \right] \leq \varepsilon
\end{align*}
where we define the function 
\begin{align*}
   f' (x, \alpha) \coloneqq \sum_{x' \in \mathcal{X}} \sum_{u \in \mathcal{U}} \pi_{\alpha, t}^\infty \left(u \mid x \right) \cdot P \left(x' \mid x, u, \frac{1}{\alpha^*} \int_0^{\alpha^*} W (\alpha, \beta) \mu_t^{\beta} \, \mathrm d \beta \right) f (x', \alpha)
\end{align*}
and apply the induction assumption. This concludes the proof by induction.

\section{Calculation of the Neighborhood Probability Distribution} \label{app:neigh_prob_calc}

\paragraph{Binary state space case.}
Consider some uniformly, at random picked node $v_0$ with degree $k$, latent parameter $\alpha$ and state $x_0$. Furthermore, denote by $x_1, \ldots, x_{k}$ the states of its $k$ neighbors. Let $j$ be the number of neighbors in the first state $s_1$.
\begin{align*}
    &P_{\boldsymbol{\pi}, \boldsymbol{\mu}}^\nu \left( \mathbb{G}_{0, t + 1}^{\nu, k}  \left( \hat{\boldsymbol \mu}^{\nu}_{t} \right) = j, x_{0, t+1} = x \right) \\
    &= \sum_{x' \in \mathcal{X}} \sum_{j' = 0}^k P_{\boldsymbol{\pi}, \boldsymbol{\mu}}^\nu \left( \mathbb{G}_{0, t+1}^{\nu, k}  \left( \hat{\boldsymbol \mu}^{\nu}_{t} \right) = j, \mathbb{G}_{0, t}^{\nu, k} = j', x_{0,t} = x', x_{0, t+1} = x \right)\\
    &= \sum_{x' \in \mathcal{X}} \sum_{j' = 0}^k P_{\boldsymbol{\pi}, \boldsymbol{\mu}}^\nu \left( \mathbb{G}_{0, t}^{\nu, k}  \left( \hat{\boldsymbol \mu}^{\nu}_{t} \right) = j', x_{0, t} = x' \right) \cdot P_{\boldsymbol{\pi}, \boldsymbol{\mu}}^\nu \left( x_{0, t+1} = x  \mid \mathbb{G}_{0, t}^{\nu, k}  \left( \hat{\boldsymbol \mu}^{\nu}_{t} \right) = j', x_{0, t} = x' \right) \\
    &\qquad \qquad \qquad \qquad \cdot P_{\boldsymbol{\pi}}^\nu \left( \mathbb{G}_{0, t + 1}^{\nu, k}  \left( \hat{\boldsymbol \mu}^{\nu}_{t} \right) = j \mid x_{0, t+1} = x,  \mathbb{G}_{0, t}^{\nu, k} = j', x_{0, t} = x' \right) \\
    &= \sum_{x' \in \mathcal{X}} \sum_{j' = 0}^k P_{\boldsymbol{\pi}, \boldsymbol{\mu}}^\nu \left( \mathbb{G}_{0, t}^{\nu, k} \left( \hat{\boldsymbol \mu}^{\nu}_{t} \right) = j', x_{0, t} = x' \right) \sum_{u \in \mathcal{U}} \pi_{0, t}^k \left(u \mid x' \right) \cdot P \left( x \mid x', u, \frac{j'}{k} \right) \\
    &\qquad \qquad \qquad \qquad \cdot P_{\boldsymbol{\pi}}^\nu \left( \mathbb{G}_{0, t + 1}^{\nu, k} \left( \hat{\boldsymbol \mu}^{\nu}_{t} \right)= j \mid \mathbb{G}_{0, t}^{\nu, k}  \left( \hat{\boldsymbol \mu}^{\nu}_{t} \right) = j', x_{0, t} = x' \right) \, .
\end{align*}
Now, we consider
\begin{align*}
    &P_{\boldsymbol{\pi}}^\nu \left( \mathbb{G}_{0, t + 1}^{\nu, k}  \left( \hat{\boldsymbol \mu}^{\nu}_{t} \right) = j \mid \mathbb{G}_{0, t}^{\nu, k}  \left( \hat{\boldsymbol \mu}^{\nu}_{t} \right) = j', x_{0, t} = x' \right) \\
    &= P_{\boldsymbol{\pi}}^\nu \left( \mathbb{G}_{0, t + 1}^{\nu, k}  \left( \hat{\boldsymbol \mu}^{\nu}_{t} \right) = j \mid \mathbb{G}_{0, t}^{\nu, k}  \left( \hat{\boldsymbol \mu}^{\nu}_{t} \right) = j' \right) + o(1) \\
    &= o(1) + \int_{\mathbb{R}_+^k} P_{\boldsymbol{\pi}}^\nu \left( \mathbb{G}_{0, t + 1}^{\nu, k}  \left( \hat{\boldsymbol \mu}^{\nu}_{t} \right) = j \mid \mathbb{G}_{0, t}^{\nu, k}  \left( \hat{\boldsymbol \mu}^{\nu}_{t} \right) = j', \boldsymbol{\beta} = \beta \right) f_{\beta \vert \mathbb G_{0, t}^{\nu, k} = j} (\beta) \, \mathrm d \beta \\
    &= o(1) + \int_{\mathbb{R}_+^k} f_{\beta \vert \mathbb G_{0, t}^{\nu, k} = j} (\beta) \sum_{(x_1, \ldots, x_{k}) \in \mathcal{X}^{k}_j} \prod_{i=1}^{k}\sum_{u \in \mathcal{U}} \pi_{i, t} \left(u \mid x_i' \right) \cdot P \left( x_i \mid x_i', u, \mathbb{G}_{i, t}^{\nu} \left( \hat{\boldsymbol \mu}^{\nu}_{t} \right)\right) \, \mathrm d \beta \, .
\end{align*}
where $\mathcal{X}_j^k \coloneqq \{ (x_1, \ldots, x_{k}) \in \mathcal{X}^{k}: \sum_{i=1}^{k} \boldsymbol 1_{\{ x_i = s_1 \}} = j \}$ and $\boldsymbol{\beta} \coloneqq (\boldsymbol{\beta}_1, \ldots, \boldsymbol{\beta}_k)$ is a vector of $k$ random variables where each one is the latent parameter of one of the $k$ neighbors. Here, $ f_{\beta \vert \mathbb G = j}$ is the conditional density function of $\boldsymbol{\beta}$ given that the neighborhood has exactly $j$ agents in state $s_1$. Similarly, $\beta = (\beta_1, \ldots, \beta_k) \in \mathbb{R}_+^k$ is one realization of the $k$ latent neighbor parameters $\boldsymbol{\beta}$. Now, we determine
\begin{align*}
    f_{\beta \vert \mathbb G_{0, t}^{\nu, k} = j} (\beta) = \frac{f_{\beta, \mathbb G_{0, t}^{\nu, k}} (\beta, j)}{P (\mathbb G_{0, t}^{\nu, k} = j)} = \frac{f_{\beta} (\beta) P (\mathbb G_{0, t}^{\nu, k} = j \vert \boldsymbol{\beta} = \beta)}{P (\mathbb G_{0, t}^{\nu, k} = j)}
\end{align*}
where we tacitly assume that $P (\mathbb G_{0, t}^{\nu, k} = j) > 0$. The case $P (\mathbb G_{0, t}^{\nu, k} = j) = 0$ can be neglected since $P (\mathbb G_{0, t}^{\nu, k} = j) \geq P_{\boldsymbol{\pi}, \boldsymbol{\mu}}^\nu \left( \mathbb{G}_{0, t}^{\nu, k} \left( \hat{\boldsymbol \mu}^{\nu}_{t} \right) = j', x_{0, t} = x' \right)$ implies that $P_{\boldsymbol{\pi}, \boldsymbol{\mu}}^\nu \left( \mathbb{G}_{0, t}^{\nu, k} \left( \hat{\boldsymbol \mu}^{\nu}_{t} \right) = j', x_{0, t} = x' \right) = 0$ which means that the respective summands for the calculation of $P_{\boldsymbol{\pi}, \boldsymbol{\mu}}^\nu \left( \mathbb{G}_{0, t + 1}^{\nu, k}  \left( \hat{\boldsymbol \mu}^{\nu}_{t} \right) = j, x_{0, t+1} = x \right)$ are zero and hence can be neglected. We know that
\begin{align*}
    f_{\beta} (\beta) = \prod_{i=1}^k \frac{W(\alpha, \beta_i)}{\xi_W (\alpha)}
\end{align*}
and
\begin{align*}
    P (\mathbb G_{0, t}^{\nu, k} = j \vert \boldsymbol{\beta} = \beta)
    &= \E \left[ \sum_{(x_1, \ldots x_{k}) \in \mathcal{X}^{k}_j} \prod_{i=1}^k \hat{\mu}^{\nu}_{\beta_i, t} (x_i) \right] \, .
\end{align*}
We also have
\begin{align*}
    P (\mathbb G_{0, t}^{\nu, k} = j) &= \int_{\mathbb{R}_+^k} \left( \prod_{i=1}^k \frac{W(\alpha, \beta_i)}{\xi_W (\alpha)} \right)  \E \left[ \sum_{(x_1, \ldots x_{k}) \in \mathcal{X}^{k}_j} \prod_{i=1}^k \hat{\mu}^{\nu}_{\beta_i, t} (x_i) \right] \, \mathrm d \beta \\
    &= \int_{\mathbb{R}_+^k} \sum_{(x_1, \ldots x_{k}) \in \mathcal{X}^{k}_j} \E \left[  \prod_{i=1}^k \frac{W(\alpha, \beta_i)}{\xi_W (\alpha)} \hat{\mu}^{\nu}_{\beta_i, t} (x_i) \right] \, \mathrm d \beta
\end{align*}
which eventually yields
\begin{align*}
    f_{\beta \vert \mathbb G_{0, t}^{\nu, k} = j} (\beta) = \frac{ \sum_{(x_1, \ldots x_{k}) \in \mathcal{X}^{k}_j} \E \left[  \prod_{i=1}^k \frac{W(\alpha, \beta_i)}{\xi_W (\alpha)} \hat{\mu}^{\nu}_{\beta_i, t} (x_i) \right]}{\int_{\mathbb{R}_+^k} \sum_{(z_1, \ldots z_{k}) \in \mathcal{X}^{k}_j} \E \left[ \prod_{i=1}^k \frac{W(\alpha, \beta'_i)}{\xi_W (\alpha)} \hat{\mu}^{\nu}_{\beta'_i, t} (z_i) \right] \, \mathrm d \beta'} \, .
\end{align*}
Neglecting values larger than $\alpha^*$, we obtain the following formulation for the limiting system
\begin{align*}
    &P_{\boldsymbol{\pi}} \left( \mathbb{G}_{\alpha, t + 1}^{k}  \left( {\boldsymbol \mu}_{t} \right) = j \mid \mathbb{G}_{\alpha, t}^{k} \left( {\boldsymbol \mu}_{t} \right) = j', x_{\alpha, t} = x' \right) \\
    &\quad = \int_{[0, \alpha^*]^k} \frac{ \sum_{(y_1, \ldots y_{k}) \in \mathcal{X}^{k}_j} \prod_{i=1}^k \frac{W(\alpha, \beta_i)}{\xi_W (\alpha)} \mu^{\infty}_{\beta_i, t} (y_i)}{\int_{[0, \alpha^*]^k} \sum_{(z_1, \ldots z_{k}) \in \mathcal{X}^{k}_j} \prod_{i=1}^k \frac{W(\alpha, \beta'_i)}{\xi_W (\alpha)} \mu^{\infty}_{\beta'_i, t} (z_i)  \, \mathrm d \beta'} \\
    &\qquad \qquad \qquad \qquad \qquad \qquad \cdot \sum_{(x_1, \ldots x_{k}) \in \mathcal{X}^{k}_j} \prod_{i=1}^{k}  \sum_{u \in \mathcal{U}} \pi_{\beta_i, t}^\infty \left(u \mid x_i' \right) \cdot P \left( x_i \mid x_i', u, \mathbb{G}_{\beta_i, t}^{\infty} \left( {\boldsymbol \mu}_{t} \right) \right) \, \mathrm d \beta \, .
\end{align*}
The following inequality connects the two systems
\begin{align}
    & \left\vert P_{\boldsymbol{\pi}} \left( \mathbb{G}_{\alpha, t + 1}^{k}  \left( {\boldsymbol \mu}_{t} \right) = j \mid \mathbb{G}_{\alpha, t}^{k} \left( {\boldsymbol \mu}_{t} \right) = j', x_{\alpha, t} = x' \right) \right. \nonumber\\
    &\qquad\qquad\qquad\qquad \left. -  P_{\boldsymbol{\pi}}^\nu \left( \mathbb{G}_{0, t + 1}^{\nu, k}  \left( \hat{\boldsymbol \mu}^{\nu}_{t} \right) = j \mid \mathbb{G}_{0, t}^{\nu, k}  \left( \hat{\boldsymbol \mu}^{\nu}_{t} \right) = j', x_{0, t} = x' \right) \right\vert < o(1) + \varepsilon_{\alpha^*} \, . \label{ineq:neigh_probs}
\end{align}
To see this, we reformulate
\begin{align*}
    & \left\vert P_{\boldsymbol{\pi}} \left( \mathbb{G}_{\alpha, t + 1}^{k}  \left( {\boldsymbol \mu}_{t} \right) = j \mid \mathbb{G}_{\alpha, t}^{k} \left( {\boldsymbol \mu}_{t} \right) = j', x_{\alpha, t} = x' \right) \right. \\
    &\qquad\qquad\qquad\qquad \left. -  P_{\boldsymbol{\pi}}^\nu \left( \mathbb{G}_{0, t + 1}^{\nu, k}  \left( \hat{\boldsymbol \mu}^{\nu}_{t} \right) = j \mid \mathbb{G}_{0, t}^{\nu, k}  \left( \hat{\boldsymbol \mu}^{\nu}_{t} \right) = j', x_{0, t} = x' \right) \right\vert \\
    &= \left\vert \int_{[0, \alpha^*]^k} \frac{ \sum_{(y_1, \ldots y_{k}) \in \mathcal{X}^{k}_j} \prod_{i=1}^k \frac{W(\alpha, \beta_i)}{\xi_W (\alpha)} \mu^{\infty}_{\beta_i, t} (y_i) }{\int_{[0, \alpha^*]^k} \sum_{(z_1, \ldots z_{k}) \in \mathcal{X}^{k}_j} \prod_{i=1}^k \frac{W(\alpha, \beta'_i)}{\xi_W (\alpha)} \mu^{\infty}_{\beta'_i, t} (z_i)  \, \mathrm d \beta'} \right. \\
    &\qquad \qquad \qquad \qquad \qquad \qquad \cdot \sum_{(x_1, \ldots x_{k}) \in \mathcal{X}^{k}_j} \prod_{i=1}^{k}  \sum_{u \in \mathcal{U}} \pi_{\beta_i, t}^\infty \left(u \mid x_i' \right) \cdot P \left( x_i \mid x_i', u, \mathbb{G}_{\beta_i, t}^{\infty} \left( {\boldsymbol \mu}_{t} \right) \right) \, \mathrm d \beta \\
    &\qquad - \int_{\mathbb{R}_+^k} \frac{ \sum_{(y_1, \ldots y_{k}) \in \mathcal{X}^{k}_j} \E \left[  \prod_{i=1}^k \frac{W(\alpha, \beta_i)}{\xi_W (\alpha)} \hat{\mu}^{\nu}_{\beta_i, t} (y_i) \right]}{\int_{\mathbb{R}_+^k} \sum_{(z_1, \ldots z_{k}) \in \mathcal{X}^{k}_j} \E \left[ \prod_{i=1}^k \frac{W(\alpha, \beta'_i)}{\xi_W (\alpha)} \hat{\mu}^{\nu}_{\beta'_i, t} (z_i) \right] \, \mathrm d \beta'} \\
    &\qquad \qquad \qquad \qquad \qquad \qquad \left. \cdot \sum_{(x_1, \ldots x_{k}) \in \mathcal{X}^{k}_j} \prod_{i=1}^{k}\sum_{u \in \mathcal{U}} \pi_{i, t} \left(u \mid x_i' \right) \cdot P \left( x_i \mid x_i', u, \mathbb{G}_{i, t}^{\nu} \left( \hat{\boldsymbol \mu}^{\nu}_{t} \right)\right) \, \mathrm d \beta \right\vert \\
    &\leq T_{(\textrm{I})} + T_{(\textrm{II})} + T_{(\textrm{III})} + T_{(\textrm{IV})} + T_{(\textrm{V})}
\end{align*}
where $T_{(\textrm{I})}, T_{(\textrm{II})}, T_{(\textrm{III})}, T_{(\textrm{IV})}, T_{(\textrm{V})}$ denote five terms which are defined and bounded next. Start with
\begin{align*}
    &T_{(\textrm{I})} \coloneqq \left\vert \int_{[0, \alpha^*]^k} \frac{ \sum_{(y_1, \ldots y_{k}) \in \mathcal{X}^{k}_j} \prod_{i=1}^k \frac{W(\alpha, \beta_i)}{\xi_W (\alpha)} \mu^{\infty}_{\beta_i, t} (y_i) }{\int_{[0, \alpha^*]^k} \sum_{(z_1, \ldots z_{k}) \in \mathcal{X}^{k}_j} \prod_{i=1}^k \frac{W(\alpha, \beta'_i)}{\xi_W (\alpha)} \mu^{\infty}_{\beta'_i, t} (z_i)  \, \mathrm d \beta'} \right. \\
    &\qquad \qquad \qquad \qquad \qquad \qquad \cdot \sum_{(x_1, \ldots x_{k}) \in \mathcal{X}^{k}_j} \prod_{i=1}^{k}  \sum_{u \in \mathcal{U}} \pi_{\beta_i, t}^\infty \left(u \mid x_i' \right) \cdot P \left( x_i \mid x_i', u, \mathbb{G}_{\beta_i, t}^{\infty} \left( {\boldsymbol \mu}_{t} \right) \right) \, \mathrm d \beta \\
    &\qquad -  \int_{\mathbb{R}_+^k} \frac{ \sum_{(y_1, \ldots y_{k}) \in \mathcal{X}^{k}_j} \E \left[ \prod_{i=1}^k \frac{W(\alpha, \beta_i)}{\xi_W (\alpha)} \hat{\mu}^{\nu}_{\beta_i, t} (y_i) \right]}{\int_{[0, \alpha^*]^k} \sum_{(z_1, \ldots z_{k}) \in \mathcal{X}^{k}_j} \prod_{i=1}^k \frac{W(\alpha, \beta'_i)}{\xi_W (\alpha)} \mu^{\infty}_{\beta'_i, t} (z_i)  \, \mathrm d \beta'} \\
    &\qquad \qquad \qquad \qquad \qquad \qquad \left. \cdot \sum_{(x_1, \ldots x_{k}) \in \mathcal{X}^{k}_j} \prod_{i=1}^{k}  \sum_{u \in \mathcal{U}} \pi_{\beta_i, t}^\infty \left(u \mid x_i' \right) \cdot P \left( x_i \mid x_i', u, \mathbb{G}_{\beta_i, t}^{\infty} \left( {\boldsymbol \mu}_{t} \right) \right) \, \mathrm d \beta \right\vert \\
    &= \varepsilon_{\alpha^*} + O (1) \left\vert \int_{[0, \alpha^*]^k} \sum_{(y_1, \ldots y_{k}) \in \mathcal{X}^{k}_j} \left( \left( \prod_{i=1}^k \mu^{\infty}_{\beta_i, t} (y_i) \right) -  \E \left[ \prod_{i=1}^k \hat{\mu}^{\nu}_{\beta_i, t} (y_i) \right] \right) \right. \\
    &\qquad \qquad \qquad \qquad \cdot \left. \sum_{(x_1, \ldots x_{k}) \in \mathcal{X}^{k}_j} \prod_{i=1}^{k} \frac{W(\alpha, \beta_i)}{\xi_W (\alpha)} \sum_{u \in \mathcal{U}} \pi_{\beta_i, t}^\infty \left(u \mid x_i' \right) \cdot P \left( x_i \mid x_i', u, \mathbb{G}_{\beta_i, t}^{\infty} \left( {\boldsymbol \mu}_{t} \right) \right) \, \mathrm d \beta \right\vert \\
    &\leq \varepsilon_{\alpha^*} + O (1) \sum_{(y_1, \ldots y_{k}) \in \mathcal{X}^{k}_j} \sum_{(x_1, \ldots x_{k}) \in \mathcal{X}^{k}_j} \E \left[  \left\vert \int_{[0, \alpha^*]^k}  \left( \left( \prod_{i=1}^k \mu^{\infty}_{\beta_i, t} (y_i) \right) -   \prod_{i=1}^k \hat{\mu}^{\nu}_{\beta_i, t} (y_i) \right) \right.\right. \\
    &\qquad \qquad \qquad \qquad \qquad \qquad \cdot \left.\left.  \prod_{i=1}^{k} \frac{W(\alpha, \beta_i)}{\xi_W (\alpha)} \sum_{u \in \mathcal{U}} \pi_{\beta_i, t}^\infty \left(u \mid x_i' \right) \cdot P \left( x_i \mid x_i', u, \mathbb{G}_{\beta_i, t}^{\infty} \left( {\boldsymbol \mu}_{t} \right) \right) \, \mathrm d \beta \right\vert \right]\\
    &\leq \varepsilon_{\alpha^*} + k  (\alpha^*)^{k-1} O (1) \sum_{(y_1, \ldots y_{k}) \in \mathcal{X}^{k}_j} \sum_{(x_1, \ldots x_{k}) \in \mathcal{X}^{k}_j} \E \left[ \left\vert \int_{[0, \alpha^*]}  \left(  \mu^{\infty}_{\beta_1, t} (y_1) - \hat{\mu}^{\nu}_{\beta_1, t} (y_1) \right) \right.\right. \\
    &\qquad \qquad \qquad \qquad \qquad \qquad \cdot \left.\left. \frac{W(\alpha, \beta_1)}{\xi_W (\alpha)} \sum_{u \in \mathcal{U}} \pi_{\beta_1, t}^\infty \left(u \mid x_1' \right) \cdot P \left( x_1 \mid x_i', u, \mathbb{G}_{\beta_1, t}^{\infty} \left( {\boldsymbol \mu}_{t} \right) \right) \, \mathrm d \beta \right\vert \right]\\
    &= \varepsilon_{\alpha^*} + o (1)
\end{align*}
by applying Theorem \ref{thm:core_mf_convergence} in the last line. Moving on to the second term
\begin{align*}
    &T_{(\textrm{II})} \coloneqq \left\vert \int_{\mathbb{R}_+^k} \frac{ \sum_{(y_1, \ldots y_{k}) \in \mathcal{X}^{k}_j} \E \left[ \prod_{i=1}^k \frac{W(\alpha, \beta_i)}{\xi_W (\alpha)} \hat{\mu}^{\nu}_{\beta_i, t} (y_i) \right]}{\int_{[0, \alpha^*]^k} \sum_{(z_1, \ldots z_{k}) \in \mathcal{X}^{k}_j} \prod_{i=1}^k \frac{W(\alpha, \beta'_i)}{\xi_W (\alpha)} \mu^{\infty}_{\beta'_i, t} (z_i)  \, \mathrm d \beta'} \right. \\
    &\qquad \qquad \qquad \qquad \qquad \qquad \cdot \sum_{(x_1, \ldots x_{k}) \in \mathcal{X}^{k}_j} \prod_{i=1}^{k}  \sum_{u \in \mathcal{U}} \pi_{\beta_i, t}^\infty \left(u \mid x_i' \right) \cdot P \left( x_i \mid x_i', u, \mathbb{G}_{\beta_i, t}^{\infty} \left( {\boldsymbol \mu}_{t} \right) \right) \, \mathrm d \beta \\
    &\qquad - \int_{\mathbb{R}_+^k} \frac{ \sum_{(y_1, \ldots y_{k}) \in \mathcal{X}^{k}_j} \E \left[  \prod_{i=1}^k \frac{W(\alpha, \beta_i)}{\xi_W (\alpha)} \hat{\mu}^{\nu}_{\beta_i, t} (y_i) \right]}{\int_{\mathbb{R}_+^k} \sum_{(z_1, \ldots z_{k}) \in \mathcal{X}^{k}_j} \E \left[ \prod_{i=1}^k \frac{W(\alpha, \beta'_i)}{\xi_W (\alpha)} \hat{\mu}^{\nu}_{\beta'_i, t} (z_i) \right] \, \mathrm d \beta'} \\
    &\qquad \qquad \qquad \qquad \qquad \qquad \left. \cdot \sum_{(x_1, \ldots x_{k}) \in \mathcal{X}^{k}_j} \prod_{i=1}^{k}  \sum_{u \in \mathcal{U}} \pi_{\beta_i, t}^\infty \left(u \mid x_i' \right) \cdot P \left( x_i \mid x_i', u, \mathbb{G}_{\beta_i, t}^{\infty} \left( {\boldsymbol \mu}_{t} \right) \right) \, \mathrm d \beta \right\vert
\end{align*}
we note that the difference of the two denominators is bounded by
\begin{align*}
    &\left\vert \int_{[0, \alpha^*]^k} \sum_{(z_1, \ldots z_{k}) \in \mathcal{X}^{k}_j} \prod_{i=1}^k \frac{W(\alpha, \beta'_i)}{\xi_W (\alpha)} \mu^{\infty}_{\beta'_i, t} (z_i)  \, \mathrm d \beta' \right. \\
    &\qquad\qquad\qquad\qquad\qquad\qquad \left. - \int_{\mathbb{R}_+^k} \sum_{(z_1, \ldots z_{k}) \in \mathcal{X}^{k}_j} \E \left[ \prod_{i=1}^k \frac{W(\alpha, \beta'_i)}{\xi_W (\alpha)} \hat{\mu}^{\nu}_{\beta'_i, t} (z_i) \right] \, \mathrm d \beta' \right\vert \leq \varepsilon_{\alpha^*} + o (1)
\end{align*}
by an argument as for the term $T_{(\textrm{I})}$. Hence, the second term is also bounded by
\begin{align*}
    T_{(\textrm{II})} \leq \varepsilon_{\alpha^*} + o (1) \, .
\end{align*}
Finally, we are left with defining and bounding the third, fourth and fifth terms $T_{(\textrm{III})}, T_{(\textrm{IV})}, T_{(\textrm{V})}$. They are given by
\begin{align*}
    &T_{(\textrm{III})} \coloneqq \left\vert \int_{\mathbb{R}_+^k} \frac{ \sum_{(y_1, \ldots y_{k}) \in \mathcal{X}^{k}_j} \E \left[  \prod_{i=1}^k \frac{W(\alpha, \beta_i)}{\xi_W (\alpha)} \hat{\mu}^{\nu}_{\beta_i, t} (y_i) \right]}{\int_{\mathbb{R}_+^k} \sum_{(z_1, \ldots z_{k}) \in \mathcal{X}^{k}_j} \E \left[ \prod_{i=1}^k \frac{W(\alpha, \beta'_i)}{\xi_W (\alpha)} \hat{\mu}^{\nu}_{\beta'_i, t} (z_i) \right] \, \mathrm d \beta'} \right. \\
    &\qquad \qquad \qquad \qquad \qquad \qquad \cdot \sum_{(x_1, \ldots x_{k}) \in \mathcal{X}^{k}_j} \prod_{i=1}^{k}  \sum_{u \in \mathcal{U}} \pi_{\beta_i, t}^\infty \left(u \mid x_i' \right) \cdot P \left( x_i \mid x_i', u, \mathbb{G}_{\beta_i, t}^{\infty} \left( {\boldsymbol \mu}_{t} \right) \right) \, \mathrm d \beta \\
    &\qquad - \int_{\mathbb{R}_+^k} \frac{ \sum_{(y_1, \ldots y_{k}) \in \mathcal{X}^{k}_j} \E \left[  \prod_{i=1}^k \frac{W(\alpha, \beta_i)}{\xi_W (\alpha)} \hat{\mu}^{\nu}_{\beta_i, t} (y_i) \right]}{\int_{\mathbb{R}_+^k} \sum_{(z_1, \ldots z_{k}) \in \mathcal{X}^{k}_j} \E \left[ \prod_{i=1}^k \frac{W(\alpha, \beta'_i)}{\xi_W (\alpha)} \hat{\mu}^{\nu}_{\beta'_i, t} (z_i) \right] \, \mathrm d \beta'} \\
    &\qquad \qquad \qquad \qquad \qquad \qquad \left. \cdot \sum_{(x_1, \ldots x_{k}) \in \mathcal{X}^{k}_j} \prod_{i=1}^{k}  \sum_{u \in \mathcal{U}} \pi_{\beta_i, t}^\infty \left(u \mid x_i' \right) \cdot P \left( x_i \mid x_i', u, \mathbb{G}_{\beta_i, t}^{\infty} \left( \hat{\boldsymbol \mu}_{t}^\nu \right) \right) \, \mathrm d \beta \right\vert
\end{align*}
and
\begin{align*}
    &T_{(\textrm{IV})} \coloneqq \left\vert \int_{\mathbb{R}_+^k} \frac{ \sum_{(y_1, \ldots y_{k}) \in \mathcal{X}^{k}_j} \E \left[  \prod_{i=1}^k \frac{W(\alpha, \beta_i)}{\xi_W (\alpha)} \hat{\mu}^{\nu}_{\beta_i, t} (y_i) \right]}{\int_{\mathbb{R}_+^k} \sum_{(z_1, \ldots z_{k}) \in \mathcal{X}^{k}_j} \E \left[ \prod_{i=1}^k \frac{W(\alpha, \beta'_i)}{\xi_W (\alpha)} \hat{\mu}^{\nu}_{\beta'_i, t} (z_i) \right] \, \mathrm d \beta'} \right. \\
    &\qquad \qquad \qquad \qquad \qquad \qquad \cdot \sum_{(x_1, \ldots x_{k}) \in \mathcal{X}^{k}_j} \prod_{i=1}^{k}  \sum_{u \in \mathcal{U}} \pi_{\beta_i, t}^\infty \left(u \mid x_i' \right) \cdot P \left( x_i \mid x_i', u, \mathbb{G}_{\beta_i, t}^{\infty} \left( \hat{\boldsymbol \mu}_{t}^\nu \right) \right) \, \mathrm d \beta \\
    &\qquad - \int_{\mathbb{R}_+^k} \frac{ \sum_{(y_1, \ldots y_{k}) \in \mathcal{X}^{k}_j} \E \left[  \prod_{i=1}^k \frac{W(\alpha, \beta_i)}{\xi_W (\alpha)} \hat{\mu}^{\nu}_{\beta_i, t} (y_i) \right]}{\int_{\mathbb{R}_+^k} \sum_{(z_1, \ldots z_{k}) \in \mathcal{X}^{k}_j} \E \left[ \prod_{i=1}^k \frac{W(\alpha, \beta'_i)}{\xi_W (\alpha)} \hat{\mu}^{\nu}_{\beta'_i, t} (z_i) \right] \, \mathrm d \beta'} \\
    &\qquad \qquad \qquad \qquad \qquad \qquad \left. \cdot \sum_{(x_1, \ldots x_{k}) \in \mathcal{X}^{k}_j} \prod_{i=1}^{k}  \sum_{u \in \mathcal{U}} \pi_{\beta_i, t}^\infty \left(u \mid x_i' \right) \cdot P \left( x_i \mid x_i', u, \mathbb{G}_{i, t}^{\nu} \left( \hat{\boldsymbol \mu}_{t}^\nu \right) \right) \, \mathrm d \beta \right\vert
\end{align*}
and
\begin{align*}
    &T_{(\textrm{V})} \coloneqq \left\vert \int_{\mathbb{R}_+^k} \frac{ \sum_{(y_1, \ldots y_{k}) \in \mathcal{X}^{k}_j} \E \left[ \prod_{i=1}^k \frac{W(\alpha, \beta_i)}{\xi_W (\alpha)} \hat{\mu}^{\nu}_{\beta_i, t} (y_i) \right]}{\int_{\mathbb{R}_+^k} \sum_{(z_1, \ldots z_{k}) \in \mathcal{X}^{k}_j} \E \left[ \prod_{i=1}^k \frac{W(\alpha, \beta'_i)}{\xi_W (\alpha)} \hat{\mu}^{\nu}_{\beta'_i, t} (z_i) \right] \, \mathrm d \beta'} \right. \\
    &\qquad \qquad \qquad \qquad \qquad \qquad \cdot \sum_{(x_1, \ldots x_{k}) \in \mathcal{X}^{k}_j} \prod_{i=1}^{k}  \sum_{u \in \mathcal{U}} \pi_{\beta_i, t}^\infty \left(u \mid x_i' \right) \cdot P \left( x_i \mid x_i', u, \mathbb{G}_{i, t}^{\nu} \left( \hat{\boldsymbol \mu}_{t}^\nu \right) \right) \, \mathrm d \beta \\
    &\qquad - \int_{\mathbb{R}_+^k} \frac{ \sum_{(y_1, \ldots y_{k}) \in \mathcal{X}^{k}_j} \E \left[  \prod_{i=1}^k \frac{W(\alpha, \beta_i)}{\xi_W (\alpha)} \hat{\mu}^{\nu}_{\beta_i, t} (y_i) \right]}{\int_{\mathbb{R}_+^k} \sum_{(z_1, \ldots z_{k}) \in \mathcal{X}^{k}_j} \E \left[ \prod_{i=1}^k \frac{W(\alpha, \beta'_i)}{\xi_W (\alpha)} \hat{\mu}^{\nu}_{\beta'_i, t} (z_i) \right] \, \mathrm d \beta'} \\
    &\qquad \qquad \qquad \qquad \qquad \qquad \left. \cdot \sum_{(x_1, \ldots x_{k}) \in \mathcal{X}^{k}_j} \prod_{i=1}^{k}  \sum_{u \in \mathcal{U}} \pi_{\beta_i, t} \left(u \mid x_i' \right) \cdot P \left( x_i \mid x_i', u, \mathbb{G}_{i, t}^{\nu} \left( \hat{\boldsymbol \mu}_{t}^\nu \right) \right) \, \mathrm d \beta \right\vert
\end{align*}
where each of the three terms is bounded by arguments as in the proof of Theorem \ref{thm:periphery_mf_convergence}. This establishes inequality \eqref{ineq:neigh_probs}.
The just proved inequality \eqref{ineq:neigh_probs} also implies by induction that
\begin{align} \label{ineq:neigh_probs_joint}
    & \left\vert P_{\boldsymbol{\pi}, \boldsymbol{\mu}} \left( \mathbb{G}_{\alpha, t}^{k} \left( \boldsymbol \mu_{t} \right) = j, x_{\alpha, t} = x \right)  - P^\nu_{\boldsymbol{\pi}, \boldsymbol{\mu}} \left( \mathbb{G}_{0, t }^{\nu, k}  \left( \hat{\boldsymbol \mu}^{\nu}_{t} \right) = j, x_{0, t} = x \right) \right\vert < o(1) + \varepsilon_{\alpha^*}
\end{align}
where $\varepsilon_{\alpha^*}$ is an error term depending on $\alpha^*$ with $\lim_{\alpha^* \to \infty} \varepsilon_{\alpha^*} = 0$. Exploiting the symmetry of the limiting model, we can further reformulate the probability in the limiting system as
\begin{align*}
    &\sum_{(x_1, \ldots x_{k}) \in \mathcal{X}^{k}_j} \prod_{i=1}^{k}  \sum_{u \in \mathcal{U}} \pi_{\beta_i, t}^\infty \left(u \mid x_i' \right) \cdot P \left( x_i \mid x_i', u, \mathbb{G}_{\beta_i, t}^{\infty} \left( {\boldsymbol \mu}_{t} \right) \right) \\
    &\qquad=\sum_{(x_1, \ldots x_{k}) \in \mathcal{X}^{k}_j} \left( \prod_{i=1}^{j'} \sum_{u \in \mathcal{U}} \pi_{\beta_i, t}^\infty \left(u \mid s_1 \right) \cdot P \left( x_i \mid s_1, u, \mathbb{G}_{\beta_i, t}^{\infty} \left( \hat{\boldsymbol \mu}^{\nu}_{t} \right) \right) \, \mathrm d \beta_i \right) \\
    &\qquad\qquad\qquad\qquad\quad \cdot \left( \prod_{i=j'+1}^{k} \sum_{u \in \mathcal{U}} \pi_{\beta_i, t}^\infty \left(u \mid s_2 \right) \cdot P \left( x_i \mid s_2, u, \mathbb{G}_{\beta_i, t}^{\infty} \left( \hat{\boldsymbol \mu}^{\nu}_{t} \right) \right) \, \mathrm d \beta_i \right)\\
    &\qquad = \sum_{\ell = \max \{0, j + j' - k \}}^{\min \{j, j'\}} \binom{j'}{\ell} \left( \prod_{i=1}^\ell \sum_{u \in \mathcal{U}} \pi_{\beta_i, t}^\infty \left(u \mid s_1 \right) \cdot P \left( s_1 \mid s_1, u, \mathbb{G}_{\beta_i, t}^{\infty} \left( \boldsymbol \mu_{t} \right) \right) \right) \\
    &\qquad\qquad\qquad\qquad\qquad \cdot \left( \prod_{i=\ell +1}^{j'} \sum_{u \in \mathcal{U}} \pi_{\beta_i, t}^\infty \left(u \mid s_1 \right) \cdot P \left( s_2 \mid s_1, u,\mathbb{G}_{\beta_i, t}^{\infty} \left( \boldsymbol \mu_{t} \right) \right) \right)  \\
    &\qquad\qquad\qquad\qquad\qquad \cdot \binom{k-j'}{j - \ell} \left( \prod_{i = j' +1}^{j + j' - \ell} \sum_{u \in \mathcal{U}} \pi_{\beta_i, t}^\infty \left(u \mid s_2 \right) \cdot P \left( s_1 \mid s_2, u, \mathbb{G}_{\beta_i, t}^{\infty} \left( \boldsymbol \mu_{t} \right) \right) \right) \\
    &\qquad\qquad\qquad\qquad\qquad \cdot \prod_{i= j+j'-\ell+1}^k \sum_{u \in \mathcal{U}} \pi_{\beta_i, t}^\infty \left(u \mid s_2 \right) \cdot P \left( s_2 \mid s_2, u, \mathbb{G}_{\beta_i, t}^{\infty} \left( \boldsymbol \mu_{t} \right) \right) \, .
\end{align*}
Combining these findings we arrive at
\begin{align*}
    &P_{\boldsymbol{\pi}, \boldsymbol{\mu}} \left( \mathbb{G}_{\alpha, t + 1}^{k} \left( \boldsymbol \mu_{t} \right) = j, x_{\alpha, t+1} = x \right) \\
    &= \sum_{x' \in \mathcal{X}} \sum_{j' = 0}^k P_{\boldsymbol{\pi}, \boldsymbol{\mu}} \left( \mathbb{G}_{\alpha, t}^{k} \left( \boldsymbol \mu_{t} \right) = j', x_{\alpha, t} = x' \right) \sum_{u \in \mathcal{U}} \pi_{\alpha, t}^k \left(u \mid x' \right) \cdot P \left( x \mid x', u, \frac{j'}{k} \right) \\
    &\cdot \int_{[0, \alpha^*]^k} \frac{ \sum_{(y_1, \ldots y_{k}) \in \mathcal{X}^{k}_j} \prod_{i=1}^k \frac{W(\alpha, \beta_i)}{\xi_W (\alpha)} \mu^{\infty}_{\beta_i, t} (y_i) }{\int_{[0, \alpha^*]^k} \sum_{(z_1, \ldots z_{k}) \in \mathcal{X}^{k}_j} \prod_{i=1}^k \frac{W(\alpha, \beta'_i)}{\xi_W (\alpha)} \mu^{\infty}_{\beta'_i, t} (z_i)  \, \mathrm d \beta'} \\
    &\cdot \sum_{\ell = \max \{0, j + j' - k \}}^{\min \{j, j'\}} \binom{j'}{\ell} \left( \prod_{i=1}^\ell \sum_{u \in \mathcal{U}} \pi_{\beta_i, t}^\infty \left(u \mid s_1 \right) \cdot P \left( s_1 \mid s_1, u, \mathbb{G}_{\beta_i, t}^{\infty} \left( \boldsymbol \mu_{t} \right) \right) \right) \\
    &\qquad\qquad\qquad\qquad\qquad \cdot \left( \prod_{i=\ell +1}^{j'} \sum_{u \in \mathcal{U}} \pi_{\beta_i, t}^\infty \left(u \mid s_1 \right) \cdot P \left( s_2 \mid s_1, u,\mathbb{G}_{\beta_i, t}^{\infty} \left( \boldsymbol \mu_{t} \right) \right) \right)  \\
    &\qquad\qquad\qquad\qquad\qquad \cdot \binom{k-j'}{j - \ell} \left( \prod_{i = j' +1}^{j + j' - \ell} \sum_{u \in \mathcal{U}} \pi_{\beta_i, t}^\infty \left(u \mid s_2 \right) \cdot P \left( s_1 \mid s_2, u, \mathbb{G}_{\beta_i, t}^{\infty} \left( \boldsymbol \mu_{t} \right) \right) \right) \\
    &\qquad\qquad\qquad\qquad\qquad \cdot \prod_{i= j+j'-\ell+1}^k \sum_{u \in \mathcal{U}} \pi_{\beta_i, t}^\infty \left(u \mid s_2 \right) \cdot P \left( s_2 \mid s_2, u, \mathbb{G}_{\beta_i, t}^{\infty} \left( \boldsymbol \mu_{t} \right) \right) \mathrm d \beta_1 \cdots \mathrm d \beta_k
\end{align*}
for the limiting system.

\paragraph{Extension to general finite state spaces.}
Consider some uniformly, at random picked node $v_0$ with degree $k$, state $x_0$ and latent parameter $\alpha$. Furthermore, denote by $x_1, \ldots, x_{k}$ the states of its $k$ neighbors. Let $G = \left(\frac{g_1}{k}, ..., \frac{g_d}{k} \right) \in \boldsymbol{\mathcal{G}}^k$ be the neighborhood and $\mathcal{X} = \{ s_1, \ldots, s_d \}$ the finite state space, where $d$ is the number of different states, and $g_1$ is the number of neighbors in the first state $s_1$, $g_2$ in the second state $s_2$, and so on.
\begin{align*}
    &P_{\boldsymbol{\pi}, \boldsymbol{\mu}}^\nu \left( \mathbb{G}_{0, t + 1}^{\nu, k}  \left( \hat{\boldsymbol \mu}^{\nu}_{t} \right) = G, x_{0, t+1} = x \right) \\
    &= \sum_{x' \in \mathcal{X}} \sum_{G' \in \boldsymbol{\mathcal{G}}^k} P_{\boldsymbol{\pi}, \boldsymbol{\mu}}^\nu \left( \mathbb{G}_{0, t+1}^{\nu, k}  \left( \hat{\boldsymbol \mu}^{\nu}_{t} \right) = G, \mathbb{G}_{0, t}^{\nu, k} = G', x_{0,t} = x', x_{0, t+1} = x \right)\\
    &= \sum_{x' \in \mathcal{X}} \sum_{G' \in \boldsymbol{\mathcal{G}}^k} P_{\boldsymbol{\pi}, \boldsymbol{\mu}}^\nu \left( \mathbb{G}_{0, t}^{\nu, k}  \left( \hat{\boldsymbol \mu}^{\nu}_{t} \right) = G', x_{0, t} = x' \right) \\
    &\qquad \qquad \qquad \qquad \cdot P_{\boldsymbol{\pi}, \boldsymbol{\mu}}^\nu \left( x_{0, t+1} = x  \mid \mathbb{G}_{0, t}^{\nu, k}  \left( \hat{\boldsymbol \mu}^{\nu}_{t} \right) = G', x_{0, t} = x' \right) \\
    &\qquad \qquad \qquad \qquad \cdot P_{\boldsymbol{\pi}}^\nu \left( \mathbb{G}_{0, t + 1}^{\nu, k}  \left( \hat{\boldsymbol \mu}^{\nu}_{t} \right) = G \mid x_{0, t+1} = x,  \mathbb{G}_{0, t}^{\nu, k} = G', x_{0, t} = x' \right) \\
    &= \sum_{x' \in \mathcal{X}} \sum_{G' \in \boldsymbol{\mathcal{G}}^k} P_{\boldsymbol{\pi}, \boldsymbol{\mu}}^\nu \left( \mathbb{G}_{0, t}^{\nu, k} \left( \hat{\boldsymbol \mu}^{\nu}_{t} \right) = G', x_{0, t} = x' \right) \sum_{u \in \mathcal{U}} \pi_{0, t}^k \left(u \mid x' \right) \cdot P \left( x \mid x', u, G' \right) \\
    &\qquad \qquad \qquad \qquad \cdot P_{\boldsymbol{\pi}}^\nu \left( \mathbb{G}_{0, t + 1}^{\nu, k} \left( \hat{\boldsymbol \mu}^{\nu}_{t} \right)= G \mid \mathbb{G}_{0, t}^{\nu, k} \left( \hat{\boldsymbol \mu}^{\nu}_{t} \right) = G', x_{0, t} = x' \right) \, .
\end{align*}
We focus on
\begin{align*}
    &P_{\boldsymbol{\pi}}^\nu \left( \mathbb{G}_{0, t + 1}^{\nu, k}  \left( \hat{\boldsymbol \mu}^{\nu}_{t} \right) = G \mid \mathbb{G}_{0, t}^{\nu, k}  \left( \hat{\boldsymbol \mu}^{\nu}_{t} \right) = G', x_{0, t} = x' \right) \\
    &= P_{\boldsymbol{\pi}}^\nu \left( \mathbb{G}_{0, t + 1}^{\nu, k}  \left( \hat{\boldsymbol \mu}^{\nu}_{t} \right) = G \mid \mathbb{G}_{0, t}^{\nu, k}  \left( \hat{\boldsymbol \mu}^{\nu}_{t} \right) = G' \right) + o(1) \\
    &= o(1) + \int_{\mathbb{R}_+^k} P_{\boldsymbol{\pi}}^\nu \left( \mathbb{G}_{0, t + 1}^{\nu, k}  \left( \hat{\boldsymbol \mu}^{\nu}_{t} \right) = G \mid \mathbb{G}_{0, t}^{\nu, k}  \left( \hat{\boldsymbol \mu}^{\nu}_{t} \right) = G', \boldsymbol{\beta} = \beta \right) f_{\beta \vert \mathbb G_{0, t}^{\nu, k} = G} (\beta) \, \mathrm d \beta \\
    &= o(1) + \int_{\mathbb{R}_+^k} f_{\beta \vert \mathbb G_{0, t}^{\nu, k} = G} (\beta) \sum_{(x_1, \ldots x_{k}) \in \mathcal{X}^{k}_G} \prod_{i=1}^{k}\sum_{u \in \mathcal{U}} \pi_{i, t} \left(u \mid x_i' \right) \cdot P \left( x_i \mid x_i', u, \mathbb{G}_{i, t}^{\nu} \left( \hat{\boldsymbol \mu}^{\nu}_{t} \right)\right) \, \mathrm d \beta \, .
\end{align*}
where $\mathcal{X}_G^k \coloneqq \{ (x_1, \ldots x_{k}) \in \mathcal{X}^{k}: \forall j \in [d]: \sum_{i=1}^{k} \boldsymbol 1_{\{ x_i = s_j \}} = g_j \}$.
Similar to the binary state case, we can calculate
\begin{align*}
    f_{\beta \vert \mathbb G_{0, t}^{\nu, k} = G} (\beta) = \frac{ \sum_{(x_1, \ldots x_{k}) \in \mathcal{X}^{k}_G} \E \left[  \prod_{i=1}^k \frac{W(\alpha, \beta_i)}{\xi_W (\alpha)} \hat{\mu}^{\nu}_{\beta_i, t} (x_i) \right]}{\int_{\mathbb{R}_+^k} \sum_{(z_1, \ldots z_{k}) \in \mathcal{X}^{k}_G} \E \left[ \prod_{i=1}^k \frac{W(\alpha, \beta'_i)}{\xi_W (\alpha)} \hat{\mu}^{\nu}_{\beta'_i, t} (z_i) \right] \, \mathrm d \beta'} \, .
\end{align*}
This brings us to the limiting case
\begin{align*}
    &P_{\boldsymbol{\pi}} \left( \mathbb{G}_{\alpha, t + 1}^{k}  \left( {\boldsymbol \mu}_{t} \right) = G \mid \mathbb{G}_{\alpha, t}^{k} \left( {\boldsymbol \mu}_{t} \right) = G', x_{\alpha, t} = x' \right) \\
    &\quad = \int_{[0, \alpha^*]^k} \frac{ \sum_{(y_1, \ldots y_{k}) \in \mathcal{X}^{k}_G} \prod_{i=1}^k \frac{W(\alpha, \beta_i)}{\xi_W (\alpha)} \mu^{\infty}_{\beta_i, t} (y_i)}{\int_{[0, \alpha^*]^k} \sum_{(z_1, \ldots z_{k}) \in \mathcal{X}^{k}_G} \prod_{i=1}^k \frac{W(\alpha, \beta'_i)}{\xi_W (\alpha)} \mu^{\infty}_{\beta'_i, t} (z_i)  \, \mathrm d \beta'} \\
    &\qquad \qquad \qquad \qquad \qquad \qquad \cdot \sum_{(x_1, \ldots x_{k}) \in \mathcal{X}^{k}_G} \prod_{i=1}^{k}  \sum_{u \in \mathcal{U}} \pi_{\beta_i, t}^\infty \left(u \mid x_i' \right) \cdot P \left( x_i \mid x_i', u, \mathbb{G}_{\beta_i, t}^{\infty} \left( {\boldsymbol \mu}_{t} \right) \right) \, \mathrm d \beta \, .
\end{align*}
Following the argumentation in the binary case, we can establish the two inequalities
\begin{align}
    & \left\vert P_{\boldsymbol{\pi}} \left( \mathbb{G}_{\alpha, t + 1}^{k}  \left( {\boldsymbol \mu}_{t} \right) = G \mid \mathbb{G}_{\alpha, t}^{k} \left( {\boldsymbol \mu}_{t} \right) = G', x_{\alpha, t} = x' \right) \right. \nonumber\\
    &\qquad\qquad\qquad\qquad \left. -  P_{\boldsymbol{\pi}}^\nu \left( \mathbb{G}_{0, t + 1}^{\nu, k}  \left( \hat{\boldsymbol \mu}^{\nu}_{t} \right) = G \mid \mathbb{G}_{0, t}^{\nu, k}  \left( \hat{\boldsymbol \mu}^{\nu}_{t} \right) = G', x_{0, t} = x' \right) \right\vert < o(1) + \varepsilon_{\alpha^*} \label{ineq:general_neigh_probs}
\end{align}
and
\begin{align} \label{ineq:general_neigh_probs_joint}
    & \left\vert P_{\boldsymbol{\pi}, \boldsymbol{\mu}} \left( \mathbb{G}_{\alpha, t}^{k} \left( \boldsymbol \mu_{t} \right) = G, x_{\alpha, t} = x \right)  - P^\nu_{\boldsymbol{\pi}, \boldsymbol{\mu}} \left( \mathbb{G}_{0, t }^{\nu, k}  \left( \hat{\boldsymbol \mu}^{\nu}_{t} \right) = G, x_{0, t} = x \right) \right\vert < o(1) + \varepsilon_{\alpha^*} \, .
\end{align}
For notational convenience, let us define a set of matrices $\boldsymbol{\mathcal{A}}^k (G, G')$ for each pair of vectors $G, G' \in \boldsymbol{\mathcal{G}}^k$ and finite $k \in \mathbb{N}$ as
\begin{align*}
    \boldsymbol{\mathcal{A}}^k (G, G') \coloneqq \left\{ A = (a_{ij})_{i, j \in [d]} \in \mathbb{N}_0^{d \times d}: \sum_{\ell=1}^d a_{i \ell} = g_i \textrm{ and }  \sum_{\ell=1}^d a_{\ell j} = g'_j, \quad \forall i,j \in [d] \right\}
\end{align*}
Keeping in mind the symmetry of the limiting model, we have
\begin{align*}
    &\sum_{(x_1, \ldots x_{k}) \in \mathcal{X}^{k}_G} \prod_{i=1}^{k}  \sum_{u \in \mathcal{U}} \pi_{\beta_i, t}^\infty \left(u \mid x_i' \right) \cdot P \left( x_i \mid x_i', u, \mathbb{G}_{\beta_i, t}^{\infty} \left( {\boldsymbol \mu}_{t} \right) \right) \\
    &= \sum_{(x_1, \ldots x_{k}) \in \mathcal{X}^{k}_G} \prod_{j=1}^{d} \prod_{i=1}^{k} \boldsymbol{1}_{\{ x'_i = s_{j} \}} \sum_{u \in \mathcal{U}} \pi_{\beta_i, t}^\infty \left(u \mid s_j \right) \cdot P \left( x_i \mid s_j, u, \mathbb{G}_{\beta_i, t}^{\infty} \left( {\boldsymbol \mu}_{t} \right) \right) \\
    &= \sum_{(a_{ij})_{i, j \in [d]} \in \boldsymbol{\mathcal{A}}^k (G, G')} \prod_{j=1}^{d} \binom{g'_{j}}{a_{1j}, \ldots, a_{d j}} \prod_{i=1}^{d} \left( \sum_{u \in \mathcal{U}} \pi_{\beta_i, t}^\infty \left(u \mid s_{j} \right) \cdot P \left( s_i \mid s_{j}, u, \mathbb{G}_{\beta_i, t}^{\infty} \left( \hat{\boldsymbol \mu}^{\nu}_{t} \right) \right) \right)^{a_{i j}} \, .
\end{align*}
Eventually, the previous results yield
\begin{align*}
    &P_{\boldsymbol{\pi}, \boldsymbol{\mu}} \left( \mathbb{G}_{\alpha, t + 1}^{k}  \left( {\boldsymbol \mu}_{t} \right) = G, x_{\alpha, t+1} = x \right) \\
    &= \sum_{x' \in \mathcal{X}} \sum_{\substack{(g'_1, \ldots, g'_d) \in \mathbb{N}_0: \\ g'_1 + \ldots + g'_d = k}} P_{\boldsymbol{\pi}, \boldsymbol{\mu}} \left( \mathbb{G}_{\alpha, t}^{k} \left( {\boldsymbol \mu}_{t} \right) = G', x_{\alpha, t} = x' \right) \sum_{u \in \mathcal{U}} \pi_{\alpha, t}^k \left(u \mid x' \right) \cdot P \left( x \mid x', u, G' \right) \\
    &\qquad \cdot \int_{[0, \alpha^*]^k} \frac{ \sum_{(y_1, \ldots y_{k}) \in \mathcal{X}^{k}_G} \prod_{i=1}^k \frac{W(\alpha, \beta_i)}{\xi_W (\alpha)} \mu^{\infty}_{\beta_i, t} (y_i)}{\int_{[0, \alpha^*]^k} \sum_{(z_1, \ldots z_{k}) \in \mathcal{X}^{k}_G} \prod_{i=1}^k \frac{W(\alpha, \beta'_i)}{\xi_W (\alpha)} \mu^{\infty}_{\beta'_i, t} (z_i)  \, \mathrm d \beta'} \sum_{(a_{ij})_{i, j \in [d]} \in \boldsymbol{\mathcal{A}}^k (G, G')}\\
    &\qquad \cdot \prod_{j=1}^{d} \binom{g'_{j}}{a_{1j}, \ldots, a_{d j}} \prod_{i=1}^{d}\left( \sum_{u \in \mathcal{U}} \pi_{\beta_i, t}^\infty \left(u \mid s_{j} \right) \cdot P \left( s_i \mid s_{j}, u, \mathbb{G}_{\beta_i, t}^{\infty} \left( \hat{\boldsymbol \mu}^{\nu}_{t} \right) \right) \right)^{a_{i j}} \, \mathrm d \beta_1 \cdots \mathrm d \beta_k \, .
\end{align*}

\section{Proof of Theorem \ref{thm:periphery_mf_convergence}}

This proof focuses on binary state spaces $\mathcal{X} = \{s_1, s_2 \}$. The extension of the proof given below to arbitrary finite state spaces is straightforward, although the notations are sometimes less convenient. We prove the claim via induction over $t$ and start with the case $t=0$
\begin{align*}
    &\E \left[ \left|\hat{\boldsymbol \mu}^{\nu, k}_{0}(f) - \boldsymbol \mu_{0}^k(f, \nu) \right| \right] \\
    &\quad = \E \left[ \left| \frac{\sqrt{2 \vert E_\nu \vert}}{\vert V_{\nu, k} \vert} \cdot \sum_{i \in V_\nu} \boldsymbol{1}_{\{ \deg (v_i) = k \}} \int_{\frac{i-1}{\sqrt{2 \vert E_\nu \vert}}}^{\frac{i}{\sqrt{2 \vert E_\nu \vert}}} \sum_{x \in \mathcal{X}} f (x, \alpha) \hat \mu_{\alpha, 0}^\nu (x) \, \mathrm d \alpha \right. \right.\\
    &\qquad\qquad\qquad\qquad\qquad - \left.\left. \frac{\sqrt{\bar{\xi}_W}}{\int_{0}^\infty \textrm{Poi}_{\nu, \alpha}^W (k) \mathrm d \alpha} \cdot
    \int_{0}^\infty \textrm{Poi}_{\nu, \alpha}^W (k) \sum_{x \in \mathcal{X}} f (x, \alpha) \mu^k_{\alpha, 0} (x) \, \mathrm d \alpha  \right| \right] \\
    &\quad \leq \E \left[ \left|  \frac{(1 + o(1))\sqrt{\bar{\xi}_W}}{\int_0^\infty \textrm{Poi}_{\nu, \alpha}^W (k) \, \mathrm d \alpha}\cdot \sum_{i \in V_\nu} \boldsymbol{1}_{\{ \deg (v_i) = k \}} \int_{\frac{i-1}{\sqrt{2 \vert E_\nu \vert}}}^{\frac{i}{\sqrt{2 \vert E_\nu \vert}}} \sum_{x \in \mathcal{X}} f (x, \alpha) \hat \mu_{\alpha, 0}^\nu (x) \, \mathrm d \alpha \right. \right.\\
    &\qquad\qquad\qquad\qquad\qquad - \left.\left. \frac{\sqrt{\bar{\xi}_W}}{\int_{0}^\infty \textrm{Poi}_{\nu, \alpha}^W (k) \mathrm d \alpha} \cdot
    \int_{0}^\infty \textrm{Poi}_{\nu, \alpha}^W (k) \sum_{x \in \mathcal{X}} f (x, \alpha) \hat{\mu}^\nu_{\alpha, 0} (x) \, \mathrm d \alpha  \right| \right] \\
    &\qquad + \E \left[ \left| \frac{\sqrt{\bar{\xi}_W}}{\int_{0}^\infty \textrm{Poi}_{\nu, \alpha}^W (k) \mathrm d \alpha} \cdot
    \int_{0}^\infty \textrm{Poi}_{\nu, \alpha}^W (k) \sum_{x \in \mathcal{X}} f (x, \alpha) \hat{\mu}^\nu_{\alpha, 0} (x) \, \mathrm d \alpha \right. \right.\\
    &\qquad\qquad\qquad\qquad\qquad - \left.\left. \frac{\sqrt{\bar{\xi}_W}}{\int_{0}^\infty \textrm{Poi}_{\nu, \alpha}^W (k) \mathrm d \alpha} \cdot
    \int_{0}^\infty \textrm{Poi}_{\nu, \alpha}^W (k) \sum_{x \in \mathcal{X}} f (x, \alpha) \mu^k_{\alpha, 0} (x) \, \mathrm d \alpha  \right| \right] \\
    &\quad = \frac{\sqrt{\bar{\xi}_W}}{\int_0^\infty \textrm{Poi}_{\nu, \alpha}^W (k) \, \mathrm d \alpha} \cdot \E \left[ \left| \sum_{i \in V_\nu} \boldsymbol{1}_{\{ \deg (v_i) = k \}} \int_{\frac{i-1}{\sqrt{2 \vert E_\nu \vert}}}^{\frac{i}{\sqrt{2 \vert E_\nu \vert}}} \sum_{x \in \mathcal{X}} f (x, \alpha) \hat \mu_{\alpha, t}^\nu (x) \, \mathrm d \alpha \right. \right.\\
    &\qquad\qquad\qquad\qquad\qquad - \left.\left. 
    \int_{0}^\infty \textrm{Poi}_{\nu, \alpha}^W (k) \sum_{x \in \mathcal{X}} f (x, \alpha) \hat{\mu}^\nu_{\alpha, t} (x) \, \mathrm d \alpha  \right| \right] + o(1) = o(1)
\end{align*}
where the second summand goes to zero by a standard LLN argument. To see the convergence of the first summand, we keep in mind Assumption \ref{as:graphex_conv} and recall \citet[Proposition 1]{caron2022sparsity} which states that
\begin{align*}
    \vert E_\nu \vert = (1 + o(1)) \nu^2 \frac{\bar{\xi}_W}{2} \, .
\end{align*}
Furthermore, by \citet[Theorem 2]{caron2022sparsity} and \citet[Theorem 5.5]{veitch2015class} we know that
\begin{align*}
    \vert V_{\nu, k} \vert = (1 + o(1)) \nu \int_0^\infty \textrm{Poi}_{\nu, \alpha}^W (k) \, \mathrm d \alpha \, .
\end{align*}
Combining these insights, we obtain
\begin{align*}
    \frac{\sqrt{2 \vert E_\nu \vert}}{\vert V_{\nu, k} \vert} = \frac{(1 + o(1)) \nu \sqrt{\bar{\xi}_W}}{(1 + o(1)) \nu \int_0^\infty \textrm{Poi}_{\nu, \alpha}^W (k) \, \mathrm d \alpha} = (1 + o(1)) \frac{ \sqrt{\bar{\xi}_W}}{\int_0^\infty \textrm{Poi}_{\nu, \alpha}^W (k) \, \mathrm d \alpha} \, .
\end{align*}
Thus, by Kolmogorov's strong law of large numbers (see, e.g. \citet{feller1991introduction} for details) and \citet[Lemma 5.1]{veitch2015class} we have
\begin{align*}
    & \frac{\sqrt{\bar{\xi}_W}}{\int_{0}^\infty \textrm{Poi}_{\nu, \alpha}^W (k) \mathrm d \alpha} \cdot \E \left[ \left| \sum_{i \in V_\nu} \boldsymbol{1}_{\{ \deg (v_i) = k \}} \int_{\frac{i-1}{\sqrt{2 \vert E_\nu \vert}}}^{\frac{i}{\sqrt{2 \vert E_\nu \vert}}} \sum_{x \in \mathcal{X}} f (x, \alpha) \hat \mu_{\alpha, t}^\nu (x) \, \mathrm d \alpha \right. \right.\\
    &\qquad\qquad\qquad\qquad\qquad\qquad\qquad - \left.\left. \int_{\frac{i-1}{\sqrt{2 \vert E_\nu \vert}}}^{\frac{i}{\sqrt{2 \vert E_\nu \vert}}} \textrm{Poi}_{\nu, \alpha}^W (k) \sum_{x \in \mathcal{X}} f (x, \alpha) \hat{\mu}^\nu_{\alpha, t} (x) \, \mathrm d \alpha  \right| \right] = o(1)
\end{align*}
which concludes the induction start.

Now, it remains to show the induction step which we do by leveraging the following upper bound on the term of interest.
\begin{align*}
    \E \left[ \left|\hat{\boldsymbol \mu}^{\nu, k}_{t+1}(f) - \boldsymbol \mu_{t+1}^k (f, \nu) \right| \right]
    &\leq \E \left[ \left| \hat{\boldsymbol \mu}^{\nu, k}_{t+1}(f) - \hat{\boldsymbol \mu}^{\nu, k}_{t} P_{t, \hat{\boldsymbol \mu}, \widehat{W}}^{\hat{\boldsymbol \pi}, \infty}(f) \right| \right] \\
    &\quad + \E \left[ \left|\hat{\boldsymbol \mu}^{\nu, k}_{t} P_{t, \hat{\boldsymbol \mu}, \widehat{W}}^{\hat{\boldsymbol \pi}, \infty}(f) - \hat{\boldsymbol \mu}^{\nu, k}_{t} P_{t, \hat{\boldsymbol \mu}, \widehat{W}}^{\hat{\boldsymbol \pi}, k}(f) \right| \right] \\
    &\quad + \E \left[ \left| \hat{\boldsymbol \mu}^{\nu, k}_{t} P_{t, \hat{\boldsymbol \mu}, \widehat{W}}^{\hat{\boldsymbol \pi}, k}(f) - \hat{\boldsymbol \mu}^{\nu, k}_{t} P_{t, \hat{\boldsymbol \mu}, W}^{\hat{\boldsymbol \pi}, k}(f) \right| \right] \\
    &\quad + \E \left[ \left| \hat{\boldsymbol \mu}^{\nu, k}_{t} P_{t, \hat{\boldsymbol \mu}, W}^{\hat{\boldsymbol \pi}, k}(f) - \hat{\boldsymbol \mu}^{\nu, k}_{t} P_{t, \hat{\boldsymbol \mu}, W}^{\boldsymbol \pi, k}(f) \right| \right] \\
    &\quad + \E \left[ \left| \hat{\boldsymbol \mu}^{\nu, k}_{t} P_{t, \hat{\boldsymbol \mu}, W}^{\boldsymbol \pi, k}(f) - \hat{\boldsymbol \mu}^{\nu, k}_{t} P_{t, \boldsymbol \mu, W}^{\boldsymbol \pi, k}(f) \right| \right] \\
    &\quad + \E \left[ \left| \hat{\boldsymbol \mu}^{\nu, k}_{t} P_{t, \boldsymbol \mu, W}^{\boldsymbol \pi, k}(f)  - \boldsymbol \mu_{t+1}^k(f, \nu) \right| \right] \, .
 \end{align*}

\paragraph{First term.}
By the law of total expectation we have
\begin{align*}
    &\E \left[ \left| \hat{\boldsymbol \mu}^{\nu, k}_{t+1}(f) - \hat{\boldsymbol \mu}^{\nu, k}_{t} P_{t, \hat{\boldsymbol \mu}, \widehat{W}}^{\hat{\boldsymbol \pi}, \infty}(f) \right| \right] \\
    &= \E \left[ \left| \frac{\sqrt{2 \vert E_\nu \vert}}{\vert V_{\nu, k} \vert} \left( \sum_{i \in V_\nu} \boldsymbol{1}_{\{ \deg (v_i) = k \}} \int_{\frac{i-1}{\sqrt{2 \vert E_\nu \vert}}}^{\frac{i}{\sqrt{2 \vert E_\nu \vert}}} \sum_{x \in \mathcal{X}} f (x, \alpha) \hat{\mu}_{\alpha, t+1}^\nu (x) \, \mathrm d \alpha \right.\right.\right. \\
    &\qquad - \sum_{i \in V_\nu} \boldsymbol{1}_{\{ \deg (v_i) = k \}} \int_{\frac{i-1}{\sqrt{2 \vert E_\nu \vert}}}^{\frac{i}{\sqrt{2 \vert E_\nu \vert}}} \sum_{x \in \mathcal{X}} \hat{\mu}^{\nu}_{\alpha, t} (x) \sum_{u \in \mathcal{U}} \hat{\pi}_{\alpha, t} \left(u \mid x \right) \\
    &\qquad \qquad \qquad \qquad\qquad \left.\left.\left. \cdot \sum_{x' \in \mathcal{X}} P \left( x' \mid x, u, \frac{1}{\xi_{\widehat W, \alpha^*} (\alpha)} \int_0^{\alpha^*} \widehat{W} (\alpha, \beta) \hat{\mu}_{\beta, t} \, \mathrm d \beta \right) f (x', \alpha) \, \mathrm d \alpha \right) \right| \right] \\
    &= \varepsilon_{\alpha^*} O(1) + \E \left[ \left| \frac{\sqrt{2 \vert E_\nu \vert}}{\vert V_{\nu, k} \vert} \left( \sum_{i \in V_\nu} \boldsymbol{1}_{\{ \deg (v_i) = k \}} \int_{\frac{i-1}{\sqrt{2 \vert E_\nu \vert}}}^{\frac{i}{\sqrt{2 \vert E_\nu \vert}}} \sum_{x \in \mathcal{X}} f (x, \alpha) \hat{\mu}_{\alpha, t+1}^\nu (x) \, \mathrm d \alpha \right.\right.\right. \\
    &\qquad - \sum_{i \in V_\nu} \boldsymbol{1}_{\{ \deg (v_i) = k \}} \int_{\frac{i-1}{\sqrt{2 \vert E_\nu \vert}}}^{\frac{i}{\sqrt{2 \vert E_\nu \vert}}} \sum_{x \in \mathcal{X}} \hat{\mu}^{\nu}_{\alpha, t} (x) \sum_{u \in \mathcal{U}} \hat{\pi}_{\alpha, t} \left(u \mid x \right) \\
    &\qquad \qquad \qquad \qquad\qquad \left.\left.\left. \cdot \sum_{x' \in \mathcal{X}} P \left( x' \mid x, u, \frac{1}{\xi_{\widehat{W}} (\alpha)}\int_{\mathbb{R}_+} \widehat{W} (\alpha, \beta) \hat{\mu}_{\beta, t} \, \mathrm d \beta \right) f (x', \alpha) \, \mathrm d \alpha \right) \right| \right] \\
    &= \varepsilon_{\alpha^*} O(1) + \frac{\sqrt{2 \vert E_\nu \vert}}{\vert V_{\nu, k} \vert} \E \left[ \left| \sum_{i \in  V_\nu} \boldsymbol{1}_{\{ \deg (v_i) = k \}} \left( \int_{\frac{i-1}{\sqrt{2 \vert E_\nu \vert}}}^{\frac{i}{\sqrt{2 \vert E_\nu \vert}}} f(X^i_{t+1}, \alpha) \, \mathrm d\alpha \right. \right. \right.\\
    &\qquad \qquad \qquad \qquad \qquad \qquad \qquad \qquad\qquad - \left.\left.\left. \E \left[ \int_{\frac{i-1}{\sqrt{2 \vert E_\nu \vert}}}^{\frac{i}{\sqrt{2 \vert E_\nu \vert}}} f(X^i_{t+1}, \alpha) \, \mathrm d\alpha \innermid \mathbf X_t \right] \right) \right| \right] \\
    &\leq \varepsilon_{\alpha^*} O(1) + \frac{\sqrt{2 \vert E_\nu \vert}}{\vert V_{\nu, k} \vert} \E \left[ \left( \sum_{i \in  V_\nu} \boldsymbol{1}_{\{ \deg (v_i) = k \}} \left( \int_{\frac{i-1}{\sqrt{2 \vert E_\nu \vert}}}^{\frac{i}{\sqrt{2 \vert E_\nu \vert}}} f(X^i_{t+1}, \alpha) \, \mathrm d\alpha \right.\right.\right.\\
    &\qquad \qquad \qquad \qquad \qquad \qquad \qquad \qquad\qquad - \left.\left.\left. \E \left[ \int_{\frac{i-1}{\sqrt{2 \vert E_\nu \vert}}}^{\frac{i}{\sqrt{2 \vert E_\nu \vert}}} f(X^i_{t+1}, \alpha) \, \mathrm d\alpha \innermid \mathbf X_t \right] \right)  \right)^2 \right]^{\frac{1}{2}} \\
    &= \varepsilon_{\alpha^*} O(1) + \frac{\sqrt{2 \vert E_\nu \vert}}{\vert V_{\nu, k} \vert} \E \left[ \sum_{i \in  V_\nu} \boldsymbol{1}_{\{ \deg (v_i) = k \}} \left( \int_{\frac{i-1}{\sqrt{2 \vert E_\nu \vert}}}^{\frac{i}{\sqrt{2 \vert E_\nu \vert}}} f(X^i_{t+1}, \alpha) \, \mathrm d\alpha \right.\right.\\
    &\qquad \qquad \qquad \qquad \qquad \qquad \qquad \qquad \qquad - \left.\left. \E \left[ \int_{\frac{i-1}{\sqrt{2 \vert E_\nu \vert}}}^{\frac{i}{\sqrt{2 \vert E_\nu \vert}}} f(X^i_{t+1}, \alpha) \, \mathrm d\alpha \innermid \mathbf X_t \right] \right)^2  \right]^{\frac{1}{2}} \\
    &\leq \varepsilon_{\alpha^*} O(1) + \frac{\sqrt{2 \vert E_\nu \vert}}{\vert V_{\nu, k} \vert} \left( \sum_{i \in  V_\nu} \boldsymbol{1}_{\{ \deg (v_i) = k \}} \left( \frac{2 M_f}{\sqrt{2 \vert E_\nu \vert}} \right)^2 \right)^{\frac{1}{2}} \\
    &= \varepsilon_{\alpha^*} O(1) + \frac{2 M_f \sqrt{2 \vert E_\nu \vert}}{\vert V_{\nu, k} \vert}  \cdot \left(\frac{\vert V_{\nu, k} \vert}{2 \vert E_\nu \vert}  \right)^{\frac{1}{2}} = \varepsilon_{\alpha^*} O(1) + \underbrace{\frac{2 M_f}{\sqrt{\vert V_{\nu, k} \vert}}}_{\to 0 \textrm{ as } \nu \to \infty}  \leq \varepsilon
\end{align*}
for $\alpha^*$ and $\nu$ large enough, where we point out that the $\{ X^i_{t+1} \}_{i \in  V_\nu}$ are independent if conditioned on $\mathbf X_t \equiv \{ X^i_{t} \}_{i \in  V_\nu}$.

\paragraph{Second term.}
The second term can be bounded by
\begin{align*}
    &\E \left[ \left|\hat{\boldsymbol \mu}^{\nu, k}_{t} P_{t, \hat{\boldsymbol \mu}, \widehat{W}}^{\hat{\boldsymbol \pi}, \infty}(f) - \hat{\boldsymbol \mu}^{\nu, k}_{t} P_{t, \hat{\boldsymbol \mu}, \widehat{W}}^{\hat{\boldsymbol \pi}, k}(f) \right| \right] \\
    &\quad = \frac{\sqrt{2 \vert E_\nu \vert}}{\vert V_{\nu, k} \vert} \E \left[ \left| \sum_{i \in V_\nu} \boldsymbol{1}_{\{ \deg (v_i) = k \}} \int_{\frac{i-1}{\sqrt{2 \vert E_\nu \vert}}}^{\frac{i}{\sqrt{2 \vert E_\nu \vert}}}  \sum_{x \in \mathcal{X}} \hat{\mu}^{\nu}_{\alpha, t} (x) \sum_{u \in \mathcal{U}} \hat{\pi}_{\alpha, t} \left(u \mid x \right) \right.\right. \\
    &\qquad \qquad \qquad \qquad \qquad \cdot \sum_{x' \in \mathcal{X}} P \left(x' \mid x, u, \frac{1}{\xi_{\widehat W, \alpha^*} (\alpha)} \int_0^{\alpha^*} \widehat{W} (\alpha, \beta) \hat{\mu}^\nu_{\beta, t} \, \mathrm d \beta \right) f (x', \alpha) \, \mathrm d \alpha \\
    &\quad  - \sum_{i \in V_\nu} \boldsymbol{1}_{\{ \deg (v_i) = k \}} \int_{\frac{i-1}{\sqrt{2 \vert E_\nu \vert}}}^{\frac{i}{\sqrt{2 \vert E_\nu \vert}}} \sum_{x \in \mathcal{X}} \hat{\mu}^{\nu}_{\alpha, t} (x) \sum_{j = 0}^k P_{\hat{\boldsymbol \pi}} \left( \widehat{\mathbb{G}}_{\alpha, t}^{\nu, k} \left( \hat{\boldsymbol \mu}^{\nu}_{t} \right) = j \innermid x_{\alpha, t} = x \right) \sum_{u \in \mathcal{U}} \hat{\pi}_{\alpha, t} \left(u \mid x \right) \\
    &\left.\left. \qquad \qquad \qquad \qquad \qquad \qquad \qquad \cdot \sum_{x' \in \mathcal{X}} P \left( x' \mid x, u, \frac{1}{k} \left( j, k - j \right) \right) f (x', \alpha) \, \mathrm d \alpha \right| \right] \\
    &\quad < o(1) + O(1) \varepsilon_{\alpha^*} + \frac{\sqrt{2 \vert E_\nu \vert}}{\vert V_{\nu, k} \vert} \E \left[ \left| \sum_{i \in V_\nu} \boldsymbol{1}_{\{ \deg (v_i) = k \}} \int_{\frac{i-1}{\sqrt{2 \vert E_\nu \vert}}}^{\frac{i}{\sqrt{2 \vert E_\nu \vert}}}  \sum_{x \in \mathcal{X}} \hat{\mu}^{\nu}_{\alpha, t} (x) \sum_{u \in \mathcal{U}} \hat{\pi}_{\alpha, t} \left(u \mid x \right) \right.\right. \\
    &\qquad \qquad \qquad \qquad \qquad \qquad \cdot \sum_{x' \in \mathcal{X}} P \left(x' \mid x, u,\frac{1}{\xi_{\widehat{W}} (\alpha)}\int_{\mathbb{R}_+} \widehat{W} (\alpha, \beta) \hat{\mu}^\nu_{\beta, t} \, \mathrm d \beta \right) f (x', \alpha) \, \mathrm d \alpha \\
    &\quad - \sum_{i \in V_\nu} \boldsymbol{1}_{\{ \deg (v_i) = k \}} \int_{\frac{i-1}{\sqrt{2 \vert E_\nu \vert}}}^{\frac{i}{\sqrt{2 \vert E_\nu \vert}}} \sum_{x \in \mathcal{X}} \hat{\mu}^{\nu}_{\alpha, t} (x) \sum_{j = 0}^k P_{\hat{\boldsymbol \pi}}^\nu \left( \widehat{\mathbb{G}}_{i, t}^{\nu, k} \left( \hat{\boldsymbol \mu}^{\nu}_{t} \right) = j \innermid x_{i, t} = x \right) \sum_{u \in \mathcal{U}} \hat{\pi}_{\alpha, t} \left(u \mid x \right) \\
    &\left.\left. \qquad \qquad \qquad \qquad \qquad \qquad \qquad \cdot \sum_{x' \in \mathcal{X}} P \left( x' \mid x, u, \frac{1}{k} \left( j, k - j \right) \right) f (x', \alpha) \, \mathrm d \alpha \right| \right] \\
    &\quad = o(1) + O(1) \varepsilon_{\alpha^*} \leq \varepsilon
\end{align*}
for $\alpha^*$ and $\nu$ large enough, where the inequality above leverages inequality \eqref{ineq:neigh_probs_joint}.

\paragraph{Third term.}
This term can be reformulated as
\begin{align*}
    &\E \left[ \left| \hat{\boldsymbol \mu}^{\nu, k}_{t} P_{t, \hat{\boldsymbol \mu}, \widehat{W}}^{\hat{\boldsymbol \pi}, k}(f) - \hat{\boldsymbol \mu}^{\nu, k}_{t} P_{t, \hat{\boldsymbol \mu}, W}^{\hat{\boldsymbol \pi}, k}(f) \right| \right] \\
    &\quad = \frac{\sqrt{2 \vert E_\nu \vert}}{\vert V_{\nu, k} \vert} \E \left[ \left| \sum_{i \in V_\nu} \boldsymbol{1}_{\{ \deg (v_i) = k \}} \int_{\frac{i-1}{\sqrt{2 \vert E_\nu \vert}}}^{\frac{i}{\sqrt{2 \vert E_\nu \vert}}} \sum_{j = 0}^k  \sum_{x \in \mathcal{X}} P_{\hat{\boldsymbol \pi}, \hat{\boldsymbol{\mu}}} \left( \widehat{\mathbb{G}}_{\alpha, t}^{\nu, k} \left( \hat{\boldsymbol \mu}^{\nu}_{t} \right) = j, \hat{x}_{\alpha, t} = x \right)  \right.\right. \\
    &\qquad \qquad \qquad \qquad \qquad \qquad \cdot \sum_{u \in \mathcal{U}} \hat{\pi}_{\alpha, t} \left(u \mid x \right) \sum_{x' \in \mathcal{X}} P \left(x' \mid x, u,\frac{1}{k} \left( j, k - j \right) \right) f (x', \alpha) \, \mathrm d \alpha \\
    &\quad  - \frac{\sqrt{2 \vert E_\nu \vert}}{\vert V_{\nu, k} \vert} \sum_{j = 0}^k  \sum_{x \in \mathcal{X}} P_{\hat{\boldsymbol \pi}, \hat{\boldsymbol{\mu}}} \left( \mathbb{G}_{\alpha, t}^{k} \left( \hat{\boldsymbol \mu}^{\nu}_{t} \right) = j, \hat{x}_{\alpha, t} = x \right) \sum_{u \in \mathcal{U}} \hat{\pi}_{\alpha, t} \left(u \mid x \right) \\
    &\left.\left. \qquad \qquad \qquad \qquad \qquad \qquad \qquad \cdot \sum_{x' \in \mathcal{X}} P \left( x' \mid x, u, \frac{1}{k} \left( j, k - j \right) \right) f (x', \alpha) \, \mathrm d \alpha \right| \right] \\
    &\quad \leq \frac{M_f \sqrt{2 \vert E_\nu \vert}}{\vert V_{\nu, k} \vert}  \E \left[  \sum_{i \in V_\nu} \boldsymbol{1}_{\{ \deg (v_i) = k \}} \int_{\frac{i-1}{\sqrt{2 \vert E_\nu \vert}}}^{\frac{i}{\sqrt{2 \vert E_\nu \vert}}} \sum_{j = 0}^k  \sum_{x \in \mathcal{X}} \left| P_{\hat{\boldsymbol \pi}, \hat{\boldsymbol{\mu}}} \left( \widehat{\mathbb{G}}_{\alpha, t}^{\nu, k} \left( \hat{\boldsymbol \mu}^{\nu}_{t} \right) = j, \hat{x}_{\alpha, t} = x \right) \right.\right. \\
    &\qquad \qquad \qquad \qquad \qquad \qquad \qquad \left. \left.
    - P_{\hat{\boldsymbol \pi}, \hat{\boldsymbol{\mu}}} \left( \mathbb{G}_{\alpha, t}^{k} \left( \hat{\boldsymbol \mu}^{\nu}_{t} \right) = j, \hat{x}_{\alpha, t} = x \right) \right| \, \mathrm d \alpha  \right] \, .
\end{align*}
We focus on
\begin{align*}
    \sum_{j = 0}^k  \sum_{x \in \mathcal{X}} \E \left[ \left| P_{\hat{\boldsymbol \pi}, \hat{\boldsymbol{\mu}}} \left( \widehat{\mathbb{G}}_{\alpha, t}^{\nu, k} \left( \hat{\boldsymbol \mu}^{\nu}_{t} \right) = j, \hat{x}_{\alpha, t} = x \right)
    - P_{\hat{\boldsymbol \pi}, \hat{\boldsymbol{\mu}}} \left( \mathbb{G}_{\alpha, t}^{k} \left( \hat{\boldsymbol \mu}^{\nu}_{t} \right) = j, \hat{x}_{\alpha, t} = x \right) \right| \right]
\end{align*}
and recall 
\begin{align*}
    &P_{\boldsymbol{\hat{\pi}}, \hat{\boldsymbol{\mu}}} \left( \mathbb{G}_{\alpha, t + 1}^{k} \left( \hat{\boldsymbol \mu}^{\nu}_{t} \right) = j, \hat{x}_{\alpha, t+1} = x \right) \\
    & = \int_{[0, \alpha^*]^k} \sum_{x' \in \mathcal{X}} \sum_{j' = 0}^k \sum_{u \in \mathcal{U}} \sum_{\ell = \max \{0, j + j' - k \}}^{\min \{j, j'\}} \frac{ \sum_{(y_1, \ldots y_{k}) \in \mathcal{X}^{k}_j} \prod_{i=1}^k \frac{W(\alpha, \beta_i)}{\xi_W (\alpha)} \hat{\mu}^{\nu}_{\beta_i, t} (y_i) }{\int_{[0, \alpha^*]^k} \sum_{(z_1, \ldots z_{k}) \in \mathcal{X}^{k}_j} \prod_{i=1}^k \frac{W(\alpha, \beta'_i)}{\xi_W (\alpha)} \hat{\mu}^{\nu}_{\beta'_i, t} (z_i)  \, \mathrm d \beta'} \\
    &\qquad \qquad \cdot P_{\boldsymbol{\hat{\pi}}, \hat{\boldsymbol{\mu}}} \left( \mathbb{G}_{\alpha, t}^{k} \left( \hat{\boldsymbol \mu}^{\nu}_{t} \right) = j', x_{\alpha, t} = x' \right) \hat{\pi}_{\alpha, t} \left(u \mid x' \right) \cdot P \left( x \mid x', u, \frac{j'}{k} \right) \\
    &\qquad \qquad\cdot \binom{j'}{\ell} \left(\sum_{u \in \mathcal{U}} \hat{\pi}_{\beta_i, t} \left(u \mid s_1 \right) \cdot P \left( s_1 \mid s_1, u, \mathbb{G}_{\beta_i, t}^{\infty} \left( \hat{\boldsymbol \mu}^{\nu}_{t} \right) \right) \right)^{\ell}\\
    &\qquad \qquad \cdot \left(\sum_{u \in \mathcal{U}} \hat{\pi}_{\beta_i, t} \left(u \mid s_2 \right) \cdot P \left( s_1 \mid s_2, u, \mathbb{G}_{\beta_i, t}^{\infty} \left( \hat{\boldsymbol \mu}^{\nu}_{t} \right) \right) \right)^{j - \ell} \\
    &\qquad \qquad\cdot \binom{k-j'}{j - \ell} \left( \sum_{u \in \mathcal{U}} \hat{\pi}_{\beta_i, t} \left(u \mid s_1 \right) \cdot P \left( s_2 \mid s_1, u,\mathbb{G}_{\beta_i, t}^{\infty} \left( \hat{\boldsymbol \mu}^{\nu}_{t} \right) \right) \right)^{j' - \ell} \\
    &\qquad \qquad \cdot\left( \sum_{u \in \mathcal{U}} \hat{\pi}_{\beta_i, t} \left(u \mid s_2 \right) \cdot P \left( s_2 \mid s_2, u, \mathbb{G}_{\beta_i, t}^{\infty} \left( \hat{\boldsymbol \mu}^{\nu}_{t} \right) \right) \right)^{k - j' - j + \ell} \mathrm d \beta_1 \cdots \mathrm d \beta_k \, .
\end{align*}
For sake of simplicity, we introduce the following auxiliary notations (where we fix some $x', j', u, \ell$):
\begin{align*}
    (\textrm{I}) &= P_{\hat{\boldsymbol \pi}, \hat{\boldsymbol{\mu}}} \left( \mathbb{G}_{\alpha, t}^{k} \left( \hat{\boldsymbol \mu}^{\nu}_{t} \right) = j', x_{\alpha, t} = x' \right) \hat{\pi}_{\alpha, t} \left(u \mid x' \right) \cdot P \left( x \mid x', u, \frac{j'}{k} \right) \\
    &\qquad\qquad\qquad\qquad\qquad \cdot \frac{ \sum_{(y_1, \ldots y_{k}) \in \mathcal{X}^{k}_j} \prod_{i=1}^k \frac{W(\alpha, \beta_i)}{\xi_W (\alpha)} \hat{\mu}^{\nu}_{\beta_i, t} (y_i) }{\int_{[0, \alpha^*]^k} \sum_{(z_1, \ldots z_{k}) \in \mathcal{X}^{k}_j} \prod_{i=1}^k \frac{W(\alpha, \beta'_i)}{\xi_W (\alpha)} \hat{\mu}^{\nu}_{\beta'_i, t} (z_i)  \, \mathrm d \beta'} \\
    (\textrm{II}) &= \binom{j'}{\ell} \left(\sum_{u \in \mathcal{U}} \hat{\pi}_{\beta_i, t} \left(u \mid s_1 \right) \cdot P \left( s_1 \mid s_1, u, \mathbb{G}_{\beta_i, t}^{\infty} \left( \hat{\boldsymbol \mu}^{\nu}_{t} \right) \right)\right)^{\ell} \\
    (\textrm{III}) &= \left(\sum_{u \in \mathcal{U}} \hat{\pi}_{\beta_i, t} \left(u \mid s_2 \right) \cdot P \left( s_1 \mid s_2, u, \mathbb{G}_{\beta_i, t}^{\infty} \left( \hat{\boldsymbol \mu}^{\nu}_{t} \right) \right) \right)^{j - \ell} \\
    (\textrm{IV}) &= \binom{k-j'}{j - \ell} \left( \sum_{u \in \mathcal{U}} \hat{\pi}_{\beta_i, t} \left(u \mid s_1 \right) \cdot P \left( s_2 \mid s_1, u,\mathbb{G}_{\beta_i, t}^{\infty} \left( \hat{\boldsymbol \mu}^{\nu}_{t} \right) \right) \right)^{j' - \ell} \\
    (\textrm{V}) &= \left( \sum_{u \in \mathcal{U}} \hat{\pi}_{\beta_i, t} \left(u \mid s_2 \right) \cdot P \left( s_2 \mid s_2, u, \mathbb{G}_{\beta_i, t}^{\infty} \left( \hat{\boldsymbol \mu}^{\nu}_{t} \right) \right) \right)^{k - j' - j + \ell}
\end{align*}
such that
\begin{align*}
    &P_{\boldsymbol{\hat{\pi}}, \hat{\boldsymbol{\mu}}} \left( \mathbb{G}_{i, t + 1}^{k} \left( \hat{\boldsymbol \mu}^{\nu}_{t} \right) = j, x_{1, t+1} = x \right) \\
    &\qquad = \int_{[0, \alpha^*]^k} \sum_{x' \in \mathcal{X}} \sum_{j' = 0}^k \sum_{u \in \mathcal{U}} \sum_{\ell = \max \{0, j + j' - k \}}^{\min \{j, j'\}} (\textrm{I}) \cdot (\textrm{II}) \cdot (\textrm{III}) \cdot (\textrm{IV}) \cdot (\textrm{V})  \, \mathrm d \beta_1 \cdots \mathrm d \beta_k \, .
\end{align*}
Analogously, we define $(\textrm{I}'), (\textrm{II}'), (\textrm{III}'), (\textrm{IV}'), (\textrm{V}')$ such that
\begin{align*}
    &P_{\boldsymbol{\hat{\pi}}, \hat{\boldsymbol{\mu}}} \left( \widehat{\mathbb{G}}_{i, t + 1}^{\nu, k} \left( \hat{\boldsymbol \mu}^{\nu}_{t} \right) = j, x_{1, t+1} = x \right) \\
    &\qquad = \int_{[0, \alpha^*]^k} \sum_{x' \in \mathcal{X}} \sum_{j' = 0}^k \sum_{u \in \mathcal{U}} \sum_{\ell = \max \{0, j + j' - k \}}^{\min \{j, j'\}} (\textrm{I}') \cdot (\textrm{II}') \cdot (\textrm{III}') \cdot (\textrm{IV}') \cdot (\textrm{V}') \, \mathrm d \beta_1 \cdots \mathrm d \beta_k \, .
\end{align*}
Thus, using a telescope sum we can reformulate the term of interest as
\begin{align*}
    &\sum_{j = 0}^k  \sum_{x \in \mathcal{X}} \E \left[ \left| P_{\hat{\boldsymbol \pi}, \hat{\boldsymbol{\mu}}} \left( \widehat{\mathbb{G}}_{\alpha, t}^{\nu, k} \left( \hat{\boldsymbol \mu}^{\nu}_{t} \right) = j, \hat{x}_{\alpha, t} = x \right)
    - P_{\hat{\boldsymbol \pi}, \hat{\boldsymbol{\mu}}} \left( \mathbb{G}_{\alpha, t}^{k} \left( \hat{\boldsymbol \mu}^{\nu}_{t} \right) = j, \hat{x}_{\alpha, t} = x \right) \right| \right] \\
    &\quad \leq \sum_{j = 0}^k  \sum_{x \in \mathcal{X}} \int_{[0, \alpha^*]^k} \sum_{x' \in \mathcal{X}} \sum_{j' = 0}^k \sum_{u \in \mathcal{U}} \sum_{\ell = \max \{0, j + j' - k \}}^{\min \{j, j'\}} \E \left[ \left| (\textrm{I}) - (\textrm{I}') \right| \cdot (\textrm{II}) \cdot (\textrm{III}) \cdot (\textrm{IV}) \cdot (\textrm{V}) \right] \\
    &\qquad + \E \left[ \left| (\textrm{II}) - (\textrm{II}') \right| \cdot (\textrm{I}') \cdot (\textrm{III}) \cdot (\textrm{IV}) \cdot (\textrm{V}) \right]
    + \E \left[ \left| (\textrm{III}) - (\textrm{III}') \right| \cdot (\textrm{I}') \cdot (\textrm{II}') \cdot (\textrm{IV}) \cdot (\textrm{V}) \right] \\
    &\qquad + \E \left[ \left| (\textrm{IV}) - (\textrm{IV}') \right| \cdot (\textrm{I}') \cdot (\textrm{II}') \cdot (\textrm{III}') \cdot (\textrm{V}) \right]
    + \E \left[ \left| (\textrm{V}) - (\textrm{V}') \right| \cdot (\textrm{I}') \cdot (\textrm{II}') \cdot (\textrm{III}') \cdot (\textrm{IV}') \right] \\
    &\qquad\qquad\qquad\qquad\qquad\qquad\qquad\qquad\qquad\qquad \qquad\qquad\qquad\qquad\qquad\qquad\qquad \mathrm d \beta_1 \cdots \mathrm d \beta_k \\
    &\quad \leq \sum_{j, j' = 0}^k  \sum_{x \in \mathcal{X}} \int_{[0, \alpha^*]^k} \sum_{x' \in \mathcal{X}} \sum_{u \in \mathcal{U}} \sum_{\ell = \max \{0, j + j' - k \}}^{\min \{j, j'\}} \E \left[ \left| (\textrm{I}) - (\textrm{I}') \right| \right] \cdot O (1) + \E \left[ \left| (\textrm{II}) - (\textrm{II}') \right| \right] \cdot O (1) \\
    &\qquad + \E \left[ \left| (\textrm{III}) - (\textrm{III}') \right| \right] \cdot O (1) 
    + \E \left[ \left| (\textrm{IV}) - (\textrm{IV}') \right|  \right] \cdot O (1)
    + \E \left[ \left| (\textrm{V}) - (\textrm{V}') \right| \right] \cdot O (1) \mathrm d \beta_1 \cdots \mathrm d \beta_k
\end{align*}
The first term converges to $0$ for $\nu \to \infty$ by induction, i.e.
\begin{align*}
    \E \left[ \left| (\textrm{I}) - (\textrm{I}') \right| \right] \cdot O (1) = o(1) \, .
\end{align*}
Since the remaining four terms are structurally very similar, we just prove convergence for the second term and point out that convergence for the terms three to five can be established analogously. Thus, for the second term we have
\begin{align*}
    &\E \left[ \left| (\textrm{II}) - (\textrm{II}') \right| \right] \cdot O (1) = \E \left[ \left| \left(\sum_{u \in \mathcal{U}} \hat{\pi}_{\beta_i, t} \left(u \mid s_1 \right) \cdot P \left( s_1 \mid s_1, u, \mathbb{G}_{\beta_i, t}^{\infty} \left( \hat{\boldsymbol \mu}^{\nu}_{t} \right) \right)\right)^{\ell} \right.\right. \\
    &\qquad \qquad \qquad \qquad \qquad \qquad \left. \left. - \left(\sum_{u \in \mathcal{U}} \hat{\pi}_{\beta_i, t} \left(u \mid s_1 \right) \cdot P \left( s_1 \mid s_1, u, \widehat{\mathbb{G}}_{\beta_i, t}^{\nu, \infty} \left( \hat{\boldsymbol \mu}^{\nu}_{t} \right) \right)\right)^{\ell} \right| \right] \cdot O (1) \, .
\end{align*}
We briefly recall that for arbitrary $p,q < \infty$ and finite $\ell \in \mathbb{N}$ we have
\begin{align}
    p^\ell - q^\ell = (p -q) p^{\ell - 1} + q p^{\ell - 1} - q^\ell = \ldots = (p -q) \sum_{j = 1}^\ell p^{\ell - j} q^{j - 1} = (p -q) O(1) \label{eq:bin_approx}
\end{align}
which brings us to
\begin{align*}
    &\E \left[ \left| (\textrm{II}) - (\textrm{II}') \right| \right] \cdot O (1) \\
    &\quad = \E \left[ \left| \left(\sum_{u \in \mathcal{U}} \hat{\pi}_{\beta_i, t} \left(u \mid s_1 \right) \cdot P \left( s_1 \mid s_1, u, \mathbb{G}_{\beta_i, t}^{\nu, \infty} \left( \hat{\boldsymbol \mu}^{\nu}_{t} \right) \right)\right)^{\ell} \right.\right. \\
    &\quad \qquad \qquad \qquad \qquad \qquad \left. \left. - \left(\sum_{u \in \mathcal{U}} \hat{\pi}_{\beta_i, t} \left(u \mid s_1 \right) \cdot P \left( s_1 \mid s_1, u, \widehat{\mathbb{G}}_{\beta_i, t}^{\nu, \infty} \left( \hat{\boldsymbol \mu}^{\nu}_{t} \right) \right)\right)^{\ell} \right| \right] \cdot O (1) \\
    &\quad = \E \left[ \left| \sum_{u \in \mathcal{U}} \hat{\pi}_{\beta_i, t} \left(u \mid s_1 \right) \cdot \left( P \left( s_1 \mid s_1, u, \mathbb{G}_{\beta_i, t}^{\nu, \infty} \left( \hat{\boldsymbol \mu}^{\nu}_{t} \right) \right) -  P \left( s_1 \mid s_1, u, \widehat{\mathbb{G}}_{\beta_i, t}^{\nu, \infty} \left( \hat{\boldsymbol \mu}^{\nu}_{t} \right) \right) \right) \right| \right] \cdot O (1) \\
    &\quad \leq \sum_{u \in \mathcal{U}} \hat{\pi}_{\beta_i, t} \left(u \mid s_1 \right) \cdot \E \left[ \left|  P \left( s_1 \mid s_1, u, \mathbb{G}_{\beta_i, t}^{\nu, \infty} \left( \hat{\boldsymbol \mu}^{\nu}_{t} \right) \right) -  P \left( s_1 \mid s_1, u, \widehat{\mathbb{G}}_{\beta_i, t}^{\nu, \infty} \left( \hat{\boldsymbol \mu}^{\nu}_{t} \right) \right) \right| \right] \cdot O (1) \\
    &\quad \leq o(1) + \varepsilon_{\alpha^*} \cdot O(1) \, .
\end{align*}
Combining all of these findings, we eventually arrive at
\begin{align*}
    &\E \left[ \left| \hat{\boldsymbol \mu}^{\nu, k}_{t} P_{t, \hat{\boldsymbol \mu}, \widehat{W}}^{\hat{\boldsymbol \pi}, k}(f) - \hat{\boldsymbol \mu}^{\nu, k}_{t} P_{t, \hat{\boldsymbol \mu}, W}^{\hat{\boldsymbol \pi}, k}(f) \right| \right] \\
    &\quad \leq \frac{M_f \sqrt{2 \vert E_\nu \vert}}{\vert V_{\nu, k} \vert} \E \left[  \sum_{i \in V_\nu} \boldsymbol{1}_{\{ \deg (v_i) = k \}} \int_{\frac{i-1}{\sqrt{2 \vert E_\nu \vert}}}^{\frac{i}{\sqrt{2 \vert E_\nu \vert}}} \sum_{j = 0}^k  \sum_{x \in \mathcal{X}} \left| P_{\hat{\boldsymbol \pi}, \hat{\boldsymbol{\mu}}} \left( \widehat{\mathbb{G}}_{\alpha, t}^{\nu, k} \left( \hat{\boldsymbol \mu}^{\nu}_{t} \right) = j, \hat{x}_{\alpha, t} = x \right) \right.\right. \\
    &\qquad \qquad \qquad \qquad \qquad \qquad \qquad \left. \left.
    - P_{\hat{\boldsymbol \pi}, \hat{\boldsymbol{\mu}}} \left( \mathbb{G}_{\alpha, t}^{\nu, k} \left( \hat{\boldsymbol \mu}^{\nu}_{t} \right) = j, \hat{x}_{\alpha, t} = x \right) \right| \, \mathrm d \alpha  \right] \\
    &\quad \leq \frac{M_f \sqrt{2 \vert E_\nu \vert}}{\vert V_{\nu, k} \vert} \E \left[  \sum_{i \in V_{\nu, k}} \int_{\frac{i-1}{\sqrt{2 \vert E_\nu \vert}}}^{\frac{i}{\sqrt{2 \vert E_\nu \vert}}} \left( o(1) + \varepsilon_{\alpha^*} \cdot O (1) \right) \, \mathrm d \alpha  \right] = o(1) + \varepsilon_{\alpha^*} \cdot O (1) \, .
\end{align*}

\paragraph{Fourth term.}
For the fourth term we have
\begin{align*}
    &\E \left[ \left| \hat{\boldsymbol \mu}^{\nu, k}_{t} P_{t, \hat{\boldsymbol \mu}, W}^{\hat{\boldsymbol \pi}, k}(f) - \hat{\boldsymbol \mu}^{\nu, k}_{t} P_{t, \hat{\boldsymbol \mu}, W}^{\boldsymbol \pi, k}(f) \right| \right] \\
    &= \E \left[ \left|\frac{\sqrt{2 \vert E_\nu \vert}}{\vert V_{\nu, k} \vert} \cdot \sum_{i \in V_{\nu, k}} \int_{\frac{i-1}{\sqrt{2 \vert E_\nu \vert}}}^{\frac{i}{\sqrt{2 \vert E_\nu \vert}}} \sum_{j = 0}^k \sum_{x \in \mathcal{X}} P_{\hat{\boldsymbol \pi}, \hat{\boldsymbol{\mu}}} \left( \mathbb{G}_{\alpha, t}^{k} \left( \hat{\boldsymbol \mu}^{\nu}_{t} \right) = j, \hat{x}_{\alpha, t} = x \right) \right.\right. \\
    &\qquad \qquad \qquad \qquad \qquad \qquad \qquad \cdot \sum_{u \in \mathcal{U}} \hat{\pi}_{\alpha, t} \left(u \mid x \right) \cdot P \left( x' \, \mid x, u, \frac{1}{k} \left( j, k - j \right) \right) \, f(x', \alpha) \, \mathrm d\alpha  \\
    &\qquad - \frac{\sqrt{2 \vert E_\nu \vert}}{\vert V_{\nu, k} \vert} \cdot \sum_{i \in V_{\nu, k}} \int_{\frac{i-1}{\sqrt{2 \vert E_\nu \vert}}}^{\frac{i}{\sqrt{2 \vert E_\nu \vert}}} \sum_{j = 0}^k \sum_{x \in \mathcal{X}} P_{\boldsymbol \pi, \hat{\boldsymbol{\mu}}} \left( \mathbb{G}_{\alpha, t}^{k} \left( \hat{\boldsymbol \mu}^{\nu}_{t} \right) = j, \hat{x}_{\alpha, t} = x \right) \\
    &\qquad \qquad \qquad \qquad \qquad \qquad \qquad \left.\left. \cdot \sum_{u \in \mathcal{U}} \pi_{\alpha, t}^k \left(u \mid x \right) \cdot P \left( x' \, \mid x, u, \frac{1}{k} \left( j, k - j \right) \right) \, f(x', \alpha) \, \mathrm d\alpha \right| \right] \\
    &\leq \E \left[ \frac{\sqrt{2 \vert E_\nu \vert}}{\vert V_{\nu, k} \vert} \cdot  \sum_{i \in V_{\nu, k}} \int_{\frac{i-1}{\sqrt{2 \vert E_\nu \vert}}}^{\frac{i}{\sqrt{2 \vert E_\nu \vert}}} \sum_{j = 0}^k \sum_{x \in \mathcal{X}} P_{\boldsymbol \pi, \hat{\boldsymbol{\mu}}} \left( \mathbb{G}_{\alpha, t}^{k} \left( \hat{\boldsymbol \mu}^{\nu}_{t} \right) = j, \hat{x}_{\alpha, t} = x \right) \right. \\
    &\qquad \qquad \qquad \qquad  \left. \cdot \sum_{u \in \mathcal{U}} \left| \pi_{\alpha, t}^k \left(u \mid x \right) - \hat{\pi}_{\alpha, t} \left(u \mid x \right)\right| \cdot P \left( x' \, \mid x, u, \frac{1}{k} \left( j, k - j \right) \right) \, f(x', \alpha) \, \mathrm d\alpha  \right] \\
    &\qquad + \E \left[ \frac{\sqrt{2 \vert E_\nu \vert}}{\vert V_{\nu, k} \vert} \cdot \sum_{i \in V_{\nu, k}} \int_{\frac{i-1}{\sqrt{2 \vert E_\nu \vert}}}^{\frac{i}{\sqrt{2 \vert E_\nu \vert}}} \sum_{j = 0}^k \sum_{x \in \mathcal{X}} \sum_{u \in \mathcal{U}} \hat{\pi}_{\alpha, t} \left(u \mid x \right) \cdot P \left( x' \, \mid x, u, \frac{1}{k} \left( j, k - j \right) \right) \right. \\
    &\qquad \qquad \left. \cdot \left| P_{\hat{\boldsymbol \pi}, \hat{\boldsymbol{\mu}}} \left( \mathbb{G}_{\alpha, t}^{k} \left( \hat{\boldsymbol \mu}^{\nu}_{t} \right) = j, \hat{x}_{\alpha, t} = x \right) - P_{\boldsymbol \pi, \hat{\boldsymbol{\mu}}} \left( \mathbb{G}_{\alpha, t}^{k} \left( \hat{\boldsymbol \mu}^{\nu}_{t} \right) = j, \hat{x}_{\alpha, t} = x \right) \right| f(x', \alpha) \, \mathrm d\alpha \right] \, .
    \end{align*}
We start with analyzing the first summand, i.e.
\begin{align*}
   &\E \left[ \frac{\sqrt{2 \vert E_\nu \vert}}{\vert V_{\nu, k} \vert} \cdot \sum_{i \in V_{\nu, k}} \int_{\frac{i-1}{\sqrt{2 \vert E_\nu \vert}}}^{\frac{i}{\sqrt{2 \vert E_\nu \vert}}} \sum_{j = 0}^k \sum_{x \in \mathcal{X}} P_{\boldsymbol \pi, \hat{\boldsymbol{\mu}}} \left( \mathbb{G}_{\alpha, t}^{k} \left( \hat{\boldsymbol \mu}^{\nu}_{t} \right) = j, \hat{x}_{\alpha, t} = x \right) \right. \\
   &\qquad \qquad \qquad \qquad \left. \cdot \sum_{u \in \mathcal{U}} \left| \pi_{\alpha, t}^k \left(u \mid x \right) - \hat{\pi}_{\alpha, t} \left(u \mid x \right)\right| \cdot P \left( x' \, \mid x, u, \frac{1}{k} \left( j, k - j \right) \right) \, f(x', \alpha) \, \mathrm d\alpha  \right] \\
   &\leq \E \left[ \frac{\sqrt{2 \vert E_\nu \vert}}{\vert V_{\nu, k} \vert} \cdot  \sum_{\substack{i \in  V_{\nu, k} \\ i \neq i'}} \int_{\frac{i-1}{\sqrt{2 \vert E_\nu \vert}}}^{\frac{i}{\sqrt{2 \vert E_\nu \vert}}}  \sum_{j = 0}^k \sum_{x \in \mathcal{X}}  P_{\boldsymbol \pi, \hat{\boldsymbol{\mu}}} \left( \mathbb{G}_{\alpha, t}^{k} \left( \hat{\boldsymbol \mu}^{\nu}_{t} \right) = j, \hat{x}_{\alpha, t} = x \right)\right. \\
   &\qquad \qquad \left. \cdot \sum_{u \in \mathcal{U}} \left| \pi^k_{\frac{i}{\sqrt{2 \vert E_\nu \vert}}, t} \left(u \mid x \right) - \hat{\pi}_{\alpha, t} \left(u \mid x \right) \right| \cdot P \left( x' \, \mid x, u, \frac{1}{k} \left( j, k - j \right) \right) \, f(x', \alpha) \, \mathrm d\alpha  \right] \\
   &\qquad + \frac{2 M_f \int_{\frac{i-1}{\sqrt{2 \vert E_\nu \vert}}}^{\frac{i}{\sqrt{2 \vert E_\nu \vert}}}  \textrm{Poi}_{\nu^*, \alpha}^W (k) \, \mathrm d \alpha}{\int_0^\infty \textrm{Poi}_{\nu^*, \alpha}^W (k) \, \mathrm d \alpha} \\
   &\leq \E \left[ \frac{\sqrt{2 \vert E_\nu \vert}}{\vert V_{\nu, k} \vert} \cdot \sum_{i \in V_{\nu, k}} \int_{\frac{i-1}{\sqrt{2 \vert E_\nu \vert}}}^{\frac{i}{\sqrt{2 \vert E_\nu \vert}}} \frac{L_\pi  M_f}{\sqrt{2 \vert E_\nu \vert}} \, \mathrm d \alpha \right] + O \left(\frac{1}{\sqrt{\vert E_\nu \vert}} \right) \\
   &= O \left(\frac{1}{\sqrt{\vert E_\nu \vert}} \right) + O \left(\frac{1}{\sqrt{\vert E_\nu \vert}} \right) = O \left(\frac{1}{\sqrt{\vert E_\nu \vert}} \right) \to 0 \qquad \textrm{ as } \quad \nu \to \infty \, .
\end{align*}
Now it remains to bound the second summand, which is done as follows
\begin{align*}
    &\E \left[ \frac{\sqrt{2 \vert E_\nu \vert}}{\vert V_{\nu, k} \vert} \cdot \sum_{i \in V_{\nu, k}} \int_{\frac{i-1}{\sqrt{2 \vert E_\nu \vert}}}^{\frac{i}{\sqrt{2 \vert E_\nu \vert}}} \sum_{j = 0}^k \sum_{x \in \mathcal{X}} \sum_{u \in \mathcal{U}} \hat{\pi}_{\alpha, t} \left(u \mid x \right) \cdot P \left( x' \, \mid x, u, \frac{1}{k} \left( j, k - j \right) \right)  \right. \\
    &\qquad \qquad \left. \cdot \left| P_{\hat{\boldsymbol \pi}, \hat{\boldsymbol{\mu}}} \left( \mathbb{G}_{\alpha, t}^{k} \left( \hat{\boldsymbol \mu}^{\nu}_{t} \right) = j, \hat{x}_{\alpha, t} = x \right) - P_{\boldsymbol \pi, \hat{\boldsymbol{\mu}}} \left( \mathbb{G}_{\alpha, t}^{k} \left( \hat{\boldsymbol \mu}^{\nu}_{t} \right) = j, \hat{x}_{\alpha, t} = x \right) \right| f(x', \alpha) \, \mathrm d\alpha \right] \\
    &\leq \frac{M_f \sqrt{2 \vert E_\nu \vert}}{\vert V_{\nu, k} \vert} \cdot  \sum_{i \in V_{\nu, k}} \int_{\frac{i-1}{\sqrt{2 \vert E_\nu \vert}}}^{\frac{i}{\sqrt{2 \vert E_\nu \vert}}} \sum_{j = 0}^k \sum_{x \in \mathcal{X}} \sum_{u \in \mathcal{U}} \\
    &\qquad \qquad \cdot \E \left[ \left| P_{\hat{\boldsymbol \pi}, \hat{\boldsymbol{\mu}}} \left( \mathbb{G}_{\alpha, t}^{k} \left( \hat{\boldsymbol \mu}^{\nu}_{t} \right) = j, \hat{x}_{\alpha, t} = x \right) - P_{\boldsymbol \pi, \hat{\boldsymbol{\mu}}} \left( \mathbb{G}_{\alpha, t}^{k} \left( \hat{\boldsymbol \mu}^{\nu}_{t} \right) = j, \hat{x}_{\alpha, t} = x \right) \right| \right] \, \mathrm d\alpha \, .
\end{align*}
To further analyze
\begin{align*}
    \E \left[ \left| P_{\boldsymbol \pi, \hat{\boldsymbol{\mu}}} \left( \mathbb{G}_{\alpha, t}^{k} \left( \hat{\boldsymbol \mu}^{\nu}_{t} \right) = j, \hat{x}_{\alpha, t} = x \right) - P_{\hat{\boldsymbol \pi}, \hat{\boldsymbol{\mu}}} \left( \mathbb{G}_{\alpha, t}^{k} \left( \hat{\boldsymbol \mu}^{\nu}_{t} \right) = j, \hat{x}_{\alpha, t} = x \right) \right| \right]
\end{align*}
we define for notational convenience (and tacitly assuming that $\alpha \notin \left(\frac{i'-1}{\sqrt{2 \vert E_\nu \vert}}, \frac{i'}{\sqrt{2 \vert E_\nu \vert}} \right]$)
\begin{align*}
    &P_{\boldsymbol{\pi}, \hat{\boldsymbol{\mu}}} \left( \mathbb{G}_{\alpha, t}^{k} \left( \hat{\boldsymbol \mu}^{\nu}_{t} \right) = j, \hat{x}_{\alpha, t} = x \right) \\
    &\qquad \qquad = \int_{[0, \alpha^*]^k} \sum_{x' \in \mathcal{X}} \sum_{j' = 0}^k \sum_{u \in \mathcal{U}} \sum_{\ell = \max \{0, j + j' - k \}}^{\min \{j, j'\}} (\textrm{I}) \cdot (\textrm{II}) \cdot (\textrm{III}) \cdot (\textrm{IV}) \cdot (\textrm{V}) \, \mathrm d \beta_1 \cdots \mathrm d \beta_k
\end{align*}
and 
\begin{align*}
    &P_{\hat{\boldsymbol{\pi}}, \hat{\boldsymbol{\mu}}} \left( \mathbb{G}_{\alpha, t}^{k} \left( \hat{\boldsymbol \mu}^{\nu}_{t} \right) = j, \hat{x}_{\alpha, t} = x \right) \\
    &\qquad \qquad =\int_{[0, \alpha^*]^k} \sum_{x' \in \mathcal{X}} \sum_{j' = 0}^k \sum_{u \in \mathcal{U}} \sum_{\ell = \max \{0, j + j' - k \}}^{\min \{j, j'\}} (\textrm{I}') \cdot (\textrm{II}') \cdot (\textrm{III}') \cdot (\textrm{IV}') \cdot (\textrm{V}') \, \mathrm d \beta_1 \cdots \mathrm d \beta_k
\end{align*}
with
\begin{align*}
    (\textrm{I}) &= P_{\boldsymbol{\pi}, \hat{\boldsymbol{\mu}}} \left( \mathbb{G}_{\alpha, t}^{k} \left( \hat{\boldsymbol \mu}^{\nu}_{t} \right) = j', x_{\alpha, t} = x' \right) \pi_{\alpha, t}^k \left(u \mid x' \right) \cdot P \left( x \mid x', u, \frac{j'}{k} \right) \\
    &\qquad\qquad\qquad\qquad\qquad \cdot \frac{ \sum_{(y_1, \ldots y_{k}) \in \mathcal{X}^{k}_j} \prod_{i=1}^k \frac{W(\alpha, \beta_i)}{\xi_W (\alpha)} \hat{\mu}^{\nu}_{\beta_i, t} (y_i) }{\int_{[0, \alpha^*]^k} \sum_{(z_1, \ldots z_{k}) \in \mathcal{X}^{k}_j} \prod_{i=1}^k \frac{W(\alpha, \beta'_i)}{\xi_W (\alpha)} \hat{\mu}^{\nu}_{\beta'_i, t} (z_i)  \, \mathrm d \beta'} \\
    (\textrm{II}) &= \binom{j'}{\ell} \left(\sum_{u \in \mathcal{U}} \pi_{\beta_i, t}^\infty \left(u \mid s_1 \right) \cdot P \left( s_1 \mid s_1, u, \mathbb{G}_{\beta_i, t}^{\infty} \left( \hat{\boldsymbol \mu}^{\nu}_{t} \right) \right)\right)^{\ell} \\
    (\textrm{III}) &= \left(\sum_{u \in \mathcal{U}} \pi_{\beta_i, t}^\infty \left(u \mid s_2 \right) \cdot P \left( s_1 \mid s_2, u, \mathbb{G}_{\beta_i, t}^{\infty} \left( \hat{\boldsymbol \mu}^{\nu}_{t} \right) \right) \right)^{j - \ell} \\
    (\textrm{IV}) &= \binom{k-j'}{j - \ell} \left( \sum_{u \in \mathcal{U}} \pi_{\beta_i, t}^\infty \left(u \mid s_1 \right) \cdot P \left( s_2 \mid s_1, u,\mathbb{G}_{\beta_i, t}^{\infty} \left( \hat{\boldsymbol \mu}^{\nu}_{t} \right) \right) \right)^{j' - \ell} \\
    (\textrm{V}) &= \left( \sum_{u \in \mathcal{U}} \pi_{\beta_i, t}^\infty \left(u \mid s_2 \right) \cdot P \left( s_2 \mid s_2, u, \mathbb{G}_{\beta_i, t}^{\infty} \left( \hat{\boldsymbol \mu}^{\nu}_{t} \right) \right) \right)^{k - j' - j + \ell}
\end{align*}
and $(\textrm{I}'), (\textrm{II}'), (\textrm{III}'), (\textrm{IV}')$, and $(\textrm{V}')$ defined accordingly. As on the previous pages, we use a telescope sum to arrive at the bound
\begin{align*}
    &\sum_{j = 0}^k  \sum_{x \in \mathcal{X}} \E \left[ \left| P_{\boldsymbol \pi, \hat{\boldsymbol{\mu}}} \left( \mathbb{G}_{\alpha, t}^{k} \left( \hat{\boldsymbol \mu}^{\nu}_{t} \right) = j, \hat{x}_{\alpha, t} = x \right)
    - P_{\hat{\boldsymbol \pi}, \hat{\boldsymbol{\mu}}} \left( \mathbb{G}_{\alpha, t}^{k} \left( \hat{\boldsymbol \mu}^{\nu}_{t} \right) = j, \hat{x}_{\alpha, t} = x \right) \right| \right] \\
    &\leq \sum_{j = 0}^k  \sum_{x \in \mathcal{X}} \int_{[0, \alpha^*]^k} \sum_{x' \in \mathcal{X}} \sum_{j' = 0}^k \sum_{u \in \mathcal{U}} \sum_{\ell = \max \{0, j + j' - k \}}^{\min \{j, j'\}} \E \left[ \left| (\textrm{I}) - (\textrm{I}') \right| \right] \cdot O (1) + \E \left[ \left| (\textrm{II}) - (\textrm{II}') \right| \right] \cdot O (1) \\
    &\qquad + \E \left[ \left| (\textrm{III}) - (\textrm{III}') \right| \right] \cdot O (1) 
    + \E \left[ \left| (\textrm{IV}) - (\textrm{IV}') \right|  \right] \cdot O (1)
    + \E \left[ \left| (\textrm{V}) - (\textrm{V}') \right| \right] \cdot O (1) \, \mathrm d \beta_1 \cdots \mathrm d \beta_k
\end{align*}
where the first term goes to $0$ for $\nu \to \infty$ by induction, i.e.
\begin{align*}
    \E \left[ \left| (\textrm{I}) - (\textrm{I}') \right| \right] \cdot O (1) = o(1) \, .
\end{align*}
For the remaining four terms, we recall equation \eqref{eq:bin_approx} which yields\begin{align*}
    &\sum_{j = 0}^k  \sum_{x \in \mathcal{X}} \E \left[ \left| P_{\boldsymbol \pi, \hat{\boldsymbol{\mu}}} \left( \mathbb{G}_{\alpha, t}^{k} \left( \hat{\boldsymbol \mu}^{\nu}_{t} \right) = j, \hat{x}_{\alpha, t} = x \right)
    - P_{\hat{\boldsymbol \pi}, \hat{\boldsymbol{\mu}}} \left( \mathbb{G}_{\alpha, t}^{k} \left( \hat{\boldsymbol \mu}^{\nu}_{t} \right) = j, \hat{x}_{\alpha, t} = x \right) \right| \right] \\
    &\quad \leq o(1) + \sum_{j = 0}^k  \sum_{x \in \mathcal{X}} \int_{[0, \alpha^*]^k} \sum_{x' \in \mathcal{X}} \sum_{j' = 0}^k \sum_{u \in \mathcal{U}} \sum_{\ell = \max \{0, j + j' - k \}}^{\min \{j, j'\}} O(1) \\
    &\qquad \cdot \left( \E \left[ \sum_{u' \in \mathcal{U}} \left|   \hat{\pi}_{\beta_i, t} \left(u' \mid s_1 \right) - \pi_{\beta_i, t}^\infty \left(u' \mid s_1 \right) \right| \right] + \E \left[ \sum_{u' \in \mathcal{U}} \left|   \hat{\pi}_{\beta_i, t} \left(u' \mid s_2 \right) - \pi_{\beta_i, t}^\infty \left(u' \mid s_2 \right) \right| \right] \right)\\
    &\qquad\qquad\qquad\qquad\qquad\qquad\qquad\qquad\qquad\qquad \qquad\qquad\qquad\qquad\qquad\qquad\qquad \mathrm d \beta_1 \cdots \mathrm d \beta_k \, .
\end{align*}
Eventually, this brings us to
\begin{align*}
    &\frac{M_f \sqrt{2 \vert E_\nu \vert}}{\vert V_{\nu, k} \vert} \cdot  \sum_{i \in V_{\nu, k}} \int_{\frac{i-1}{\sqrt{2 \vert E_\nu \vert}}}^{\frac{i}{\sqrt{2 \vert E_\nu \vert}}} \sum_{j = 0}^k \sum_{x \in \mathcal{X}} \sum_{u \in \mathcal{U}} \\
    &\qquad \qquad \cdot \E \left[ \left| P_{\hat{\boldsymbol \pi}, \hat{\boldsymbol{\mu}}} \left( \mathbb{G}_{\alpha, t}^{k} \left( \hat{\boldsymbol \mu}^{\nu}_{t} \right) = j, \hat{x}_{\alpha, t} = x \right) - P_{\boldsymbol \pi, \hat{\boldsymbol{\mu}}} \left( \mathbb{G}_{\alpha, t}^{k} \left( \hat{\boldsymbol \mu}^{\nu}_{t} \right) = j, \hat{x}_{\alpha, t} = x \right) \right| \right] \, \mathrm d\alpha \\
    &\leq o(1) + \frac{M_f \sqrt{2 \vert E_\nu \vert}}{\vert V_{\nu, k} \vert} \cdot \sum_{i \in V_{\nu, k}} \int_{\frac{i-1}{\sqrt{2 \vert E_\nu \vert}}}^{\frac{i}{\sqrt{2 \vert E_\nu \vert}}} \sum_{j = 0}^k  \sum_{x \in \mathcal{X}} \int_{[0, \alpha^*]^k} \sum_{x' \in \mathcal{X}} \sum_{j' = 0}^k \sum_{u \in \mathcal{U}} \sum_{\ell = \max \{0, j + j' - k \}}^{\min \{j, j'\}} O(1) \\
    &\qquad \cdot \left( \E \left[ \sum_{u' \in \mathcal{U}} \left|   \hat{\pi}_{i, t}^\infty \left(u' \mid s_1 \right) - \pi_{i, t}^\infty \left(u' \mid s_1 \right) \right| \right] + \E \left[ \sum_{u' \in \mathcal{U}} \left|   \hat{\pi}_{i, t}^\infty \left(u' \mid s_2 \right) - \pi_{i, t}^\infty \left(u' \mid s_2 \right) \right| \right] \right)  \\
    &\qquad\qquad\qquad\qquad\qquad\qquad\qquad\qquad\qquad \qquad\qquad\qquad\qquad\qquad\qquad\qquad \mathrm d \beta_1 \cdots \mathrm d \beta_k \textrm d \alpha
\end{align*}
which is upper bounded by $\varepsilon$ for $\nu \to \infty$ by an argument as for the first term. This concludes this part of the proof.

\paragraph{Fifth term.}
We start with 
\begin{align*}
    &\E \left[ \left| \hat{\boldsymbol \mu}^{\nu, k}_{t} P_{t, \hat{\boldsymbol \mu}, W}^{\boldsymbol \pi, k}(f) - \hat{\boldsymbol \mu}^{\nu, k}_{t} P_{t, \boldsymbol \mu, W}^{\boldsymbol \pi, k}(f) \right| \right] \\
    &= \E \left[ \left|\frac{\sqrt{2 \vert E_\nu \vert}}{\vert V_{\nu, k} \vert} \cdot  \sum_{i \in V_{\nu, k}} \int_{\frac{i-1}{\sqrt{2 \vert E_\nu \vert}}}^{\frac{i}{\sqrt{2 \vert E_\nu \vert}}} \sum_{j = 0}^k \sum_{x \in \mathcal{X}} P_{\boldsymbol{\pi}, \hat{\boldsymbol{\mu}}} \left( \mathbb{G}_{\alpha, t}^{k} \left( \hat{\boldsymbol \mu}^{\nu}_{t} \right) = j, \hat{x}_{\alpha, t} = x \right) \right.\right. \\
    &\qquad \qquad \qquad \qquad \qquad \qquad \sum_{u \in \mathcal{U}} \pi_{\alpha, t}^k \left(u \mid x \right) \sum_{x' \in \mathcal{X}} P \left( x' \, \mid x, u, \frac{1}{k} \left( j, k - j \right) \right) \, f(x', \alpha) \, \mathrm d\alpha  \\
    &\quad - \frac{\sqrt{2 \vert E_\nu \vert}}{\vert V_{\nu, k} \vert} \cdot  \sum_{i \in V_{\nu, k}} \int_{\frac{i-1}{\sqrt{2 \vert E_\nu \vert}}}^{\frac{i}{\sqrt{2 \vert E_\nu \vert}}} \sum_{j = 0}^k \sum_{x \in \mathcal{X}} P_{\boldsymbol{\pi}, \hat{\boldsymbol{\mu}}} \left( \mathbb{G}_{\alpha, t}^{k} \left( {\boldsymbol \mu}^{\nu}_{t} \right) = j, \hat{x}_{\alpha, t} = x \right) \\
    &\qquad \qquad \qquad \qquad \qquad \qquad \left.\left. \sum_{u \in \mathcal{U}} \pi_{\alpha, t}^k \left(u \mid x \right) \sum_{x' \in \mathcal{X}} P \left( x' \, \mid x, u, \frac{1}{k} \left( j, k - j \right) \right) \, f(x', \alpha) \, \mathrm d\alpha \right| \right] \\
    &\leq \frac{M_f \sqrt{2 \vert E_\nu \vert}}{\vert V_{\nu, k} \vert} \cdot  \sum_{i \in V_{\nu, k}} \int_{\frac{i-1}{\sqrt{2 \vert E_\nu \vert}}}^{\frac{i}{\sqrt{2 \vert E_\nu \vert}}} \E \left[ \sum_{j = 0}^k \sum_{x \in \mathcal{X}} \left| P_{\boldsymbol{\pi}, \hat{\boldsymbol{\mu}}} \left( \mathbb{G}_{\alpha, t}^{k} \left( \hat{\boldsymbol \mu}^{\nu}_{t} \right) = j, \hat{x}_{\alpha, t} = x \right) \right. \right. \\
    &\qquad \qquad \qquad \qquad \qquad \qquad \qquad \qquad \qquad \qquad \left.\left. - P_{\boldsymbol{\pi}, \hat{\boldsymbol{\mu}}} \left( \mathbb{G}_{\alpha, t}^{k} \left( {\boldsymbol \mu}^{\nu}_{t} \right) = j, \hat{x}_{\alpha, t} = x \right) \right| \right] \, \mathrm d\alpha  .
\end{align*}
Next, we reformulate
\begin{align*}
    \E \left[ \sum_{j = 0}^k \sum_{x \in \mathcal{X}} \left| P_{\boldsymbol{\pi}, \hat{\boldsymbol{\mu}}} \left( \mathbb{G}_{\alpha, t}^{k} \left( \hat{\boldsymbol \mu}^{\nu}_{t} \right) = j, \hat{x}_{\alpha, t} = x \right) - P_{\boldsymbol{\pi}, \hat{\boldsymbol{\mu}}} \left( \mathbb{G}_{\alpha, t}^{k} \left( {\boldsymbol \mu}^{\nu}_{t} \right) = j, \hat{x}_{\alpha, t} = x \right) \right| \right]
\end{align*}
by applying a telescope sum trick as for the third term, namely we state
\begin{align*}
    &P_{\boldsymbol{\pi}, \hat{\boldsymbol{\mu}}} \left( \mathbb{G}_{\alpha, t}^{k} \left( \hat{\boldsymbol \mu}^{\nu}_{t} \right) = j, \hat{x}_{\alpha, t} = x \right) \\
    &\qquad \qquad= \int_{[0, \alpha^*]^k} \sum_{x' \in \mathcal{X}} \sum_{j' = 0}^k \sum_{u \in \mathcal{U}} \sum_{\ell = \max \{0, j + j' - k \}}^{\min \{j, j'\}} (\textrm{I}) \cdot (\textrm{II}) \cdot (\textrm{III}) \cdot (\textrm{IV}) \cdot (\textrm{V}) \, \mathrm d \beta_1 \cdots \mathrm d \beta_k
\end{align*}
and also
\begin{align*}
    &P_{\boldsymbol{\pi}, \hat{\boldsymbol{\mu}}} \left( \mathbb{G}_{\alpha, t}^{k} \left( {\boldsymbol \mu}^{\nu}_{t} \right) = j, \hat{x}_{\alpha, t} = x \right) \\
    &\qquad \qquad = \int_{[0, \alpha^*]^k} \sum_{x' \in \mathcal{X}} \sum_{j' = 0}^k \sum_{u \in \mathcal{U}} \sum_{\ell = \max \{0, j + j' - k \}}^{\min \{j, j'\}} (\textrm{I}') \cdot (\textrm{II}') \cdot (\textrm{III}') \cdot (\textrm{IV}') \cdot (\textrm{V}') \, \mathrm d \beta_1 \cdots \mathrm d \beta_k
\end{align*}
where we define
\begin{align*}
    (\textrm{I}) &= P_{\boldsymbol{\pi}, \hat{\boldsymbol{\mu}}} \left( \mathbb{G}_{\alpha, t}^{k} \left( \hat{\boldsymbol \mu}^{\nu}_{t} \right) = j', x_{\alpha, t} = x' \right) \pi_{\alpha, t}^k \left(u \mid x' \right) \cdot P \left( x \mid x', u, \frac{j'}{k} \right) \\
    &\qquad\qquad\qquad\qquad\qquad \cdot \frac{ \sum_{(y_1, \ldots y_{k}) \in \mathcal{X}^{k}_j} \prod_{i=1}^k \frac{W(\alpha, \beta_i)}{\xi_W (\alpha)} \hat{\mu}^{\nu}_{\beta_i, t} (y_i) }{\int_{[0, \alpha^*]^k} \sum_{(z_1, \ldots z_{k}) \in \mathcal{X}^{k}_j} \prod_{i=1}^k \frac{W(\alpha, \beta'_i)}{\xi_W (\alpha)} \hat{\mu}^{\nu}_{\beta'_i, t} (z_i)  \, \textrm d \beta'} \\
    (\textrm{II}) &= \binom{j'}{\ell} \left(\sum_{u \in \mathcal{U}} \pi_{\beta_i, t}^\infty \left(u \mid s_1 \right) \cdot P \left( s_1 \mid s_1, u, \mathbb{G}_{\beta_i, t}^{\infty} \left( \hat{\boldsymbol \mu}^{\nu}_{t} \right) \right)\right)^{\ell} \\
    (\textrm{III}) &= \left(\sum_{u \in \mathcal{U}} \pi_{\beta_i, t}^\infty \left(u \mid s_2 \right) \cdot P \left( s_1 \mid s_2, u, \mathbb{G}_{\beta_i, t}^{\infty} \left( \hat{\boldsymbol \mu}^{\nu}_{t} \right) \right) \right)^{j - \ell} \\
    (\textrm{IV}) &= \binom{k-j'}{j - \ell} \left( \sum_{u \in \mathcal{U}} \pi_{\beta_i, t}^\infty \left(u \mid s_1 \right) \cdot P \left( s_2 \mid s_1, u,\mathbb{G}_{\beta_i, t}^{\infty} \left( \hat{\boldsymbol \mu}^{\nu}_{t} \right) \right) \right)^{j' - \ell} \\
    (\textrm{V}) &= \left( \sum_{u \in \mathcal{U}} \pi_{\beta_i, t}^\infty \left(u \mid s_2 \right) \cdot P \left( s_2 \mid s_2, u, \mathbb{G}_{\beta_i, t}^{\infty} \left( \hat{\boldsymbol \mu}^{\nu}_{t} \right) \right) \right)^{k - j' - j + \ell}
\end{align*}
and $(\textrm{I}'), (\textrm{II}'), (\textrm{III}'), (\textrm{IV}'), (\textrm{V}')$ are defined correspondingly. Thus, we obtain the useful upper bound
\begin{align*}
    &\E \left[ \sum_{j = 0}^k \sum_{x \in \mathcal{X}} \left| P_{\boldsymbol{\pi}, \hat{\boldsymbol{\mu}}} \left( \mathbb{G}_{\alpha, t}^{k} \left( \hat{\boldsymbol \mu}^{\nu}_{t} \right) = j, \hat{x}_{\alpha, t} = x \right) - P_{\boldsymbol{\pi}, \hat{\boldsymbol{\mu}}} \left( \mathbb{G}_{\alpha, t}^{k} \left( {\boldsymbol \mu}^{\nu}_{t} \right) = j, \hat{x}_{\alpha, t} = x \right) \right| \right] \\
    &\leq \sum_{j = 0}^k  \sum_{x \in \mathcal{X}} \int_{[0, \alpha^*]^k} \sum_{x' \in \mathcal{X}} \sum_{j' = 0}^k \sum_{u \in \mathcal{U}} \sum_{\ell = \max \{0, j + j' - k \}}^{\min \{j, j'\}} \E \left[ \left| (\textrm{I}) - (\textrm{I}') \right| \right] \cdot O (1) + \E \left[ \left| (\textrm{II}) - (\textrm{II}') \right| \right] \cdot O (1) \\
    &\qquad + \E \left[ \left| (\textrm{III}) - (\textrm{III}') \right| \right] \cdot O (1) 
    + \E \left[ \left| (\textrm{IV}) - (\textrm{IV}') \right|  \right] \cdot O (1)
    + \E \left[ \left| (\textrm{V}) - (\textrm{V}') \right| \right] \cdot O (1) \mathrm d \beta_1 \cdots \mathrm d \beta_k
\end{align*}
Here,
\begin{align*}
    \E \left[ \left| (\textrm{I}) - (\textrm{I}') \right| \right] \cdot O (1)
\end{align*}
converges to $0$ by induction and an argument as in the proof of inequality \eqref{ineq:neigh_probs}. Next, we focus on the term
\begin{align*}
   &\E \left[ \left| (\textrm{II}) - (\textrm{II}') \right| \right] \cdot O (1) \\
   &\quad = \E \left[ \left| \binom{j'}{\ell} \left(\sum_{u \in \mathcal{U}} \pi_{\beta_i, t}^\infty \left(u \mid s_1 \right) \cdot P \left( s_1 \mid s_1, u, \mathbb{G}_{\beta_i, t}^{\infty} \left( \hat{\boldsymbol \mu}^{\nu}_{t} \right) \right)\right)^{\ell} \right. \right. \\
   &\qquad \qquad \qquad \qquad \qquad \qquad \left. \left. - \binom{j'}{\ell} \left(\sum_{u \in \mathcal{U}} \pi_{\beta_i, t}^\infty \left(u \mid s_1 \right) \cdot P \left( s_1 \mid s_1, u, \mathbb{G}_{\beta_i, t}^{\infty} \left( {\boldsymbol \mu}^{\nu}_{t} \right) \right)\right)^{\ell} \right| \right] \cdot O (1)
\end{align*}
where we once again use equation \eqref{eq:bin_approx} to obtain 
\begin{align*}
   &\E \left[ \left| (\textrm{II}) - (\textrm{II}') \right| \right] \cdot O (1) \\
   &\quad = \E \left[ \left| \sum_{u \in \mathcal{U}} \pi_{\beta_i, t}^\infty \left(u \mid s_1 \right) \cdot P \left( s_1 \mid s_1, u, \mathbb{G}_{\beta_i, t}^{\infty} \left( \hat{\boldsymbol \mu}^{\nu}_{t} \right) \right) \right. \right. \\
   &\qquad \qquad \qquad \qquad \qquad \qquad \left. \left. - \sum_{u \in \mathcal{U}} \pi_{\beta_i, t}^\infty \left(u \mid s_1 \right) \cdot P \left( s_1 \mid s_1, u, \mathbb{G}_{\beta_i, t}^{\infty} \left( {\boldsymbol \mu}^{\nu}_{t} \right) \right) \right| \right] \cdot O (1) \\
   &\quad \leq \sum_{u \in \mathcal{U}} \pi_{\beta_i, t}^\infty \left(u \mid s_1 \right) \cdot \E \left[ \left|  P \left( s_1 \mid s_1, u, \mathbb{G}_{\beta_i, t}^{\infty} \left( \hat{\boldsymbol \mu}^{\nu}_{t} \right) \right) - P \left( s_1 \mid s_1, u, \mathbb{G}_{\beta_i, t}^{\infty} \left( {\boldsymbol \mu}^{\nu}_{t} \right) \right) \right| \right] \cdot O (1) \\
   &\quad \leq \sum_{u \in \mathcal{U}} \pi_{\beta_i, t}^\infty \left(u \mid s_1 \right) \cdot \max_{x \in \mathcal{X}} \E \left[ \left|  \frac{1}{\xi_W (\alpha)}\int_{\mathbb{R}_+} W (\alpha, \beta) \hat{\mu}_{\beta, t} \, \mathrm d \beta  - \frac{1}{\xi_W (\alpha)}\int_{\mathbb{R}_+} W (\alpha, \beta) \mu_{\beta, t} \, \mathrm d \beta \right| \right] \\
   &\qquad\qquad\qquad\qquad\qquad\qquad\qquad\qquad\qquad\qquad \qquad\qquad\qquad\qquad\qquad\qquad \cdot O (1) + \varepsilon_{\alpha^*} \\
   &\quad = o(1) + \varepsilon_{\alpha^*}
\end{align*}
where the convergence of the first term for $\nu \to \infty$ follows from Theorem \ref{thm:core_mf_convergence}. Finally, the terms
\begin{align*}
    \E \left[ \left| (\textrm{III}) - (\textrm{III}') \right| \right], \qquad \E \left[ \left| (\textrm{IV}) - (\textrm{IV}') \right| \right],\qquad \textrm{and } \E \left[ \left| (\textrm{V}) - (\textrm{V}') \right| \right]
\end{align*}
can be upper bounded in a similar fashion which completes this part of the proof.

\paragraph{Sixth term.}
We have
\begin{align*}
    &\E \left[ \left| \hat{\boldsymbol \mu}^{\nu, k}_{t} P_{t, \boldsymbol \mu, W}^{\boldsymbol \pi, k}(f)  - \boldsymbol \mu_{t+1}^k(f, \nu) \right| \right] \\
    &= \E \left[ \left| \hat{\boldsymbol \mu}^{\nu, k}_{t} P_{t, \boldsymbol \mu, W}^{\boldsymbol \pi, k}(f)  - \boldsymbol \mu_{t}^k P_{t, \boldsymbol \mu, W}^{\boldsymbol \pi, k}(f, \nu) \right| \right] \\
    &= \E \left[ \left|\frac{\sqrt{2 \vert E_\nu \vert}}{\vert V_{\nu, k} \vert}  \sum_{i \in V_\nu} \boldsymbol{1}_{\{ \deg (v_i) = k \}} \int_{\frac{i-1}{\sqrt{2 \vert E_\nu \vert}}}^{\frac{i}{\sqrt{2 \vert E_\nu \vert}}} \sum_{x \in \mathcal{X}} \hat{\mu}_{\alpha, t}^\nu (x) \sum_{j = 0}^k  P_{\boldsymbol{\pi}} \left( \mathbb{G}_{\alpha, t}^{k} \left( {\boldsymbol \mu}^{\nu}_{t} \right) = j \innermid x_{\alpha, t} = x \right) \right.\right. \\
    &\qquad \qquad \qquad \qquad \qquad \qquad \sum_{u \in \mathcal{U}} \pi_{\alpha, t}^k \left(u \mid x \right) \sum_{x' \in \mathcal{X}} P \left( x' \, \mid x, u, \frac{1}{k} \left( j, k - j \right) \right) \, f(x', \alpha) \, \mathrm d\alpha  \\
    &\qquad - \frac{\sqrt{\bar{\xi}_W}}{\int_0^\infty \textrm{Poi}_{\nu, \alpha}^W (k) \, \mathrm d \alpha } \cdot  \int_0^\infty \textrm{Poi}_{\nu, \alpha}^W (k) \sum_{x \in \mathcal{X}} \mu_{\alpha, t}^k (x) \sum_{j = 0}^k  P_{\boldsymbol{\pi}} \left( \mathbb{G}_{\alpha, t}^{k} \left( {\boldsymbol \mu}^{\nu}_{t} \right) = j \innermid x_{\alpha, t} = x \right) \\
    &\qquad \qquad \qquad \qquad \qquad \qquad \left. \left. \sum_{u \in \mathcal{U}} \pi_{\alpha, t}^k \left(u \mid x \right) \sum_{x' \in \mathcal{X}} P \left( x' \, \mid x, u, \frac{1}{k} \left( j, k - j \right) \right) \, f(x', \alpha) \, \mathrm d\alpha \right| \right] \\
    &= \E \left[ \left| \frac{\sqrt{2 \vert E_\nu \vert}}{\vert V_{\nu, k} \vert} \cdot \sum_{i \in V_\nu} \boldsymbol{1}_{\{ \deg (v_i) = k \}} \int_{\frac{i-1}{\sqrt{2 \vert E_\nu \vert}}}^{\frac{i}{\sqrt{2 \vert E_\nu \vert}}} \sum_{x \in \mathcal{X}} \hat{\mu}^{\nu}_{\alpha, t} (x)  f'(x, \alpha) \, \mathrm d\alpha \right. \right. \\
    &\qquad \qquad \qquad \qquad \qquad \left.\left. - \frac{\sqrt{\bar{\xi}_W}}{\int_0^\infty \textrm{Poi}_{\nu, \alpha}^W (k) \, \mathrm d \alpha } \cdot \int_0^\infty \textrm{Poi}_{\nu, \alpha}^W (k) \sum_{x \in \mathcal{X}} {\mu}^{k}_{\alpha, t} (x) f'(x, \alpha) \, \mathrm d\alpha \right| \right] \\
    & = \E \left[ \left| \hat{\boldsymbol \mu}^{\nu, k}_{t}(f') - \boldsymbol \mu_{t}^k(f', \nu) \right| \right] \leq \varepsilon
\end{align*}
where we apply the induction assumption and define the functions
\begin{align*}
    &f'(x, \alpha) \coloneqq \sum_{j = 0}^k P_{\boldsymbol{\pi}} \left( \mathbb{G}_{\alpha, t}^{k} \left( \hat{\boldsymbol \mu}^{\nu}_{t} \right) = j \innermid x_{\alpha, t} = x \right) \sum_{u \in \mathcal{U}} \pi_{\alpha, t}^k \left(u \mid x \right) \\
    &\qquad\qquad\qquad\qquad\qquad\qquad\qquad\qquad\qquad \sum_{x' \in \mathcal{X}} P \left( x' \, \mid x, u, \frac{1}{k} \left( j, k - j \right) \right) \, f(x', \alpha) \, .
\end{align*}
This concludes the proof.

\section{Proof of Theorem \ref{thm:core_optimality}}
For notational convenience we define $\alpha(i) \coloneqq \frac{i}{\sqrt{2 \vert E_\nu \vert}}$. Furthermore, we define
\begin{align*}
  V_{\nu, \alpha^*} \coloneqq \left\{ i \in V_\nu: \frac{i}{\sqrt{2 \left\vert E_\nu \right\vert}} \leq \alpha^* \right\} \subset V_\nu \, .   
\end{align*}
The next lemma is crucial for proving Theorem \ref{thm:core_optimality}.

\begin{lemma} \label{lem:xconv}
Consider Lipschitz continuous $\boldsymbol \pi \in \boldsymbol \Pi$ up to a finite number of discontinuities $D_\pi$, with $\boldsymbol \mu = \Psi(\boldsymbol \pi)$. Under Assumptions~\ref{as:graphex_conv} and \ref{as:lipschitz} and the policy $\Gamma_{\vert V_\nu \vert}(\boldsymbol \pi, \bar{\pi}_i) \in \Pi^{\vert V_\nu \vert}$, $\bar{\pi}_i \in \Pi$ arbitrary, for any uniformly bounded family of functions $\mathcal G$ from $\mathcal X$ to $\mathbb R$ and any $\varepsilon, p > 0$, $t \in \mathcal T$, there exist $\nu', \alpha' > 0$ such that for all $\nu > \nu'$ and $\alpha^* > \alpha'$ 
\begin{align} \label{eq:xconv}
    \sup_{g \in \mathcal G} \left| \E \left[ g(X_{i, t}) \right] - \E \left[ g(\bar X_{\alpha(i), t}) \right] \right| < \varepsilon
\end{align}
holds uniformly over $\bar{\pi}_i \in \Pi, i \in V'_\nu$ for some $V'_\nu \subseteq V_{\nu, \alpha^*}$ with $| V'_\nu| \geq (1-p) \left\vert  V_{\nu, \alpha^*} \right\vert$. 
Furthermore, for any uniformly Lipschitz, uniformly bounded family of measurable functions $\mathcal H$ from $\mathcal X \times \mathcal B(\mathcal X)$ to $\mathbb R$ and any $\varepsilon, p > 0$, $t \in \mathcal T$, there exist $\nu', \alpha' > 0$ such that for all $\nu > \nu'$ and $\alpha^* > \alpha'$
\begin{align} \label{eq:xmuconv}
    \sup_{h \in \mathcal H} \left| \E \left[ h(X_{i, t}, \mathbb G^\nu_{\alpha(i),t}(\hat{\boldsymbol{\mu}}^\nu_t)) \right] - \E \left[ h(\bar X_{\alpha (i), t}, \mathbb G^{\infty}_{\alpha (i), t}(\boldsymbol \mu_t)) \right] \right| < \varepsilon
\end{align}
holds uniformly over $\bar \pi \in \Pi, i \in V'_\nu$ for some $V'_\nu \subseteq V_{\nu, \alpha^*}$ with $| V'_\nu| \geq (1-p) \left\vert V_{\nu, \alpha^*} \right\vert$.
\end{lemma}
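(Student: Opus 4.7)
The plan is to prove both \eqref{eq:xconv} and \eqref{eq:xmuconv} simultaneously by induction on $t \in \mathcal{T}$, leveraging Theorem~\ref{thm:core_mf_convergence} to control the gap between the empirical neighborhood $\mathbb{G}^\nu_{\alpha(i),t}(\hat{\boldsymbol{\mu}}^\nu_t)$ and its deterministic limit $\mathbb{G}^\infty_{\alpha(i),t}(\boldsymbol{\mu}_t)$, together with the Lipschitz assumptions on $P$, $W$ and on $\boldsymbol{\pi}$ (away from its finitely many discontinuities) to propagate the approximation in time.

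For the base case $t=0$, both $X_{i,0}$ and $\bar X_{\alpha(i),0}$ are drawn from $\mu_0$, so the left-hand side of \eqref{eq:xconv} is identically zero. For \eqref{eq:xmuconv} at $t=0$ one only needs convergence of $\mathbb{G}^\nu_{\alpha(i),0}$ to $\mathbb{G}^\infty_{\alpha(i),0}$, which follows by applying Theorem~\ref{thm:core_mf_convergence} to the test functions $f_{x',i}(x,\beta) \coloneqq W(\alpha(i),\beta)\boldsymbol{1}\{x=x'\}$ for each $x' \in \mathcal{X}$, combined with Lipschitz continuity of $h$ in its measure argument and a Markov-type averaging argument over $i \in V_{\nu,\alpha^*}$ to restrict to a good subset $V'_\nu$ of density at least $1-p$.

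For the induction step at $t+1$, I would first establish \eqref{eq:xconv}. Conditioning on the time-$t$ variables and writing
\begin{align*}
    \E[g(X_{i,t+1})] = \E\bigl[h_g^i(X_{i,t},\mathbb{G}^\nu_{\alpha(i),t})\bigr], \quad h_g^i(x,\mu) \coloneqq \sum_{x',u} g(x')\,\pi^\infty_{\alpha(i),t}(u\mid x)\,P(x'\mid x,u,\mu)
\end{align*}
for any non-deviating $i$ (and analogously for the limit system with $\mathbb{G}^\infty_{\alpha(i),t}$), Assumption~\ref{as:lipschitz} makes $\{h_g^i : g \in \mathcal{G}\}$ a uniformly bounded, uniformly Lipschitz family in $\mu$; the induction hypothesis \eqref{eq:xmuconv} at time $t$ applied to this family then yields \eqref{eq:xconv} at $t+1$. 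For \eqref{eq:xmuconv} at $t+1$, I would use the triangle inequality
\begin{align*}
    &\bigl|\E[h(X_{i,t+1},\mathbb{G}^\nu_{\alpha(i),t+1})] - \E[h(\bar X_{\alpha(i),t+1},\mathbb{G}^\infty_{\alpha(i),t+1})]\bigr| \\
    &\quad\leq L_h \cdot \E\bigl\|\mathbb{G}^\nu_{\alpha(i),t+1} - \mathbb{G}^\infty_{\alpha(i),t+1}\bigr\|_1 + \bigl|\E[g_h(X_{i,t+1})] - \E[g_h(\bar X_{\alpha(i),t+1})]\bigr|,
\end{align*}
where $g_h(x) \coloneqq h(x,\mathbb{G}^\infty_{\alpha(i),t+1})$ is deterministic under $\boldsymbol{\mu}$: the first summand is handled again by Theorem~\ref{thm:core_mf_convergence} on the test functions $f_{x',i}$, and the second by \eqref{eq:xconv} just established.

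The main obstacle is ensuring all estimates hold uniformly over the arbitrary deviation $\bar{\pi}_{i^*}$. While $\bar{\pi}_{i^*}$ controls only one agent's action, it perturbs the neighborhood distribution $\mathbb{G}^\nu_{\alpha(j),t}$ of every neighbor $j$ of $i^*$; for core indices $j \in V_{\nu,\alpha^*}$ with $\deg_{V_\nu}(j) \to \infty$, however, this perturbation is $O(1/\deg_{V_\nu}(j))$ and vanishes uniformly in $\bar{\pi}_{i^*}$. The bad-index set $V_\nu \setminus V'_\nu$ must then absorb (i) the deviator $i^*$ itself, (ii) an $O(D_\pi/\sqrt{|E_\nu|})$ fraction of indices falling within the $1/\sqrt{2|E_\nu|}$-interval of a discontinuity of $\boldsymbol{\pi}$ where the Lipschitz bound fails, and (iii) the Markov-exceptional indices from the concentration of $\mathbb{G}^\nu$ onto $\mathbb{G}^\infty$. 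All three contributions are $o(1)$ as $\nu,\alpha^* \to \infty$, so can jointly be made smaller than $p \cdot |V_{\nu,\alpha^*}|$.
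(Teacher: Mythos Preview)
Your proposal is correct and follows essentially the same route as the paper: both argue by induction on $t$, reduce \eqref{eq:xconv} at time $t+1$ to \eqref{eq:xmuconv} at time $t$ via an auxiliary function built from $\pi$ and $P$, control the neighborhood gap $\mathbb G^\nu - \mathbb G^\infty$ through Theorem~\ref{thm:core_mf_convergence} and the Lipschitz assumptions, and absorb the deviating agent and the $\boldsymbol\pi$-discontinuity indices into the exceptional set of size at most $p\,|V_{\nu,\alpha^*}|$.

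The only organizational difference is that the paper first establishes the implication \eqref{eq:xconv} $\Rightarrow$ \eqref{eq:xmuconv} for arbitrary $t$ via a three-term triangle inequality---separating the step $\hat{\boldsymbol\mu}^\nu \to \boldsymbol\mu$ (Theorem~\ref{thm:core_mf_convergence}) from the step $\widehat W \to W$ (Assumption~\ref{as:graphex_conv})---and then inducts only on \eqref{eq:xconv}; you instead run a simultaneous induction with a two-term split. In executing your split you will need to invoke Assumption~\ref{as:graphex_conv} in addition to Theorem~\ref{thm:core_mf_convergence} when bounding $\E\|\mathbb G^\nu_{\alpha(i),t+1}-\mathbb G^\infty_{\alpha(i),t+1}\|_1$, since $\mathbb G^\nu_{i,t}$ is built from the empirical graphex $\widehat W$ while your test functions $f_{x',i}$ use the limiting $W$. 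Your explicit Markov-type averaging argument to pass from the integrated bound of Theorem~\ref{thm:core_mf_convergence} to a pointwise bound on a $(1-p)$-fraction of indices is in fact more careful than the paper's treatment of that step.
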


\begin{proof}
Let us start by showing that \eqref{eq:xconv} implies \eqref{eq:xmuconv}. Therefore, we upper bound the term of interest by
\begin{align*}
    &\left| \E \left[ h(X_{i, t}, \mathbb G^{\nu}_{i,t}(\hat{\boldsymbol{\mu}}^\nu_t)) \right] - \E \left[ h(\bar X_{\alpha (i), t}, \mathbb G^{\infty}_{\alpha (i), t}(\boldsymbol \mu_t)) \right] \right| \\ 
    &\qquad \leq \left| \E \left[ h(X_{i, t}, \mathbb G^{\nu}_{i,t}(\hat{\boldsymbol{\mu}}^\nu_t)) \right] - \E \left[ h(X_{i, t}, \mathbb G^{\nu}_{i,t}(\boldsymbol{\mu}_t)) \right] \right| \\
    &\quad \qquad + \left| \E \left[ h(X_{i, t}, \mathbb G^{\nu}_{i,t}(\boldsymbol{\mu}_t)) \right] - \E \left[ h(X_{i, t},\mathbb G^{\infty}_{\alpha (i), t}(\boldsymbol \mu_t)) \right] \right| \\
    &\quad\qquad + \left| \E \left[ h(X_{i, t},\mathbb G^{\infty}_{\alpha (i), t}(\boldsymbol \mu_t)) \right] - \E \left[ h(\bar X_{\alpha (i), t}, \mathbb G^{\infty}_{\alpha (i), t}(\boldsymbol \mu_t)) \right] \right| \, .
\end{align*}

\paragraph{First term.}
We start with
\begin{align*}
    &\left| \E \left[ h(X_{i, t}, \mathbb G^{\nu}_{i,t}(\hat{\boldsymbol{\mu}}^\nu_t)) \right] - \E \left[ h(X_{i, t}, \mathbb G^{\nu}_{i,t}(\boldsymbol{\mu}_t)) \right] \right| \\
    &\qquad \leq L_h \sum_{x \in \mathcal{X}} \left\vert \mathbb G^{\nu}_{i,t}(\hat{\boldsymbol{\mu}}^\nu_t) (x) - \mathbb G^{\nu}_{i,t}(\boldsymbol{\mu}_t) (x) \right\vert \\
    &\qquad = \frac{L_h}{\xi_{\widehat W}(\alpha (i))} \sum_{x \in \mathcal{X}} \left\vert   \int_0^\infty \widehat W (\alpha (i), \beta) \hat{\mu}^\nu_{\beta, t} (x) \, \mathrm d \beta - \int_0^\infty \widehat W (\alpha (i), \beta) \mu_{\beta, t} (x) \, \mathrm d \beta \right\vert \\
    &\qquad = \varepsilon_{\alpha^*} + \frac{L_h}{\xi_{\widehat W}(\alpha (i))} \sum_{x \in \mathcal{X}} \left\vert   \int_0^{\alpha^*} \widehat W (\alpha (i), \beta) \hat{\mu}^\nu_{\beta, t} (x) \, \mathrm d \beta - \int_0^{\alpha^*} \widehat W (\alpha (i), \beta) \mu_{\beta, t} (x) \, \mathrm d \beta \right\vert \\
    &\qquad = \varepsilon_{\alpha^*} + o (1)
\end{align*}
where the second summand goes to zero for $\nu \to \infty$ by Theorem \ref{thm:core_mf_convergence} and by choosing $\alpha^*$ sufficiently high, the sum can be bounded by $\varepsilon$.

\paragraph{Second term.}
For this term we have
\begin{align*}
    &\left| \E \left[ h(X_{i, t}, \mathbb G^{\nu}_{i,t}(\boldsymbol{\mu}_t)) \right] - \E \left[ h(X_{i, t},\mathbb G^{\infty}_{\alpha (i), t}(\boldsymbol \mu_t)) \right] \right| \\
    &\qquad \leq L_h \sum_{x \in \mathcal{X}} \left\vert \mathbb G^{\nu}_{i,t}(\boldsymbol{\mu}_t) (x) - \mathbb G^{\infty}_{\alpha (i), t}(\boldsymbol \mu_t) (x) \right\vert \\
    &\qquad = L_h \sum_{x \in \mathcal{X}} \left\vert \frac{1}{\xi_{\widehat W} (\alpha (i))} \int_0^\infty \widehat W (\alpha (i), \beta) \mu_{\beta, t} (x) \, \mathrm d \beta \right. \\ 
    &\qquad\qquad\qquad\qquad\qquad\qquad\qquad \left. - \frac{1}{\xi_{W, \alpha^*} (\alpha (i))} \int_0^{\alpha^*} W (\alpha (i), \beta) \mu_{\beta, t} (x) \, \mathrm d \beta \right\vert \\
    &\qquad \leq \varepsilon_{\alpha^*} + L_h \sum_{x \in \mathcal{X}} \left\vert \left( \frac{1}{\xi_{\widehat W} (\alpha (i))} - \frac{1}{\xi_{W, \alpha^*} (\alpha (i))} \right) \int_0^\infty \widehat W (\alpha (i), \beta) \mu_{\beta, t} (x) \, \mathrm d \beta \right\vert \\
    &\qquad \quad + \frac{L_h}{\xi_{W, \alpha^*} (\alpha (i))} \sum_{x \in \mathcal{X}} \left\vert \int_0^{\alpha^*} \widehat W (\alpha (i), \beta) \mu_{\beta, t} (x) \, \mathrm d \beta - \int_0^{\alpha^*} W (\alpha (i), \beta) \mu_{\beta, t} (x) \, \mathrm d \beta \right\vert \\
    &\qquad = \varepsilon_{\alpha^*} + o(1)
\end{align*}
where the second and third summand converge to zero for $\nu \to \infty$ by Assumption \ref{as:graphex_conv}. Thus, for sufficiently high $\alpha^*$, the second term is upper bounded by $\varepsilon$.

\paragraph{Third term.}
Finally, the third term is bounded by \eqref{eq:xconv}, i.e.
\begin{align*}
    \left| \E \left[ h(X_{i, t},\mathbb G^{\infty}_{\alpha (i), t}(\boldsymbol \mu_t)) \right] - \E \left[ h(\bar X_{\alpha (i), t}, \mathbb G^{\infty}_{\alpha (i), t}(\boldsymbol \mu_t)) \right] \right| < \varepsilon \, .
\end{align*}

This concludes the first part of the proof. Now it remains to establish \eqref{eq:xconv} via induction over $t$. At $t=0$, the induction assumption holds by the construction of the model. Thus, it remains to show the induction step
\begin{align*}
    \sup_{g \in \mathcal G} \left| \E \left[ g(X_{i, t+1}) \right] - \E \left[ g(\bar X_{\alpha(i), t+1}) \right] \right| < \varepsilon \, .
\end{align*}
We define
\begin{align*}
    h'_{\nu, t} (x, G) \coloneqq \sum_{u \in \mathcal{U}} \hat{\pi}_t (u \vert x ) \sum_{x' \in \mathcal{X}} P (x' \vert x, u, G) g(x') 
\end{align*}
which allows us to apply the induction assumption and thereby \eqref{eq:xmuconv} (by the implication we have proved above) to our term of interest, i.e.
\begin{align*}
    &\sup_{g \in \mathcal G} \left| \E \left[ g(X_{i, t+1}) \right] - \E \left[ g(\bar X_{\alpha(i), t+1}) \right] \right|  \\
    &\qquad = \left| \E \left[ h'_{\nu, t} (X_{i, t}, \mathbb G^{\nu}_{i,t}(\hat{\boldsymbol{\mu}}^\nu_t)) \right] - \E \left[ h'_{\nu, t} (\bar X_{\alpha (i), t}, \mathbb G^{\infty}_{\alpha (i), t}(\boldsymbol \mu_t)) \right] \right|
    < \varepsilon
\end{align*}
and concludes the proof.
\end{proof}

\begin{proof}[Proof of Theorem \ref{thm:core_optimality}]
The proof leverages Lemma \ref{lem:xconv}. More precisely, for the situation as in Lemma \ref{lem:xconv} we have
\begin{align*}
    &\sup_{\bar{\pi}_i \in \Pi} \left\vert J_i^\nu (\pi_1, \ldots, \pi_{\vert V_\nu \vert}) - J_i^\nu (\pi_1, \ldots, \bar{\pi}_i, \ldots, \pi_{\vert V_\nu \vert}) \right\vert \\
    &\quad \leq \left\vert J_i^\nu (\pi_1, \ldots, \pi_{\vert V_\nu \vert}) - J_{\alpha (i)}^{\boldsymbol{\mu}} (\pi_{\alpha (i)}) \right\vert + \underbrace{\sup_{\bar{\pi}_i \in \Pi}\left\vert J_{\alpha (i)}^{\boldsymbol{\mu}} (\pi_{\alpha (i)}) - J_{\alpha (i)}^{\boldsymbol{\mu}} (\bar \pi_i) \right\vert}_{= 0}\\
    &\quad \qquad + \sup_{\bar{\pi}_i \in \Pi} \left\vert J_{\alpha (i)}^{\boldsymbol{\mu}} (\bar \pi_i) -  J_i^\nu (\pi_1, \ldots, \bar{\pi}_i, \ldots, \pi_{\vert V_\nu \vert}) \right\vert \\
    &\quad \leq \sum_{t=0}^{T-1} \left| \E \left[ r_{\pi_{\alpha (i)}} (X_{i, t}, \mathbb G^{\nu}_{i,t}(\hat{\boldsymbol{\mu}}^\nu_t)) \right] - \E \left[ r_{\pi_{\alpha (i)}} (\bar X_{\alpha (i), t}, \mathbb G^{\infty}_{\alpha (i), t}(\boldsymbol \mu_t)) \right] \right| \\
    &\quad \qquad + \sup_{\bar{\pi}_i \in \Pi} \sum_{t=0}^{T-1} \left| \E \left[ r_{\bar \pi_{i}} (X_{i, t}, \mathbb G^{\nu}_{i,t}(\hat{\boldsymbol{\mu}}^\nu_t)) \right] - \E \left[ r_{\bar \pi_{i}} (\bar X_{\alpha (i), t}, \mathbb G^{\infty}_{\alpha (i), t}(\boldsymbol \mu_t)) \right] \right| \leq \varepsilon
\end{align*}
where the last inequality follows from Lemma \ref{lem:xconv} by defining
\begin{align*}
    r_{\pi} (x, G) \coloneqq \sum_{u \in \mathcal{U}} \pi (x \vert u) r (x, u, G) 
\end{align*}
for all $\pi \in \Pi$. This concludes the proof.
\end{proof}

\section{Proof of Theorem \ref{thm:periphery_optimality}}

The proof is similar to the proof of Theorem \ref{thm:core_optimality}. As before, we start with a helpful lemma and define $\alpha(i) \coloneqq \frac{i}{\sqrt{2 \vert E_\nu \vert}}$.

\begin{lemma} \label{lem:periphery_xconv}
Consider Lipschitz continuous $\boldsymbol \pi \in \boldsymbol \Pi$ up to a finite number of discontinuities $D_\pi$, with $\boldsymbol \mu = \Psi(\boldsymbol \pi)$. Under Assumptions~\ref{as:graphex_conv} and \ref{as:lipschitz} and the policy $(\pi^1, \ldots, \pi^{\vert V_\nu \vert}) = \Gamma_{\vert V_\nu \vert}(\boldsymbol \pi, \bar{\pi}_i) \in \Pi^{\vert V_\nu \vert}$, for any uniformly bounded family of functions $\mathcal G$ from $\mathcal X$ to $\mathbb R$ and any $\varepsilon, p > 0$, $t \in \mathcal T$, there exist $\nu', \alpha' > 0$ such that for all $\nu > \nu'$ and $\alpha^* > \alpha'$ 
\begin{align} \label{eq:periphery_xconv}
    \sup_{g \in \mathcal G} \left| \E \left[ g(X_{i, t}) \right] - \E \left[ g(\bar X_{\alpha(i), t}) \right] \right| < \varepsilon
\end{align}
holds uniformly over $\bar{\pi}_i \in \Pi, i \in V'_\nu$ for some $V'_\nu \subseteq V_\nu$ with $|V'_\nu| \geq (1-p) \left\vert  V_\nu \right\vert$. 
Furthermore, for any uniformly Lipschitz, uniformly bounded family of measurable functions $\mathcal H$ from $\mathcal X \times \mathcal B(\mathcal X)$ to $\mathbb R$ and any $\varepsilon, p > 0$, $t \in \mathcal T, k \in \mathbb{N}$, there exist $\nu', \alpha' > 0$ such that for all $\nu > \nu'$ and $\alpha^* > \alpha'$
\begin{align} \label{eq:periphery_xmuconv}
    \sup_{h \in \mathcal H} \left| \E \left[ h(X_{i, t}, \mathbb G^{\nu, k}_{\alpha(i),t}(\hat{\boldsymbol{\mu}}^\nu_t)) \right] - \E \left[ h(\bar X_{\alpha (i), t}, \mathbb G^{k}_{\alpha (i), t}(\boldsymbol \mu_t)) \right] \right| < \varepsilon
\end{align}
holds uniformly over $\bar{\pi}_i \in \Pi, i \in V'_\nu$ for some $V'_\nu \subseteq V_\nu$ with $| V'_\nu| \geq (1-p) \left\vert V_\nu \right\vert$.
\end{lemma}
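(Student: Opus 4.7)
The plan is to mirror the induction-based argument in the proof of Lemma \ref{lem:xconv}, exploiting the fact that the essential joint-law comparison has already been established, for each fixed $t$, as inequality \eqref{ineq:general_neigh_probs_joint} in Appendix \ref{app:neigh_prob_calc}. The key structural difference from the core case is that $\mathbb{G}^{k}_{\alpha(i),t}$ remains a genuine random element of the finite set $\boldsymbol{\mathcal{G}}^k$ rather than concentrating on a deterministic mean field; so instead of separating state and neighborhood contributions, I would compare the full joint laws of $(X_{i,t}, \mathbb{G}^{\nu,k}_{\alpha(i),t})$ and $(\bar X_{\alpha(i),t}, \mathbb{G}^{k}_{\alpha(i),t})$ directly.

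For \eqref{eq:periphery_xmuconv}, the estimate follows from
\begin{align*}
    & \bigl| \E[h(X_{i,t}, \mathbb{G}^{\nu,k}_{\alpha(i),t})] - \E[h(\bar X_{\alpha(i),t}, \mathbb{G}^{k}_{\alpha(i),t})] \bigr| \\
    &\qquad \leq M_h \sum_{x \in \mathcal{X}} \sum_{G \in \boldsymbol{\mathcal{G}}^k} \bigl| P^\nu(X_{i,t}=x, \mathbb{G}^{\nu,k}_{\alpha(i),t}=G) - P(\bar X_{\alpha(i),t}=x, \mathbb{G}^{k}_{\alpha(i),t}=G) \bigr|,
\end{align*}
and applying \eqref{ineq:general_neigh_probs_joint} termwise: since $k$ is fixed, $\boldsymbol{\mathcal{G}}^k$ is finite, so the right-hand side is bounded by a $k$-dependent multiple of $o(1) + \varepsilon_{\alpha^*}$, which is $< \varepsilon$ for $\nu, \alpha^*$ sufficiently large.

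The marginal bound \eqref{eq:periphery_xconv} I would then obtain by induction on $t$. The base case $t=0$ is immediate because $X_{i,0}, \bar X_{\alpha(i),0} \sim \mu_0$. For the inductive step, introduce
\begin{align*}
    h'_g(x, G) \coloneqq \sum_{u \in \mathcal{U}} \tilde\pi_t(u \mid x) \sum_{x' \in \mathcal{X}} P(x' \mid x, u, G)\, g(x'),
\end{align*}
where $\tilde\pi_t = \bar\pi_{i,t}$ if $i$ is the deviator and $\tilde\pi_t = \pi^k_{\alpha(i),t}$ otherwise; in both cases the finite and limiting systems agree on agent $i$'s policy, because $\Gamma_\nu$ uses the exact policy value at $\alpha(i)$. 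Then $\E[g(X_{i,t+1})] = \E[h'_g(X_{i,t}, \mathbb{G}^{\nu,k}_{\alpha(i),t})]$ and the analogous identity holds in the limit. Since $h'_g$ is uniformly bounded and Lipschitz in $G$ by Assumption \ref{as:lipschitz}, applying \eqref{eq:periphery_xmuconv} with test function $h'_g$ closes the induction.

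The main obstacle is uniformity of the error over the deviation $\bar\pi_i$ and the index $i$. The deviator affects the neighborhood of any other agent $j \neq i$ through at most the single edge $ij$, yielding a perturbation of at most $1/\deg_{V_\nu}(j)$ that is $\bar\pi_i$-independent; this uniformity is inherited by \eqref{ineq:general_neigh_probs_joint} from the corresponding uniformity statement in Theorem \ref{thm:core_mf_convergence}. The excluded set $V_\nu \setminus V'_\nu$ comprises the $O(1/\sqrt{|E_\nu|})$-neighborhoods in latent parameter of the $D_\pi$ discontinuities of $\boldsymbol\pi$ together with the (typically $O(1)$-sized, for a periphery deviator) neighbors of agent $i$; both have density vanishing in $\nu$, so $|V'_\nu| \geq (1-p)|V_\nu|$ holds for $\nu$ sufficiently large.
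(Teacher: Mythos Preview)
Your proposal is correct and takes a somewhat different route from the paper. The paper mirrors the proof of Lemma~\ref{lem:xconv} almost verbatim: it first shows \eqref{eq:periphery_xconv} $\Rightarrow$ \eqref{eq:periphery_xmuconv} via a three-term triangle inequality
\[
\bigl|\E h(X_{i,t},\mathbb G^{\nu,k}_{i,t}(\hat{\boldsymbol\mu}^\nu_t)) - \E h(X_{i,t},\mathbb G^{\nu,k}_{i,t}(\boldsymbol\mu_t))\bigr|
+ \bigl|\E h(X_{i,t},\mathbb G^{\nu,k}_{i,t}(\boldsymbol\mu_t)) - \E h(X_{i,t},\mathbb G^{k}_{\alpha(i),t}(\boldsymbol\mu_t))\bigr|
+ \cdots,
\]
bounding the first piece through the Lipschitz constant $L_h$ and Theorem~\ref{thm:core_mf_convergence}, the second through \eqref{ineq:general_neigh_probs_joint}, and the third through \eqref{eq:periphery_xconv} itself; the induction for \eqref{eq:periphery_xconv} is then deferred to the analogous argument in Lemma~\ref{lem:xconv}. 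You instead observe that, because $\mathcal X\times\boldsymbol{\mathcal G}^k$ is finite, inequality \eqref{ineq:general_neigh_probs_joint} already controls the total-variation distance of the full joint law, so a single sum over $(x,G)$ against the uniform bound $M_h$ yields \eqref{eq:periphery_xmuconv} in one stroke. This is cleaner: it avoids the somewhat delicate first two terms of the paper's decomposition and shows that the Lipschitz hypothesis on $\mathcal H$ is used only when propagating through $P$ in the induction step (your $h'_g$), not in the comparison of neighbourhoods per se. One point to make explicit is that \eqref{ineq:general_neigh_probs_joint} is derived in Appendix~\ref{app:neigh_prob_calc} for the non-deviating policy $\pi^k_{\alpha,t}$; you should note that the same recursion goes through verbatim with $\bar\pi_i$ substituted for the focal agent's policy, since that factor never interacts with the graphex-convergence estimates. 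With that remark your argument is complete.
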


\begin{proof}
To show that \eqref{eq:periphery_xconv} implies \eqref{eq:periphery_xmuconv}, we use the upper bound
\begin{align*}
    &\left| \E \left[ h(X_{i, t}, \mathbb G^{\nu, k}_{i,t}(\hat{\boldsymbol{\mu}}^\nu_t)) \right] - \E \left[ h(\bar X_{\alpha (i), t}, \mathbb G^{k}_{\alpha (i), t}(\boldsymbol \mu_t)) \right] \right| \\ 
    &\qquad \leq \left| \E \left[ h(X_{i, t}, \mathbb G^{\nu, k}_{i,t}(\hat{\boldsymbol{\mu}}^\nu_t)) \right] - \E \left[ h(X_{i, t}, \mathbb G^{\nu, k}_{i,t}(\boldsymbol{\mu}_t)) \right] \right| \\
    &\quad \qquad + \left| \E \left[ h(X_{i, t}, \mathbb G^{\nu, k}_{i,t}(\boldsymbol{\mu}_t)) \right] - \E \left[ h(X_{i, t},\mathbb G^{k}_{\alpha (i), t}(\boldsymbol \mu_t)) \right] \right| \\
    &\quad\qquad + \left| \E \left[ h(X_{i, t},\mathbb G^{k}_{\alpha (i), t}(\boldsymbol \mu_t)) \right] - \E \left[ h(\bar X_{\alpha (i), t}, \mathbb G^{k}_{\alpha (i), t}(\boldsymbol \mu_t)) \right] \right| \, .
\end{align*}

\paragraph{First term.}
We start with
\begin{align*}
    &\left| \E \left[ h(X_{i, t}, \mathbb G^{\nu, k}_{i,t}(\hat{\boldsymbol{\mu}}^\nu_t)) \right] - \E \left[ h(X_{i, t}, \mathbb G^{\nu, k}_{i,t}(\boldsymbol{\mu}_t)) \right] \right| \\
    &\qquad \leq L_h \sum_{x \in \mathcal{X}} \left\vert \mathbb G^{\nu, k}_{i,t}(\hat{\boldsymbol{\mu}}^\nu_t) (x) - \mathbb G^{\nu, k}_{i,t}(\boldsymbol{\mu}_t) (x) \right\vert \\
    &\qquad = \frac{L_h}{\xi_{\widehat W}(\alpha (i))} \sum_{x \in \mathcal{X}} \left\vert   \int_0^\infty \widehat W (\alpha (i), \beta) \hat{\mu}^\nu_{\beta, t} (x) \, \mathrm d \beta - \int_0^\infty \widehat W (\alpha (i), \beta) \mu_{\beta, t} (x) \, \mathrm d \beta \right\vert \\
    &\qquad = \varepsilon_{\alpha^*} + \frac{L_h}{\xi_{\widehat W}(\alpha (i))} \sum_{x \in \mathcal{X}} \left\vert   \int_0^{\alpha^*} \widehat W (\alpha (i), \beta) \hat{\mu}^\nu_{\beta, t} (x) \, \mathrm d \beta - \int_0^{\alpha^*} \widehat W (\alpha (i), \beta) \mu_{\beta, t} (x) \, \mathrm d \beta \right\vert \\
    &\qquad = \varepsilon_{\alpha^*} + o (1)
\end{align*}
where the first term goes to zero for $\nu \to \infty$ by Theorem \ref{thm:core_mf_convergence} and by choosing $\alpha^*$ sufficiently high, the sum can be bounded by $\varepsilon$.

\paragraph{Second term.}
Moving on to the second term, we know that
\begin{align*}
    &\left| \E \left[ h(X_{i, t}, \mathbb G^{\nu, k}_{i,t}(\boldsymbol{\mu}_t)) \right] - \E \left[ h(X_{i, t},\mathbb G^{k}_{\alpha (i), t}(\boldsymbol \mu_t)) \right] \right| \\
    &\qquad \leq L_h \sum_{x \in \mathcal{X}} \left\vert \E \left[ \mathbb{G}^{\nu, k}_{i,t}(\boldsymbol{\mu}_t) (x) \right] - \E \left[\mathbb G^{k}_{\alpha (i), t}(\boldsymbol \mu_t) (x) \right] \right\vert  = \varepsilon_{\alpha^*} + o(1)
\end{align*}
where the last line follows from applying inequality \eqref{ineq:general_neigh_probs_joint}. Therefore, by choosing a sufficiently high $\alpha^*$, the second term is smaller than $\varepsilon$.

\paragraph{Third term.}
Finally, the third term is bounded by \eqref{eq:periphery_xconv} analogously to the proof of Theorem \ref{thm:core_optimality}
\begin{align*}
    \left| \E \left[ h(X_{i, t},\mathbb G^{k}_{\alpha (i), t}(\boldsymbol \mu_t)) \right] - \E \left[ h(\bar X_{\alpha (i), t}, \mathbb G^{k}_{\alpha (i), t}(\boldsymbol \mu_t)) \right] \right| < \varepsilon \, .
\end{align*}

This concludes the first part of the proof. Thus, we still have to prove \eqref{eq:periphery_xconv} via induction over $t$. This can be done as in Lemma \ref{lem:xconv}, so we point to the preceding proof for details. 
\end{proof}

\begin{proof}[Proof of Theorem \ref{thm:periphery_optimality}]
This proof is analogous to the one of Theorem \ref{thm:core_optimality}. The only difference is that instead of leveraging Lemma \ref{lem:xconv}, we now use Lemma \ref{lem:periphery_xconv}. Then, the result follows immediately.
\end{proof}

\section{Experiment Details} \label{app:exp}
In this section, we describe problem and algorithm details, and give additional periphery MF comparisons on the real network datasets that were omitted in the main text. Note that for separable power-law graphexes used in this work, the computations in Algorithm~\ref{algo:homd} simplify to calculating the same MF and neighborhood probabilities for all $\alpha$, improving the complexity of the algorithm in practice.
For Algorithm~\ref{algo:homd}, we use $\mathbb{G}_t^k \sim \textrm{Multinomial} (k, \boldsymbol{\mu}_t^\infty)$. The code is available in the supplementary material.

\paragraph{Algorithm hyperparameters}
For the implementation of Algorithm~\ref{algo:homd}, we use an inverse step size (temperature) of $\gamma = 50$ and perform $5000$ iterations as seen in Figure~\ref{fig:expl}. As degree cutoffs for the periphery, we use $k_{\max} = 8$ for SIS and SIR, and $k_{\max} = 6$ for RS, i.e. all nodes with degree $k > k_{\max}$ follow the MF policy.

\paragraph{SIS problem.}
In the SIS problem, we assume a state space $\mathcal X = \{ S, I \}$ consisting of the susceptible state $x=S$ and the infected state $x=I$. The time horizon $T$ is the time until a cure is found. Each agent may choose to avoid infections, taking actions $\mathcal U = \{ \bar P, P \}$ to protect themselves ($P$) or not ($\bar P$). The resulting infection dynamics are then defined by the transition kernels
\begin{align} \label{eq:SIS}
    P^k(S \mid S, P, G) = 1, \quad 
    P^k(I \mid S, \bar P, G) = \tau_I G(\mathbf 1_I) \left( \frac{2}{1 + \exp (-\frac k 2)} - 1 \right),  \quad \forall G \in \boldsymbol{\mathcal{G}}^k
\end{align}
for some infection rate $\tau_I > 0$, where the last term is a scaling factor that scales linearly with the number of neighbors around $k = 0$ (i.e. randomly meeting neighboring contacts) and eventually saturates to $1$ as $k \to \infty$ (too many contacts to meet). The recovery dynamics are given by
\begin{align*}
    P^k(S \mid I, u, G) = \tau_R, \quad \forall u \in \mathcal U, G \in \boldsymbol{\mathcal{G}}^k
\end{align*}
for some recovery rate $\tau_R \in [0, 1]$. Finally, we assume costs for infection and protective actions
\begin{align*}
    r(x, u, G) = c_I \mathbf 1_I(x) + c_P \mathbf 1_P(u)
\end{align*}
for some coefficients $c_I, c_P > 0$.

In our experiments, we use $\tau_I = 0.2$, $\tau_R = 0.05$, $T = 500$, $\mu_0(I) = 0.5$, $c_I = 1$ and $c_P = 0.5$.

\paragraph{SIR problem.}
In the SIR problem, in contrast to the SIS problem, we use the state space $\mathcal X = \{ S, I, R \}$, i.e. we extend \eqref{eq:SIS} by the new recovery dynamics
\begin{align*}
    P^k(R \mid I, u, G) &= 1 - P^k(I \mid I, u, G) = \tau_R, \quad \forall u \in \mathcal U, G \in \boldsymbol{\mathcal{G}}^k, \\
    P^k(R \mid R, u, G) &= 1, \quad \forall u \in \mathcal U, G \in \boldsymbol{\mathcal{G}}^k
\end{align*}
and the same cost function as in SIS. For experiments, we use the parameters $\tau_I = 0.05$, $\tau_R = 0.01$, $T = 500$, $1 - \mu_0(S) = \mu_0(I) = 0.1$, $c_I = 1$ and $c_P = 0.25$.

\paragraph{RS problem.}
Lastly, in the RS problem we have agents that are either aware ($A$) or unaware ($\bar A$) of the rumor. Agents that are aware may choose to propagate ($P$) the rumor to unaware agents or not ($\bar P$), resulting in actions $\mathcal U = \{ \bar P, P \}$. The spreading probability is then dependent on the number of propagating agents in the neighborhood, which can be modelled by using the state space $\mathcal X = \{ A, \bar A, \bar P, P \}$, where $\bar P$ and $P$ are aware agents trying to spread the rumor. As a result, the overall dynamics are given by
\begin{align*}
    P^k(u \mid A, u, G) = 1, \quad
    P^k(A \mid u, u', G) = 1, \quad 
    P^k(A \mid u, u', G) = 1, \quad \forall u, u' \in \mathcal U, G \in \boldsymbol{\mathcal{G}}^k \\
    P^k(A \mid \bar A, u, G) = \tau_I G(\mathbf 1_P) \left( \frac{2}{1 + \exp (-\frac k 2)} - 1 \right),  \quad \forall u \in \mathcal U, G \in \boldsymbol{\mathcal{G}}^k.
\end{align*}

Since the goal is to spread to neighbors that are unaware of the rumor, the rewards for propagation are given by positive rewards scaling with the number of unaware agents, and costs scaling with the number of aware agents. 
\begin{align*}
    r(x, u, G) = \left( c_P (G(\bar P) + G(P)) + r_P G(A) \right) \cdot \mathbf 1_P(x)
\end{align*}

We use $\tau_I = 0.3$, $T = 50$, $1 - \mu_0(S) = \mu_0(I) = 0.1$, $c_P = 0.8$ and $r_P = 0.5$.

\paragraph{Additional comparisons on real data.}
In Table~\ref{table:app}, we also show the resulting mean field deviations for low $k$-degree nodes in the periphery. Similarly to the results in the main text in Table~\ref{table:real_network_results}, we find that the mean field predicted by our GXMFG framework matches the empirical observations in the system on the real networks.

\begin{table}
    \centering
    \caption{Expected time-averaged total variation $\Delta \mu^k = \frac 1 {2T} \E \left[ \sum_t \lVert \hat \mu^k_t - \mu^k_t \rVert_1 \right] \in [0, 1]$ between the overall GXMFG MF prediction $\mu^k_t$ and the empirical MF $\hat \mu^k_t = \frac{1}{\sum_{i \colon \deg (v_i) = k} 1}\sum_{i \colon \deg (v_i) = k} \delta_{X^i_t}$, exemplarily for $k=2$ and $k=5$ ($\pm$ standard deviation, 5 trials).}
    \setlength\tabcolsep{4pt}
    \vspace{0.2cm}
    \begin{tabular}{ccccccc}
         \toprule
         \multirow{2}{*}[-0.4ex]{network} & \multicolumn{3}{c}{Expected total variation $\Delta \mu^2$ in $\%$} & \multicolumn{3}{c}{Expected total variation $\Delta \mu^5$ in $\%$}  \\
         \cmidrule(lr){2-4} \cmidrule(lr){5-7}
         & SIS ($\%$) &  SIR ($\%$) & RS ($\%$) &  SIS ($\%$) &  SIR ($\%$) & RS ($\%$) \\
         \toprule
         Prosper 
         & 0.80 $\pm$ 0.22 & 0.86 $\pm$ 0.43 & 0.76 $\pm$ 0.45 
         & 1.01 $\pm$ 0.04 & 1.21 $\pm$ 0.51 & 1.03 $\pm$ 0.74 \\
         \midrule
         Dogster 
         & 1.36 $\pm$ 0.25 & 1.73 $\pm$ 0.79 & 1.02 $\pm$ 0.53 
         & 2.29 $\pm$ 0.55 & 3.02 $\pm$ 0.89 & 1.67 $\pm$ 1.43 \\
         \midrule
         Pokec 
         & 0.95 $\pm$ 0.05 & 2.39 $\pm$ 0.03 & 1.00 $\pm$ 0.13 
         & 1.56 $\pm$ 0.09 & 3.76 $\pm$ 0.21 & 1.45 $\pm$ 0.13 \\
         \midrule
         Livemocha 
         & 1.16 $\pm$ 0.17 & 2.08 $\pm$ 0.90 & 0.92 $\pm$ 0.59 
         & 1.03 $\pm$ 0.13 & 1.99 $\pm$ 0.95 & 1.02 $\pm$ 0.67 \\
         \midrule
         Flickr 
         & 4.26 $\pm$ 0.11 & 7.47 $\pm$ 0.28 & 2.16 $\pm$ 0.10 
         & 3.37 $\pm$ 0.13 & 12.03 $\pm$ 0.47 & 3.75 $\pm$ 0.12 \\
         \midrule
         Brightkite 
         & 1.23 $\pm$ 0.16 & 8.17 $\pm$ 0.24 & 2.46 $\pm$ 0.34 
         & 2.17 $\pm$ 0.22 & 11.19 $\pm$ 0.96 & 2.83 $\pm$ 0.64 \\
         \midrule
         Facebook 
         & 1.32 $\pm$ 0.10 & 10.20 $\pm$ 0.67 & 3.56 $\pm$ 0.21 
         & 3.00 $\pm$ 0.27 & 12.71 $\pm$ 1.42 & 4.06 $\pm$ 0.72 \\
         \midrule
         Hyves 
         & 5.99 $\pm$ 0.35 & 6.17 $\pm$ 1.26 & 1.65 $\pm$ 0.56 
         & 4.44 $\pm$ 0.44 & 9.03 $\pm$ 1.47 & 2.44 $\pm$ 0.62 \\
        \bottomrule
    \end{tabular}
    \label{table:app}
\end{table}

\paragraph{Empirical comparison to the LPGMFG model}
In this paragraph, we compare our GXMFG approach to the LPGMFG model proposed by \citet{fabian2023learning}. Note that LPGMFGs include the more restricted class of GMFGs \citep{cui2021learning} which makes a separate comparison of our approach to GMFGs obsolete. The detailed results can be found in Table \ref{table:real_network_LPGMFG} where we repeat the GXMFG results from the main text for readability and easy comparison. Since the LPGMFG model does not allow for specific policies for agents with finite degree $k$, all agents follow the policy learned for agents with infinite degree. For implementation details on the LPGMFG model we refer to \citet{fabian2023learning}.

Table \ref{table:real_network_LPGMFG} shows that our GXMFG approach clearly outperforms LPGMFGs on all tasks and empirical networks. While the difference on some network-task combinations is relatively moderate, e.g. 4.28 (LPGMFG) vs. 2.78 (GXMFG) on Dogster SIS, the GXMFG performance on other networks and tasks, such as Flickr RS (15.80 vs. 2.24), SIS Brightkite (17.07 vs. 1.37), or SIR Hyves (39.94 vs. 10.06), is many times better then for the LPGMFG benchmark. These results strongly support the conceptual advantage of the hybrid graphex learning approach over existing work such as LPGMFGs. The GXMFGs ability to depict finite degree agents in the limiting model proves to be a crucial advantage over GMFGs and LPGMFGs that assume exclusively infinite degree agents. This is reflected by its superior performance on various tasks and real world networks as seen in Table \ref{table:real_network_LPGMFG}.

\begin{table}
    \centering
    \caption{Average expected total variation $\Delta \mu = \frac 1 {2T} \E \left[ \sum_t \lVert \hat \mu_t - \mu_t \rVert_1 \right] \in [0, 1]$ between the overall GXMFG MF prediction $\mu_t$ and the empirical MF $\hat \mu_t = \sum_i \delta_{X^i_t}$ ($\pm$ standard deviation, 5 trials).}
    \setlength\tabcolsep{3pt}
    \vspace{0.2cm}
    \begin{tabular}{ccccccc}
         \toprule
         \multirow{2}{*}[-0.4ex]{network} & \multicolumn{3}{c}{$\Delta \mu$ in $\%$ for LPGMFG} & \multicolumn{3}{c}{$\Delta \mu$ in $\%$ for GXMFG}  \\
         \cmidrule(lr){2-4} \cmidrule(lr){5-7}
         & SIS ($\%$) &  SIR ($\%$) & RS ($\%$) &  SIS ($\%$) &  SIR ($\%$) & RS ($\%$) \\
         \toprule
         Prosper & 2.66 $\pm$ 0.22 & 5.41 $\pm$ 0.35 & 2.80 $\pm$ 0.43 & 0.53 $\pm$ 0.21 & 2.06 $\pm$ 0.25 & 1.58 $\pm$ 0.31 \\
         \midrule
         Dogster & 4.28 $\pm$ 0.23 & 8.83 $\pm$ 1.16 & 4.25 $\pm$ 0.66 & 2.78 $\pm$ 0.44 & 4.93 $\pm$ 0.98 & 1.89 $\pm$ 0.66 \\
         \midrule
         Pokec & 6.26 $\pm$ 0.03 & 11.45 $\pm$ 0.10 & 5.67 $\pm$ 0.11 & 2.14 $\pm$ 0.05 & 4.93 $\pm$ 0.12 & 2.19 $\pm$ 0.08 \\
         \midrule
         Livemocha & 4.46 $\pm$ 0.10 & 8.56 $\pm$ 0.43 & 4.40 $\pm$ 0.83 & 2.57 $\pm$ 0.20 & 4.40 $\pm$ 1.08 & 2.51 $\pm$ 0.54 \\
         \midrule
         Flickr & 16.90 $\pm$ 0.03 & 35.49 $\pm$ 0.05  &  15.80 $\pm$ 0.10 & 3.57 $\pm$ 0.08 & 8.58 $\pm$ 0.24 & 2.24 $\pm$ 0.11 \\
         \midrule
         Brightkite & 17.07 $\pm$ 0.30 & 32.27 $\pm$ 0.37  & 14.69 $\pm$ 0.47 & 1.37 $\pm$ 0.11 & 10.92 $\pm$ 0.77 & 3.37 $\pm$ 0.25 \\
         \midrule
         Facebook & 16.22 $\pm$ 0.16 & 28.89 $\pm$ 0.23 & 13.03 $\pm$ 0.07 & 2.90 $\pm$ 0.09 & 13.57 $\pm$ 0.37 & 5.01 $\pm$ 0.27 \\
         \midrule
         Hyves & 21.94 $\pm$ 0.11 & 39.94 $\pm$ 0.50  & 17.46 $\pm$ 0.34 & 5.07 $\pm$ 0.24 & 10.06 $\pm$ 0.81 & 2.62 $\pm$ 0.44 \\
        \bottomrule
    \end{tabular}
    \label{table:real_network_LPGMFG}
\end{table}
\end{document}